\title{Descriptive complexity for neural networks via Boolean networks} 
\date{} 					
\author[ \hspace{-1ex}]{Veeti Ahvonen}
\author[ \hspace{-1ex}]{Damian Heiman}
\author[ \hspace{-1ex}]{Antti Kuusisto}
\affil[ \hspace{-1ex}]{Mathematics Research Centre, Tampere University, Finland}
\theoremstyle{plain}
\newtheorem{theorem}{Theorem}[section]
\newtheorem{lemma}[theorem]{Lemma}
\newtheorem{corollary}[theorem]{Corollary}
\theoremstyle{definition}
\newtheorem{remark}[theorem]{Remark}
\newtheorem{example}[theorem]{Example}
\newtheorem{definition}[theorem]{Definition}
\newcommand{\N}{\mathbb N}
\newcommand{\Z}{\mathbb Z}
\newcommand{\R}{\mathbb R}
\newcommand{\ba}{\mathbf{a}}
\newcommand{\bb}{\mathbf{b}}
\newcommand{\bc}{\mathbf{c}}
\newcommand{\be}{\mathbf{e}}
\newcommand{\bbf}{\mathbf{f}}
\newcommand{\bg}{\mathbf{g}}
\newcommand{\bi}{\mathbf{i}}
\newcommand{\bj}{\mathbf{j}}
\newcommand{\bm}{\mathbf{m}}
\newcommand{\bp}{\mathbf{p}}
\newcommand{\bs}{\mathbf{s}}
\newcommand{\bt}{\mathbf{t}}
\newcommand{\bx}{\mathbf{x}}
\newcommand{\by}{\mathbf{y}}
\newcommand{\bz}{\mathbf{z}}
\newcommand{\cA}{\mathcal{A}}
\newcommand{\cB}{\mathcal{B}}
\newcommand{\cC}{\mathcal{C}}
\newcommand{\cF}{\mathcal{F}}
\newcommand{\cI}{\mathcal{I}}
\newcommand{\cN}{\mathcal{N}}
\newcommand{\cP}{\mathcal{P}}
\newcommand{\cS}{\mathcal{S}}
\newcommand{\cT}{\mathcal{T}}
\newcommand{\cX}{\mathcal{X}}
\newcommand{\cZ}{\mathcal{Z}}
\newcommand{\fF}{\mathfrak{F}}
\newcommand{\fS}{\mathfrak{S}}
\newcommand{\fa}{\mathfrak{a}}
\newcommand{\fb}{\mathfrak{b}}
\newcommand{\fw}{\mathfrak{w}}
\newcommand*{\abs}[1]{\lvert#1\rvert}   
\newcommand{\msc}{\mathrm{MSC}}
\newcommand{\SC}{\mathrm{SC}}
\newcommand{\GMSC}{\mathrm{GMSC}}
\newcommand{\prop}{\mathrm{PROP}}
\newcommand{\var}{\mathrm{VAR}}
\newcommand{\GNN}{\mathrm{GNN}}
\newcommand{\ordo}{\mathcal{O}}
\newcommand{\BNL}{\mathrm{BNL}}
\newcommand{\ReLU}{\mathrm{ReLU}}
\date{}
\begin{document}

\maketitle

\begin{abstract}
    \noindent
    We investigate the expressive power of neural networks from the point of view of descriptive complexity. 
    We study neural networks that use floating-point numbers and piecewise polynomial activation functions from two perspectives: 1) the general scenario where neural networks run for an unlimited number of rounds and have unrestricted topologies, and 2) classical feedforward neural networks that have the topology of layered acyclic graphs and run for only a constant number of rounds. We characterize these neural networks via Boolean networks formalized via a recursive rule-based logic. In particular, we show that the sizes of the neural networks and the corresponding Boolean rule formulae are polynomially related. In fact, in the direction from Boolean rules to neural networks, the blow-up is only linear. Our translations result in a time delay, which is the number of rounds that it takes to simulate a single computation step. In the translation from neural networks to Boolean rules, the time delay of the resulting formula is polylogarithmic in the size of the neural network. In the converse translation, the time delay of the neural network is linear in the formula size. Ultimately, we obtain translations between neural networks, Boolean networks, the diamond-free fragment of modal substitution calculus, and a class of recursive Boolean circuits. Our translations offer a method, for almost any activation function F, of translating any neural network in our setting into an equivalent neural network that uses F at each node. This even includes linear activation functions, which is possible due to using floats rather than actual reals!
\end{abstract}

\section{Introduction}

This article investigates the expressive power of neural networks from the perspective of descriptive complexity, giving a logical characterization for general neural networks with unlimited topologies and unbounded running time. By unlimited topologies, we mean that any neural network built over a directed graph is allowed, enabling recurrence within computations.
We also obtain a logical characterization for classical feedforward neural networks.
The characterizations are based on \emph{Boolean networks} \cite{cheng2010linear, schwab2020concepts}. Boolean networks have a long history, originating from the work of Kauffman in the 1960s \cite{kauffman1969homeostasis}. Current applications include a wide variety of research relating to topics varying from biology and medicine to telecommunications and beyond, see, e.g., \cite{zhang2008network, zanin2011boolean, schwab2020concepts}.

Boolean networks are usually not defined via a strict logical syntax, but it is easy to give them one as follows. Consider a set $\cT = \{X_1,\dots , X_k\}$ of Boolean variables. A \emph{Boolean rule} over $\cT$ is an expression of the form $X_i \colonminus \varphi$ consisting of a \emph{head predicate} $X_i\in \cT$ and a \emph{body} $\varphi$, which is a Boolean formula over the syntax $\varphi \coloncolonequals \top \mid X_j \mid \neg \varphi \mid \varphi\wedge \varphi$, where $X_j\in \cT$. A \emph{Boolean program} over $\cT$ is then a finite set of Boolean rules over $\cT$, one rule for each $X_i \in \cT$. 
Given an input $f \colon \cT \rightarrow \{0,1\}$, and executing the rules in parallel, the program then produces a time series of $k$-bit strings in a natural way (see the preliminaries section for the details). 
It is also possible to extend Boolean networks as follows. An extended Boolean program over~$\cT$ is a Boolean program over some finite $\cS \supseteq \cT$ together with a \emph{base rule} 
$X_j(0) \colonminus b$ for each $X_j\in \cS \setminus \cT$, where $b\in \{\top, \bot\}$. The predicates in $\cT$ are called \emph{input predicates}, and the predicates in $\cS \setminus \cT$ are \emph{auxiliary predicates}.
Extended programs produce time series just like regular programs, 
but the initial values of auxiliary predicates are not part of the input, but are instead given via the base rules (see the preliminaries section). The logic used in this paper, \textbf{Boolean network logic} BNL, consists of extended Boolean programs.
Using extended programs is necessary, as ordinary ones would not suffice for the characterization.

As already mentioned, the neural network (NN) architecture that we consider is very general.
We allow unrestricted topologies for the underlying directed graphs, including cycles and even reflexive loops, thereby considering the recurrent setting.
The edges and nodes of the underlying graph are weighted, and each node has an activation function that is piecewise polynomial. 
Reals are modeled via floating-point numbers, and the running times are unlimited.
Each NN is associated with a set of input nodes and an initialization function $\pi$ which gives initial values to non-input nodes (called \emph{auxiliary nodes}).
In each round, an NN computes a floating-point value for each node as follows.
In round $0$, each input node receives a value as an external input, and each auxiliary node is given a value by $\pi$.
In every subsequent round, the nodes update their activation values via the standard forward propagation scheme (also known as perceptron update) applied repeatedly (see Section \ref{section: general neural networks} for the details).

We prove that each NN translates to a Boolean program that simulates the time series of the NN with each input, and vice versa, Boolean programs can---likewise---be simulated by NNs. As a central part of this result, we also establish that the size blow-ups are mild in both translations.
In more detail, let $\cF(p,q,\beta)$ denote a floating-point format with \emph{fraction precision} $p$, \emph{exponent precision} $q$ and \emph{base} $\beta$
(see Section \ref{fpa} for the details). 
The \emph{degree} of an NN is the maximum degree of the underlying graph and the \emph{order} (resp. \emph{piece-size}) of an NN is the maximum order (resp. maximum number of pieces) of the NN's piecewise polynomial activation functions.
The below theorem establishes the first direction of our translations, showing that only a polynomial blow-up (in all involved parameters) is required when translating NNs to Boolean programs.

\smallskip
\textbf{Theorem\ \ref{NN_to_BNL}.}\ 
\emph{Let $\cF(p, q, \beta)$ be a floating-point format, let $r = \max\{p,q\}$, and let $\cN$ be a general neural network for $\cF$ with $N$ nodes, $L$ edges, degree $\Delta$ and piece-size $P$. Let $\Omega$ be the order of $\cN$ (setting $\Omega := 2$ if the order is $1$ or $0$). We can construct an equivalent{} $\BNL$-program $\Lambda$ such that
    \begin{itemize}
        \item the size of $\Lambda$ is 
        $\ordo((N P \Omega + L) (r^4 + r^3\beta + r^{2}\beta^2 + r\beta^3))$
        \item there are 
        $\ordo((N P\Omega + L)(r^{2} \beta + r \beta^{2}))$ 
        head predicates in $\Lambda$, and
        \item the computation delay of $\Lambda$ is $\ordo(\log(\Omega)(\log(r) + \log(\beta)) + \log(\Delta))$.
    \end{itemize}}
\smallskip 

Here, and also in the following theorems, \emph{equivalence{}} between a Boolean program and a neural network (or between two neural networks) means that the time series they produce are essentially identical. They may use different symbols, and there may be a constant computation time delay between the recorded steps; see Section \ref{subsection: run sequence equivalence} for the details.

In the converse direction, from Boolean programs to NNs, the choice of the activation function in the target NNs is only very slightly restricted. For example, it suffices that all target NNs use $\ReLU$ in all their nodes.
However, to give a more general result, 
we provide sufficient conditions that activation functions must satisfy to enable our translation; these come in the form of \emph{Axiomatization~\ref{A}}, which covers a very comprehensive list of activation functions including, e.g., the rectified linear unit $\ReLU$, sigmoid function, hyperbolic tangent and various others. We say that such activation functions are \emph{regular}.
The result below covers the second direction of our translation, showing that Boolean programs translate---with only a very mild (linear) blow-up---to NNs. The activation function can be chosen freely, as long as it is regular.
To formulate the result, let the depth of a program refer to the maximum nesting depth of Boolean operations appearing in the Boolean rules of the program.

\smallskip

\noindent
\textbf{Theorem\ \ref{BNL_to_NN}.}\
\emph{For any regular activation function $\alpha$, a $\BNL$-program of size $s$ and depth $d$ translates to an equivalent general neural network with $\ordo(s)$ nodes, computation delay $\ordo(d)$ and $\alpha$ at each node.}

Typically, floating-point formats are fixed. This also fixes an upper bound for the number of possible activation functions, so it is reasonable to assume that also the piece-size and order of the activation functions are fixed.\footnote{There are, however, infinitely many syntactic representations of such functions; see Section \ref{section: float polynomials} for more details.} In any case, in typical real-life scenarios, function approximations based on polynomials improve fast when the number of polynomial terms is increased.
We obtain the following rather simple fixed-parameter linearity result from Theorems \ref{NN_to_BNL} and \ref{BNL_to_NN} (where the direction obtained from Theorem~\ref{BNL_to_NN} in fact remains unchanged).

\textbf{Theorem \ref{theorem:NN_to_BNL_fixed}.}\ \emph{Given a fixed floating-point format and fixed parameters for the activation functions, the following hold.
\begin{itemize}
    \item A general neural network with $N$ nodes,
    $L$ edges
    and degree $\Delta$ translates to an equivalent{} $\BNL$-program of size 
    $\ordo(N + L)$ 
    and computation delay $\ordo(\log(\Delta))$.
    \item For any regular activation function $\alpha$, a $\BNL$-program of size $s$ and depth $d$ translates to an equivalent general neural network with $\ordo(s)$ nodes, computation delay $\ordo(d)$ and $\alpha$ at each~node.
\end{itemize}}

We also consider classical feedforward neural networks (FFNNs) \cite{rumelhart1986learning}. Informally, FFNNs
have the topology of acyclic graphs, in which the nodes can be partitioned into sequential non-empty layers. Synchronous FFNNs are FFNNs that do not allow connections that skip layers.
We also study feedforward Boolean programs (FFBNL) which are defined analogously to FFNNs.
Similarly to Theorems \ref{NN_to_BNL} and \ref{BNL_to_NN}, we obtain a characterization for the feedforward scenario (see Theorems \ref{theorem:FFNN_to_FFBNL} and \ref{theorem:FFBNL_to_FFNN}).

Combining the above-described results, we establish that one can replace activation functions in NNs by any regular activation function. The result, given below, covers both the general and feedforward scenario.

\noindent
\textbf{Theorem\ \ref{theorem:NN_to_NN}.}\ 
\emph{Let $\cN$ be a general neural network. For any regular activation function $\alpha$, we can translate $\cN$ to an equivalent general neural network $\cN'$ which uses $\alpha$ (at each node) and:
\begin{itemize}
    \item the number of nodes in $\cN'$ is polynomial in the size parameters of $\cN$,
    \item the computation delay of $\cN'$ is polylogarithmic in the size parameters of $\cN$,
    \item if $\cN$ is feedforward, then $\cN'$ is feedforward (and even synchronous, i.e., no layer-skipping).
\end{itemize}
}

In particular, Theorem\ \ref{theorem:NN_to_NN} holds when the regular activation function is the 
\emph{identity function} (cf. Case~$2$ of 
Axiomatization \ref{A}). This is striking, as this appears to translate NNs with non-linear activations to linear functions, but the explanation lies in the properties of floating-point arithmetic, where significant non-linear phenomena arise easily.
One of the primary purposes of activations is to non-linearize NNs, but our result shows that they nevertheless can be simulated by using floats in a suitable way.

It is worth noting that in our setting, while we allow for general topologies and unlimited running times, our systems have---strictly speaking---finite input spaces because our NNs use floating-point numbers. Note that, trivially, a single Boolean function suffices to model any NN with a finite input space when limiting to \emph{single outputs only} and 
\emph{not caring about size and time blow-ups} in translations.
However, our results stress the close relations between the size and time resources of general NNs and $\BNL$-programs, and as outputs, we consider time series rather than a single-output framework. Therefore, our equivalence notion is based on a similarity between time-series, which makes the related study more involved than simply considering the case where both inputs and outputs are finite strings.

It turns out that $\BNL$ is also closely related to two other frameworks: 
\emph{substitution calculus} (SC) which is 
the diamond-free fragment of \emph{modal substitution calculus} MSC (cf. \cite{Kuusisto13} where MSC was introduced and used to characterize message passing automata), and \emph{self-feeding circuits} (SFCs), which are formally introduced here and related to sequential circuits \cite{CavanaghJoseph2006SLAa}. 
Informally, SC is otherwise similar to $\BNL$, except that \emph{every} head predicate in an $\SC$-program has a base rule, the bodies of the base rules are allowed to be any formulae of propositional logic, Boolean rules may contain proposition symbols in addition to Boolean variables, and the inputs for $\SC$-programs are simply assignments of truth values to proposition symbols (see Section \ref{MSC} for the details). Due to the technical choices in its definition, the logic $\SC$ does not need separate notions of auxiliary and input predicates required in the transition from Boolean programs to extended Boolean programs, i.e., BNL.
An $\SC$-program produces a time series similarly to $\BNL$.

Self-feeding circuits are Boolean circuits, where the number of input and output gates match. Moreover, each SFC is associated with an initializing function that fixes the input for some subset $A$ of the input gates; the gates in $A$ are \emph{auxiliary positions}, while the input gates not in $A$ are \emph{input positions}. 
An SFC produces a time series in a natural way as follows. 
In round zero, the input positions receive their initial values as an external input, whereas the initialization function fixes the initial values for auxiliary positions.
In each subsequent round, the SFC feeds its output string from the previous round to itself as input.
See Section \ref{section:self-feeding_circuits} for the technical details on SFCs.
We observe that $\BNL$, $\SC$ and SFCs translate to each other with only a linear size increase in every direction.
As a corollary, we obtain a characterization for neural networks via SC and SFCs.
For a brief overview comparing the studied computation models--$\BNL$, FFBNL, $\SC$, SFCs, NNs and FFNNs--see Table \ref{table: all the stuff}.
\begin{table}[h]
\caption{A summary of the finer differences of the studied computation architectures.
Here ``auxiliary entry receiver'' refers to
a method of assigning fixed initial values to variables, nodes and gates.
}
\resizebox{\columnwidth}{!}{%
\begin{tabular}{|c|c|c|c|c|c|}
    \hline
    &\textbf{Recursiveness} &\textbf{Domain base} &\textbf{Input receiver} & \textbf{Auxiliary entry receiver} &\textbf{Update scheme}  \\
    \hline
    $\BNL$ &Recurrent &Bits &Input predicates &Auxiliary predicates &Boolean rules \\
    \hline
    FF$\BNL$ &Feedforward &Bits &Input predicates &None &Boolean rules \\
    \hline
    $\SC$ &Recurrent &Bits &Proposition symbols &Inherent &Boolean rules \\
    \hline
    SFCs &Recurrent &Bits &Input positions &Auxiliary positions &Boolean circuit \\
    \hline
    NNs &Recurrent &Floats &Input nodes &Auxiliary nodes &Perceptron update \\
    \hline
    FFNNs &Feedforward &Floats &Input nodes &None &Perceptron update \\
    \hline
\end{tabular}%
}
\label{table: all the stuff}
\end{table}

\textbf{Related work.}
The present paper is a revised and extended version of the conference article~\cite{Ahvonen_CSL}. We have generalized the translation from Boolean programs to neural networks such that it now applies for a much wider range of activation functions than the two---the rectified linear unit ReLU and Heaviside---studied in \cite{Ahvonen_CSL}. We have added Section~\ref{sec: fnn}, where we provide a new characterization for \emph{feedforward} neural networks via feedforward Boolean programs. In Section \ref{sec: corollaries}, we give further new results on translating general NNs to NNs having particular regular activation functions.

We will next discuss characterizations and translations between neural networks, logics and Boolean circuits in the notable articles \cite{mcculloch1943logical, shawe1992classes, BEIU19961155, choi2017compiling, bertossi2023compiling, watta1997recurrent} where---unlike in our work---the neural networks had binary inputs and, when translating from logic or circuits to neural networks, they only had threshold activation functions. The majority of these characterizations considered only feedforward neural networks, and the sizes of the translations were not thoroughly analyzed. The last three of these also only contained a translation in one direction.

An early logical characterization of (feedforward) neural networks was given by McCulloch and Pitts in \cite{mcculloch1943logical} which characterized a class of Boolean-valued neural networks (neural networks with activation values $0$ or~$1$, not to be confused with Boolean networks) with threshold activation functions via \emph{temporal propositional expressions} (TPEs). They also extended this characterization to include recurrent neural networks with the same restrictions via an extension of TPEs. 
In \cite{shawe1992classes}, Shawe-Taylor et al. gave a characterization for neural networks via Boolean circuits. 
The networks had binary strings as inputs and single-bit outputs.
They also had bounded fan-in, bounded depth, monotonic activation functions and a single output node with a threshold activation function (analogous to threshold gates in Boolean circuits).
The authors of that paper defined classes $\mathrm{NN}^k$ of such neural networks and placed them into the existing hierarchy of circuit complexity classes, showing that $\mathrm{NC}^k \subseteq \mathrm{NN}^k \subseteq \mathrm{AC}^k$. 
Recall here that $\mathrm{AC}^k$ contains the families of Boolean circuits of depth $\ordo(\log^k n)$ and polynomial size with unbounded fan-in, while $\mathrm{NC}^k$ 
restricts $\mathrm{AC}^k$ to circuits with bounded fan-in.
They also obtained translations in the scenario where the neural networks only used threshold activation functions, in which case the accuracy of the neural networks was allowed to be unlimited, i.e., the networks used exact reals instead of finite representations. In~\cite{BEIU19961155}, Beiu and Taylor extended the results of \cite{shawe1992classes} to cover less restrictive classes of neural networks, showing a wide range of inclusions with other established circuit complexity classes; for instance, it was shown that $\mathrm{NC}^k \subseteq \mathrm{NN}^k \subseteq \mathrm{NC}^{k+1}$.

Choi et al. in \cite{choi2017compiling} and Bertossi and León in \cite{bertossi2023compiling} gave translations (one direction only) from Boolean-valued neural networks with threshold activation functions into ``explainable Boolean circuits''.
In \cite{watta1997recurrent}, Watta et al. 
showed that recurrent neural networks with threshold activation functions and binary inputs can be translated into Boolean networks. They used this to study the stability of such networks, i.e., which of them converge to a fixed-point.

Concerning further related work, the descriptive complexity of graph neural networks (or GNNs) has been studied in the following papers.
Barceló et al. in \cite{barcelo} established a match between constant-time GNNs and graded modal logic, and also proved that each property expressible in the two-variable fragment of first-order logic with counting is expressible as a constant-time GNN with global readout. 
Moreover, Grohe studied GNNs in the papers \cite{grohe} and \cite{grohe2023descriptive}. In \cite{grohe}, a characterization for GNNs was given via the Weisfeiler-Leman algorithm \cite{weisfeiler1968reduction}.
In \cite{grohe2023descriptive},
constant-time GNNs with \emph{dyadic rationals} and \emph{rational piecewise linear approximable} activation functions were 
characterized with the guarded fragment of first-order logic with counting and built-in relations. Furthermore, 
such GNNs augmented with random initialization were characterized with the circuit complexity class $\mathsf{TC}^0$.
Ahvonen et al. in \cite{gnn_neurips} used graded modal substitution calculus $\GMSC$ (an extension of the logics $\SC$ and $\BNL$ used here) to characterize recurrent $\GNN$s with floating-point numbers (see the preliminaries for the definition of $\GMSC$).

Neural networks are special kinds of distributed systems, and the descriptive complexity of distributed computing was initiated in Hella et al. \cite{hella2012weak}, Kuusisto \cite{Kuusisto13} and Hella et al. \cite{weak_models} by relating distributed complexity classes to modal logics. While \cite{hella2012weak} and \cite{weak_models} gave characterizations to constant-time distributed complexity classes via modal logics, \cite{Kuusisto13} lifted the work to general non-constant-time algorithms by characterizing finite distributed message passing automata via the modal substitution calculus MSC which---as discussed above---is an extension of the logic $\SC$ studied in this article. This was recently lifted to circuit-based networks with identifiers in Ahvonen et al. \cite{dist_circ_mfcs, ahvonen_journal}, also utilizing modal substitution calculus. The logic MSC has also been linked to a range of other logics. 
For example, MSC is contained in partial fixed-point logic with choice \cite{DBLP:conf/csl/Richerby04}, and in \cite{Kuusisto13} MSC was shown to contain the $\mu$-fragment of the modal $\mu$-calculus.
Building on the results in \cite{Kuusisto13}, Reiter showed in \cite{reiter} that this fragment of the modal $\mu$-calculus captures finite message passing automata in the asynchronous setting. Finally, the logics BNL and SC used in this article as well as the logic $\msc$ are all rule-based systems. Rule-based logics are used widely in various applications, involving systems such as Datalog, answer-set programming (ASP) formalisms, and many others.

\section{Preliminaries}

First, we introduce some basic concepts.
For any set $S$, we let $\wp(S)$ denote the power set of $S$ and we let $\abs{S}$ denote the size (or cardinality) of $S$. 
We let $\N$ denote the set of non-negative integers, $\Z$ the set of integers, $\Z_+$ the set of positive integers and $\R$ the set of real numbers.
For every $n \in \Z_+$, we let $[n] \colonequals \{1, \ldots, n\}$, and for every $n \in \N$, we let $[0;n] \colonequals \{0, \ldots, n\}$.
We let bold lower-case letters $\ba, \bb, \bc, \dots$ denote strings. The letters of a string are written directly next to each other, i.e. $abc$, or with dots in-between, i.e. $a \cdot b \cdot c$, or a mix of both, i.e. $abc \cdot def$. Omitted segments of strings are represented with three dots, i.e. $abcd \cdots wxyz$. If $\bs = s_1 \cdots s_{k}$ is a string of length $k$, then for any $j \in [k]$, we let $\bs(j)$ denote the letter $s_j$. The alphabet for the strings will depend on the context. 

We let $\var = \{\, V_i \mid i \in \N\,\}$ denote the (countably infinite) set of all \textbf{schema variables}. Mostly, we will use meta variables, $X$, $Y$, $Z$ and so on, to denote symbols in $\var$. 
We assume a linear order $<^{\var}$ over the set $\var$. Moreover, for any set $\cT \subseteq \var$, a linear order $<^{\cT}$ over $\cT$ is induced by $<^{\var}$.
We let $\prop = \{\,p_{i} \mid i \in \N\,\}$ denote the (countably infinite) set of \textbf{proposition symbols} that is associated with the linear order $<^{\prop}$, inducing a linear order $<^{P}$ over any subset $P \subseteq \prop$. We denote finite subsets of proposition symbols by $\Pi \subseteq \prop$. When we talk about \textbf{rounds} in any context, we refer to non-negative integers that are interpreted as discrete steps of a computation.

\subsection{Run sequences}\label{subsec: run sequences}

Next, we consider \emph{discrete time series} of strings over any alphabet $\Sigma$, i.e., infinite string sequences $S = (\bs_i)_{i \in \N}$. These sequences will later be used to describe the operation of logic programs and neural networks in the sense that both logic programs and neural networks generate infinite string sequences. 
To separate important strings of a sequence from less important ones, we need to define when a time series produces an output. Importantly, we allow an arbitrary number of outputs for any time series and informally an output in round $i \in \N$ is just a substring of $\bs_i$. 
We will define two separate general output conditions for time series. First, we give an informal description of the two output conditions. 
\begin{enumerate}
    \item 
    In the first approach, a signal for when to output is given internally as follows. 
    The time series $S$ is associated with a set of distinguished substrings over $\Sigma$, and $S$ outputs in round $i$ if $\bs_i$ contains a distinguished substring.
    \item 
    In the second approach, the output is given externally as follows. The output rounds are fixed, i.e., if $i$ is an output round then $S$ produces an output in round $i$. 
\end{enumerate}
We study the two approaches for the sake of generality. For instance, it is natural to indicate output conditions within a program if it is part of the program's design. Retroactively, it might be more natural to augment a program to draw attention to rounds the original design does not account for, and a different mechanism could be used to compute the output rounds, e.g., a Turing machine.

Now, we formally define the output conditions in the first approach.
Let $\Sigma$ be a finite alphabet and $k \in \Z_+$. An infinite sequence $S = (\bs_j)_{j\in \mathbb{N}}$ of $k$-length strings $\bs_j \in \Sigma^k$, associated with sets $A \subseteq [k]$ of \textbf{attention} positions, $P \subseteq [k]$ of \textbf{print} positions and $\cS \subseteq \Sigma^{\abs{A}}$ of \textbf{attention strings}, is called the \textbf{run sequence of $S$ w.r.t. $(A, P, \cS)$}. 
Thus, formally a run sequence is a tuple $(S, A, P, \cS)$.
We may omit $(A, P, \cS)$ when $A$, $P$ and $\cS$ are clear from the context.
In the case where $\Sigma = \{0,1\}$, we may refer to attention positions as \textbf{attention bits}, and print positions as \textbf{print bits}.
The sets $A$ and $P$ induce corresponding sequences of substrings of the strings in $S$.
More formally, $(\ba_j)_{j\in \mathbb{N}}$ records the substrings with positions in $A$, and $(\bp_j)_{j\in \mathbb{N}}$ records the substrings with positions in $P$.

If $\ba_n \in \cS$ (for some $n \in \N$), then we say that $S$ \textbf{outputs} $\bp_n$ in round $n$ and that $n$ is an \textbf{output round}.
More precisely, $S$ \textbf{outputs in round $n$ with respect to $(A,P,\cS)$}, $\bp_n$ is the \textbf{output of $S$ in round $n$ with respect to $(A,P,\cS)$} and $n$ is an \textbf{output round of $S$ w.r.t. $(A, P, \cS)$}. 
Let $O \subseteq \N$ be the set of output rounds of $S$ w.r.t. $(A, P,\cS)$; it induces a subsequence $(\bs_{j})_{j \in O}$ of $S$.
We call the sequence $(\bp_j)_{j \in O}$ the \textbf{output sequence} of $S$ (w.r.t. $(A, P,\cS)$). 
In the case where $\Sigma = \{0,1\}$ and $\cS = \{0,1\}^{\abs{A}} \setminus \{0\}^{\abs{A}}$, we may simply write $(A, P)$ in place of $(A, P,\cS)$. If $m \in O$ is the $i$th number in $O$ w.r.t. the natural ordering of integers, then we simply call $\bp_m$ the $i$th output of $S$ and $m$ the $i$th output round of $S$.

In the second approach, output rounds are fixed by a set $O \subseteq \N$, and the attention positions and attention strings are excluded. Thus, in this case a run sequence is a tuple $(S, O, P)$. We say that $S$ \textbf{outputs} in rounds $O$. Outputs and output sequences w.r.t. $(O,P)$ are defined analogously to the first approach.

\subsection{Equivalences between run sequences}\label{subsection: run sequence equivalence}

Next, we define highly general notions of equivalence between run sequences. Later these definitions will be applied to our logics and neural networks.

First, we define similarity between two strings w.r.t. 
an embedding from strings to strings.
Let $\Sigma$ and $\Gamma$ be two alphabets and for each alphabet $A$ let $A^* \colonequals \bigcup_{n \in \Z_{+}} A^{n}$. 
Let 
$\sigma \colon \Sigma^* \to \Gamma^*$ 
be an embedding, i.e., an injective function. 
We call 
$\sigma$
a \textbf{similarity} from $\Sigma$ to $\Gamma$, and if 
$\sigma(\bb) = \bc$,
we say that $\bb$ and $\bc$ are \textbf{similar}.
If $\Sigma = \Gamma$, we may call 
$\sigma$
a similarity over $\Sigma$, and then if 
$\sigma$
is the identity function, we call 
$\sigma$
the \textbf{canonical similarity} over $\Sigma$. 
Instead of using embeddings one could define similarities by using equivalence relations between the strings.

Now, we shall define equivalence between run sequences.
We let $B= (\bb_j)_{j \in \N}$ be an infinite sequence of $k$-length strings over $\Sigma$ and $C = (\bc_j)_{j \in \N}$ an infinite sequence of $\ell$-length strings over $\Gamma$. Consider a run sequence $\cB$ of $B$ and a run sequence $\cC$ of $C$. 
Let $\sigma$ be a similarity from $\Sigma$ to $\Gamma$.
\begin{itemize}
    \item We say that $C$ is \textbf{similar} to $B$ (w.r.t. $\sigma$) if for each $j \in \N$, $\bb_j$ and $\bc_j$ are similar w.r.t. $\sigma$. 
    This definition extends for finite sequences of strings in the natural way.
    \item We say that $\cC$ is \textbf{equivalent{} to} $\cB$ (w.r.t. $\sigma$) if the output sequence of $C$ is similar to the output sequence of $B$ (w.r.t. $\sigma$).
    \item We say that $\cC$ is \textbf{strongly equivalent to} $\cB$ (w.r.t. $\sigma$) if $\cC$ is equivalent{} to $\cB$ (w.r.t. $\sigma$), the sequence $C$ is similar to $B$ (w.r.t. $\sigma$), and the output rounds of $B$ and $C$ are the same.
\end{itemize}
We may omit $\sigma$ if it is the canonical similarity over some alphabet. Furthermore, in the statements of later sections, our equivalences are strong in the following sense: we assume that if $\cC$ is equivalent{} or strongly equivalent to $\cB$, then both include attention positions and attention strings, or both use fixed output rounds instead.

Note that given a similarity $\sigma$ from $\Sigma$ to $\Gamma$ and another similarity $\sigma'$ from $\Gamma$ to $\Theta$, the composition $\sigma''$ of $\sigma$ and $\sigma'$ is a similarity from $\Sigma$ to $\Theta$. Thus, if we have run sequences $\cA$, $\cB$ and $\cC$ such that $\cB$ is equivalent{} to $\cA$ w.r.t. $\sigma$ and $\cC$ is equivalent{} to $\cB$ w.r.t. $\sigma'$, then $\cC$ is equivalent{} to $\cA$ w.r.t. $\sigma''$. Moreover, if both equivalences are strong, then $\cC$ is strongly equivalent to $\cA$.

Also, note that equivalence{} places no limitations on the output rounds of the sequences. Thus, 
we shall define a concept of time delay between two equivalent{} run sequences which tells that the output rounds of one run sequence can be computed with a linear function from the output rounds of the other run sequence. 
Assume that the run sequence $\cC$ is equivalent{} to the run sequence $\cB$ (w.r.t. $\sigma$). Let $b_1, b_2, \ldots$ and $c_1, c_2, \ldots$ enumerate the output rounds of $\cB$ and $\cC$ respectively in ascending order. Moreover, assume that the cardinalities of the sets of output rounds are the same and $b_n \leq c_n$ for every $n \in \N$. If $T \in \Z_+$, $t \in \N$ satisfy the equation $T \cdot b_{n} + t = c_{n}$ for every $n \in \N$, then we say that the \textbf{computation delay} of the run sequence $\cC$ (w.r.t. $\cB$) is $T$ with $t$ \textbf{precomputation rounds}. When we do not mention computation delay, the computation delay is $1$, and when we do not mention the number of precomputation rounds, there are $0$ precomputation rounds.

\begin{remark}\label{remark: outputs and attention}
We make the following note about the case where the sets of output rounds are fixed externally: if a run sequence $\cB$ with the set $O$ of output rounds has an equivalent{} run sequence $\cC$ where the computation delay of $\cC$ (w.r.t. $\cB$) is $T$ with $t$ precomputation rounds, then the set $O'$ of output rounds of $\cC$ can be constructed from the set $O$ by $O' \colonequals \{\, Tn + t \mid n \in O \,\}$.
\end{remark}

\subsection{Modal substitution calculus MSC and its variants}\label{MSC}

In this section, we introduce the logics we use for our characterizations. We start by defining graded modal substitution calculus $\GMSC$, which is a generalization of modal substitution calculus $\msc$ introduced
in \cite{Kuusisto13}. We do this in order to connect the paper to surrounding work, as $\GMSC$ was recently linked to graph neural networks in \cite{gnn_neurips}, and $\msc$ was also recently connected to message passing circuits in \cite{dist_circ_mfcs}. Then we introduce the logics relevant for this particular paper: the diamond-free fragment of $\GMSC$ called substitution calculus $\SC$, as well as an equivalent logic for Boolean networks called Boolean network logic $\BNL$.

Let $\Pi \subseteq \prop$ be a finite set of proposition symbols and $\cT \subseteq \var$. A \textbf{base rule} (over $(\Pi, \cT)$) is a string of the form $V_i(0) \colonminus \varphi$, where $V_i \in \cT$ and $\varphi$ is a formula of \textbf{graded modal logic} over $\Pi$ defined over the language 
\[
\varphi \coloncolonequals \top \mid p_i \mid \neg \varphi \mid \varphi \land \varphi \mid \Diamond_{\geq k} \varphi
\]
where $p_i \in \Pi$ and $k \in \N$ (see e.g. \cite{blackburn2001modal} for details about graded modal logic). An \textbf{induction rule} (over $(\Pi, \cT)$) is a string of the form $V_{i} \colonminus \psi$ where $\psi$ is a \textbf{$(\Pi, \cT)$-schema of graded modal substitution calculus} (or $\GMSC$) defined over the language 
\[
\psi \coloncolonequals \top \mid p_{i} \mid V_i \mid \neg \psi \mid \psi \land \psi \mid \Diamond_{\geq k} \psi
\]
where $p_{i} \in \Pi$, $k \in \N$ and $V_i \in \cT$. We also use symbols $\bot, \lor, \rightarrow$ and $\leftrightarrow$ as shorthand in the usual way. 
In a base rule $V_i(0) \colonminus \varphi$ or an induction rule $V_i \colonminus \psi$, the schema variable $V_i$ is the \textbf{head predicate} and $\varphi$ or $\psi$ is the \textbf{body} of the rule.

Let $\cT = \{X_1, \ldots, X_n\}$ be a finite nonempty set of $n \in \Z_{+}$ distinct schema variables. A $(\Pi, \cT)$-\textbf{program} $\Lambda$ of $\GMSC$ is a list of base and induction rules over $(\Pi, \cT)$:
\[
\begin{aligned}
    &X_{1}(0) \colonminus \varphi_{1}\qquad\qquad & &X_{1} \colonminus \psi_{1}, \\
    &\vdots & &\vdots \\
    &X_{n}(0) \colonminus \varphi_{n} & &X_{n} \colonminus \psi_{n},
\end{aligned}
\]
where each schema variable in $\cT$ has precisely one base rule and one induction rule. 
\textbf{Modal substitution calculus} $\msc$ is obtained from $\GMSC$ by only allowing the use of diamonds $\Diamond_{\geq 1}$ instead of $\Diamond_{\geq k}$ for any $k \in \N$.
The diamond-free fragment of $\GMSC$ and $\msc$ called \textbf{substitution calculus} $\SC$ simply restricts the base and induction rules, not allowing any diamonds $\Diamond_{\geq k}$. Note that in $\SC$ the bodies of base rules are thereby formulae of propositional logic.

Moreover, the programs of $\GMSC$ (and also $\msc$ and $\SC$) are associated with a set $\cP \subseteq \cT$ of \textbf{print predicates} and optionally a set $\cA \subseteq \cT$ of \textbf{attention predicates} that determine the output rounds of the program. 
If there are no attention predicates, the output rounds can be given externally with an \textbf{attention function} $a \colon \{0,1\}^k \to \wp(\N)$, where $k$ is the number of distinct proposition symbols that appear in the program. 
Informally, the attention predicates and the attention function are analogous to the two output conditions discussed for run sequences. We will later discuss how either can be used to determine a set of output rounds for the program. We use attention predicates by default, and only discuss the attention function when specified.
More formally, a program $\Lambda$ with an attention function $a$ and no attention predicates is a pair $(\Lambda, a)$, but we may simply refer to such a pair as a program $\Lambda$. However, whenever necessary and important for the context, we will tell whether $\Lambda$ is associated with an attention function or not.

The semantics of a $(\Pi, \cT)$-program $\Lambda$ of $\GMSC$ is defined over Kripke-models. A \textbf{Kripke-model} over $\Pi$ (or simply \textbf{$\Pi$-model}) is a tuple $M = (W, R, V)$, where $W$ is a set of \textbf{nodes} (sometimes referred to as ``worlds'' in literature), $R \subseteq W \times W$ is an \textbf{accessibility relation} and $V \colon \Pi \to \wp(W)$ is a \textbf{valuation}. If $w \in W$, we call $(M,w)$ a \textbf{pointed} Kripke model (over $\Pi$).
The set of \textbf{out-neighbours} of a node $v \in W$ is $\{\, u \in W \mid (v, u) \in R\,\}$.
We first have to define semantics for $(\Pi, \cT)$-schemata.
The \textbf{truth} of a $(\Pi, \cT)$-schema $\psi$ in round $n \in \N$ w.r.t. $\Lambda$ (written $M, w \models \psi^{n}$) is defined as follows:
\begin{itemize}
    \item $M, w \models \top^{n}$ (i.e. $\top$ is always true regardless of $M$, $w$ or $n$). 
    \item $M, w \models p_{i}^{n}$ iff $w \in V(p_{i})$
    where $p_i \in \Pi$ (i.e. the truth of $p_{i}$ does not depend on $n$).
    \item $M, w \models (\neg \theta)^{n}$ iff $M, w \not \models \theta^{n}$.
    \item $M, w \models (\chi \land \theta)^{n}$ iff $M, w \models \chi^{n}$ and $M, w \models \theta^{n}$.
    \item $M, w \models (\Diamond_{\geq k} \psi)^{n}$ iff there are at least $k$ distinct out-neighbours $v_1, \ldots, v_k$ of the node $w$ such that $M, v_i \models \psi^{n}$ for every $i \in [k]$.
    \item We define the truth of $X_i$ depending on $n$ as follows.
    \begin{itemize}
        \item If $n = 0$, then $M, w \models X_{i}^{0}$ iff $M, w \models \varphi_{i}^{0}$ where $\varphi_{i}$ is the body of the base rule of $X_{i}$ in $\Lambda$. 
        \item Assume we have defined the truth of all $(\Pi, \cT)$-schemata in round $n$. Then we define $M, w \models X_{i}^{n+1}$ iff $M, w \models \psi_{i}^{n}$ where $\psi_{i}$ is the body of the induction rule of $X_{i}$.
    \end{itemize}
\end{itemize}
Note that if $\psi$ is a formula of graded modal logic then for all $n \in \N$ we have that $M,w \models \psi^n$ iff $M, w \models \psi^0$. Now, we say that $(M, w)$ is \textbf{accepted} by $\Lambda$ if $M, w \models X^n$ for some $n \in \N$ and some attention predicate $X$. A Kripke model is \textbf{suitable} for $\Lambda$ if it is a $\Pi$-model.

If $\Pi' \subseteq \Pi$ is the set of proposition symbols that appear in a program, the linear order $<^{\Pi'}$ and the model $M$ induce a binary string $\bi \in \{0, 1\}^{\abs{\Pi'}}$ for each node $w \in M$ that serves as input, i.e. the $j$th bit of $\bi$ is $1$ iff $w$ is in the valuation of the $j$th proposition symbol in $\Pi'$. 

Since formulae of propositional logic and programs of $\SC$ contain no diamonds, their semantics are essentially defined over a Kripke model where $\abs{W} = 1$ and $R = \emptyset$. Such a model corresponds to a valuation $\Pi \to \{0, 1\}$ assigning a truth value to each proposition symbol in $\Pi$. Informally, $\GMSC$ can be seen as a separate $\SC$-program being run in each node of a Kripke model, where the programs are able to communicate with each other by passing messages using the diamond $\Diamond_{\geq k}$.

\begin{example}
    The centre-point property is the node property (i.e. a class of pointed Kripke models) stating that there exists an $n \in \N$ such that each directed path starting from the node leads to a node with no successors in exactly $n$ steps. Consider the program $Y (0) \colonminus \Box \bot$, $Y \colonminus \Diamond Y \land \Box Y$. It is clear that the program accepts precisely the pointed models which have the centre-point property.
\end{example}

\subsection{Boolean network logic}

We then define Boolean network logic (or $\BNL$) which we will later show to be equivalent to the fragment $\SC$.
Boolean network logic gets its name from Boolean networks, which are discrete dynamical systems commonly used in various fields, e.g., biology, telecommunications and various others. For example, they are used to describe genetic regulatory networks (e.g. \cite{kauffman1969homeostasis}). 
Informally, a (classical) Boolean network consists of a finite nonempty set $\cX$ of Boolean variables and each variable $x$ is associated with a Boolean function $f_x$ which takes as input the values of the variables in $\cX$. 
The Boolean values of all variables in $\cX$ are updated in discrete steps starting with a given initial value. 
In each subsequent round, 
the value of $x$ is the output of $f_x$ applied to the values of the variables in the previous round.
It is possible to extend Boolean networks by allowing the initial values of some variables to be fixed.
There is no general syntax for extended Boolean networks, but $\BNL$ will give us a suitable one.

Let $\cT \subseteq \var$. A \textbf{$\cT$-schema of Boolean network logic} (or $\BNL$) is defined over the language $\psi \coloncolonequals \top \mid V_i \mid \neg \psi \mid \psi \land \psi$, where $V_i \in \cT$ (in contrast to $\SC$, we do not include proposition symbols). 
Assume now that $\cT$ is finite and nonempty. 
A \textbf{$\cT$-program of $\BNL$} is defined analogously to a $(\Pi, \cT)$-program of $\SC$, with the following three differences:
\begin{enumerate}
    \item The base rules of $\BNL$ are either of the form $X(0) \colonminus \top$ or $X(0) \colonminus \bot$. 
    \item The bodies of the induction rules of $\BNL$ are $\cT$-schemata of $\BNL$ (see above). 
    \item Each schema variable in a $\BNL$-program has exactly one induction rule and either one or zero base rules. In a $\BNL$-program, we let $\cI$ denote the set of predicates that do not have a base rule, which we call \textbf{input predicates}.
\end{enumerate}
For example, consider a $\BNL$-program with the base rule $X(0) \colonminus \top$ and the induction rules $X \colonminus \neg X$ and $Y \colonminus Y \land X$. Here $Y$ is the sole input predicate and $X$ acts as an auxiliary predicate.
Like programs of $\SC$, each $\BNL$-program also includes print predicates and optionally attention predicates that determine the output rounds of the program. If there are no attention predicates, the output rounds can be given externally with an attention function $a \colon \{0,1\}^k \to \wp(\N)$, where $k = \abs{\cI}$.
Analogously to $\SC$-programs, whenever necessary for the context, we will tell whether a $\BNL$-program is associated with an attention function or not.

The semantics of a program of $\BNL$ is defined over a valuation $M \colon \cI \to \{0,1\}$, which we simply call $\cI$-\textbf{models} (or models or inputs if $\cI$ is clear from the context). 
A model is \textbf{suitable} for a $\BNL$-program $\Lambda$ if it is a valuation over the input predicates of $\Lambda$. 
Assume that $I_{1}, \dots, I_{\abs{\cI}}$ enumerates the set $\cI$ in the order $<^{\var}$.
Analogously to a model of $\SC$, any model $M$ for $\BNL$ 
induces a binary string $\bi_{M} \in \{0,1\}^{\abs{\cI}}$ that serves as input where $\bi_{M}(j) = M(I_{j})$. Note that vice versa each string $\bi \in \{0,1\}^{\abs{\cI}}$ induces a model defined by 
$M_\bi \colon \cI \to \{0,1\}$ such that $I_{j} \mapsto \bi(j)$.
The truth of a $\cT$-schema is defined analogously to $\SC$ except for the truth value of head predicates in round $0$. If $X \in \cI$, we define that $M \models X^0$ iff 
$M(X) = 1$.
If $X \notin \cI$, then $M \models X^0$ iff the body of the base rule of $X$ is $\top$. 
Later in Example \ref{example: BNL with and without auxiliary}, after we have defined how a $\BNL$-program produces a run sequence, we show how powerful auxiliary predicates $X \notin \cI$ are.

\subsection{Definitions for time series of BNL and SC}\label{subsection: notions_SC_BNL}

Next, we introduce useful concepts relating to Boolean network logic $\BNL$ and substitution calculus $\SC$ that later will be used to establish notions of equivalence among programs, circuits and neural networks.

Let $\psi$ be a schema of $\SC$ or $\BNL$.
The \textbf{depth} $d(\psi)$ of $\psi$ is the maximum number of nested operators $\neg$ and $\land$ in $\psi$.
The \textbf{subschemata} of $\psi$ are defined recursively in the obvious way.
Let $\Lambda$ be a program of $\SC$ or $\BNL$.
The \textbf{size} (or the required \textbf{space}) of $\Lambda$ is defined as the number of appearances of $\top$,
proposition symbols $p_{i}$, head predicates $V_{i}$ and operators $\neg$ and $\land$ in its base and induction rules. 
The \textbf{depth} of $\Lambda$ is the maximum depth of the bodies of the induction rules and base rules of~$\Lambda$.

Let $X_1, \ldots, X_n$ enumerate a set $\cT$ of schema variables (in the order $<^{\var}$). Let $\Lambda$ be a $(\Pi, \cT)$-program of $\SC$ or a $\cT$-program of $\BNL$, and $M$ a suitable model for $\Lambda$ that induces an input $\bi \in \{0, 1\}^{k}$.
Each round $t \in \N$ defines a \textbf{global configuration} (or \textbf{state}) $g_t \colon \cT \to \{0,1\}$, where $g_t(X_i) = 1$ iff $M \models X_i^t$, for each $X_i \in \cT$. 
Thus, $\Lambda$ with input $\bi$ also induces an infinite sequence $B_\Lambda = (\bb_t)_{t \in \N}$ called the \textbf{global configuration sequence}, where $\bb_t = g_{t}(X_1) \cdots g_{t}(X_n)$.

Now we shall define the run sequence of the global configuration sequence $B_\Lambda$ of $\Lambda$ with input $\bi$. 
If $\cP$ and $\cA$ are the sets of print and attention predicates of $\Lambda$ respectively, then the corresponding sets of print and attention positions are $P_\Lambda \colonequals \{\, i \mid X_i \in \cP\,\}$ and $A_\Lambda \colonequals \{\,i \mid X_{i} \in \cA\,\}$.
The corresponding set of attention strings is $\cS_\Lambda \colonequals \{0,1\}^{\abs{\cA}} \setminus \{0\}^{\abs{\cA}}$.
Thus, with input $\bi$, $\Lambda$ induces the \textbf{run sequence} of $B_\Lambda$ w.r.t. $(A_\Lambda, P_\Lambda, \cS_\Lambda)$. 
We may also simply say that with input $\bi$, $\Lambda$ induces the run sequence w.r.t. $(\cA, \cP)$.
If an attention function $a$ is used instead of attention predicates, then the output rounds are given by $a(\bi)$.
Similarly to the general output conditions defined in Section \ref{subsec: run sequences}, we say that $\Lambda$ with input $\bi$ induces \textbf{output rounds} and an \textbf{output sequence} w.r.t. $(\cA, \cP)$ (or resp. w.r.t. $(a(\bi), \cP)$).

$\BNL$-programs inherit a number of properties from Boolean networks. 
Each $\BNL$-program will eventually reach a single stable state (called \textbf{point attractor}, \textbf{fixed point attractor} or simply \textbf{fixed point}) or begin looping through a sequence of multiple states (called \textbf{cycle attractor}) since each program can only define a finite number of states. 
The smallest amount of time it takes to reach an attractor from a given state is called the \textbf{transient time} of that state. 
The \textbf{transient time} of a $\BNL$-program is the maximum transient time of a state in its state space \cite{cheng2010linear}. 
Analogously, these concepts are applicable to $\SC$.

Consider the fragment $\mathrm{BNL}_0$ of $\BNL$ where no head predicate of a program is allowed to have a base rule. The programs of this logic $\mathrm{BNL}_0$ are an exact match with \emph{classical Boolean networks}; each program encodes a Boolean network, and vice versa. 
\begin{example}\label{example: BNL with and without auxiliary}
    We demonstrate the power of auxiliary predicates in $\BNL$-programs. 
    The logic $\mathrm{BNL}$ extends $\BNL_0$ by allowing base rules.  
    Consider the following sequences:
    \[
    \begin{aligned}
        &1.\,\, 01, 10, 01, 10, 01, 10, 01, 10, \ldots\\
        &2.\,\, 01, 01, 10, 10, 01, 01, 10, 10, \ldots
    \end{aligned}
    \]
    Suppose that we want to build programs which induce the given output sequences with input $01$. The first can be induced with the following $\BNL_0$-program: 
    $X \colonminus Y$, $Y \colonminus X$, where $X <^{\mathrm{VAR}} Y$ and $\{X,Y\}$ is the set of attention and print predicates.
    However, the latter cannot be induced by any $\BNL_0$-program with attention predicates, because programs of $\BNL_0$ lack auxiliary predicates and the size of the input restricts the $\BNL_0$-program to two head predicates which are both attention and print predicates.
    This means that the program must have an attractor starting from the state $01$ (resp. $10$) where the state $01$ is possible to reach both with and without visiting $10$, which is impossible.
    However, the following program with auxiliary predicates induces the latter sequence: $A (0) \colonminus \bot$, $A \colonminus \neg A$, $X \colonminus (A \land Y) \lor (\neg A \land X)$ and $Y \colonminus (A \land X) \lor (\neg A \land Y)$, where again $\{X, Y\}$ is the set of attention and print predicates.
\end{example}

Finally, we define a special class of $\BNL$-programs.
A $\BNL$-program (with attention predicates) that only has fixed points (i.e. no input leads to a cycle attractor) and outputs precisely at fixed points, is called a \textbf{halting $\BNL$-program}. 
For a halting $\BNL$-program $\Lambda$ with input predicates $\cI$ and print predicates $\cP$, each input $\bi \in \{0,1\}^{\abs{\cI}}$ results in a single (repeating) \textbf{output} denoted by $\Lambda(\bi)$, which is the output string determined by the fixed-point values of the print predicates. 
In this sense, a halting $\BNL$-program is like a function $\Lambda \colon \{0, 1\}^{\abs{\cI}} \to \{0, 1\}^{\abs{\cP}}$. We say that $\Lambda$ \textbf{specifies} a function $f \colon \{0, 1\}^{\ell} \to \{0, 1\}^{k}$ if $\abs{\cI} = \ell$, $\abs{\cP} = k$ and $\Lambda(\bi) = f(\bi)$ for all $\bi \in \{0, 1\}^{\ell}$. 
The \textbf{computation time} of a halting $\BNL$-program is its transient time.
In the case where a $\BNL$-program $\Lambda$ with only fixed points has no attention predicates, we say that $\Lambda$ is a \textbf{halting $\BNL$-program w.r.t. $a$}, where $a$ is the attention function that gives for each input the corresponding rounds where the fixed point has been reached.

Naturally the notions of equivalence for run sequences from Section \ref{subsection: run sequence equivalence} generalize for programs. 
Let $\sigma$ be a similarity over $\{0,1\}$.
Let $\Lambda$ and $\Lambda'$ each be a program of $\SC$ or a program of $\BNL$ such that either they both have attention predicates or they both have an attention function.
We say that $\Lambda'$ is \textbf{equivalent{}} to $\Lambda$ w.r.t. $\sigma$ if given any input $\bb$ of $\Lambda$, 
the run sequence of $\Lambda'$ with input $\sigma(\bb)$ is 
equivalent{} (w.r.t. $\sigma$) to the run sequence of $\Lambda$ with input $\bb$.
The same generalization naturally applies for the concepts of \textbf{strong equivalence} w.r.t. $\sigma$, \textbf{time delay} and \textbf{precomputation}. Moreover, we require that the time delay and the number of precomputation rounds between the programs are the same regardless of input.
If $\sigma$ is the canonical similarity over $\{0,1\}$ then we may omit $\sigma$ (and this is often the case).
Note that if $\Lambda$ is a program with an attention function $a$ and
we want to construct a program $\Lambda'$ that is equivalent{} to $\Lambda$ w.r.t. a similarity $\sigma$ with computation delay $T$ and $t$ precomputation rounds, then it is necessary, sufficient and always possible to define the attention function $a'$ of $\Lambda'$ such that for all inputs $\bi$ of $\Lambda$ 
the set $a'(\sigma(\bi))$ is constructed from the set $a(\bi)$ by Remark \ref{remark: outputs and attention}.
Thus, we do not need to separately construct the attention functions when constructing equivalent{} programs.

Note that any given halting $\BNL$-program $\Lambda$ (with attention predicates) can be translated into a strongly equivalent halting $\BNL$-program $\Lambda'$ w.r.t. an attention function $a$, where $\Lambda'$ and $a$ are defined as follows. The program $\Lambda'$ is obtained from $\Lambda$ by treating the attention predicates as normal non-attention predicates and $a$ is defined by $a(\bi) = \{\, n \in \N \mid n \geq t_{\bi} \,\}$, where $t_\bi$ is the least round where $\Lambda$ reaches a fixed point with the input $\bi$.
Henceforth, when constructing halting $\BNL$-programs, we will consider the case with attention predicates, as the case with attention functions follows trivially by the above.

\section{Translations between BNL, SC and self-feeding circuits}

In this section, we present two alternative ways of representing $\BNL$ as well as auxiliary tools and results which we will use later. We start by showing how two fundamental programming techniques---if-else statements and while loops---can be implemented in $\BNL$. 
Then we show equivalence between $\BNL$ and $\SC$ in Theorem \ref{SC_BNL}.
After that, we show in Theorem \ref{BNL_to_FBNL} how each $\BNL$-program can be translated into an equivalent $\BNL$-program with very simple induction rules.
Finally, we introduce a self-feeding circuit model that is also equivalent to $\BNL$, shown in Theorems \ref{thrm:BNL_to_SF_circuit} and \ref{thrm:SF_circuit_to_BNL}.
By using these connections to $\BNL$, we later obtain additional characterizations for neural networks via $\SC$ and self-feeding circuits.

\subsection{If-else statements and while loops}\label{section: tools}

We start by introducing some standard programming tools that are most relevant in our proofs when constructing $\BNL$-programs. We especially consider ``if-else statements'' and ``while loops''. Later, these tools are used to combine subprograms that together simulate arithmetic operations and neural networks in $\BNL$; we simply construct the relevant subprograms and refer to this section for how those subprograms may be combined. 

\textbf{If-else statements}

Informally, an if-else statement in a computer program executes an action if a condition is met and instead executes another action if the condition fails.
We demonstrate the implementation of this in $\BNL$.

We start with a warm-up where we define a conditional rule for a single head predicate. Let $X$ be a schema variable and let $\chi$, $\psi$ and $\varphi$ be $\BNL$-schemata. Informally, we define an induction rule for $X$ which executes $\psi$ if $\chi$ is true (i.e. the condition is met) and else~$\varphi$ (i.e. the condition fails). 
Formally, the \textbf{conditional rule for $X$ w.r.t. $(\chi, \psi, \varphi)$} is $X \colonminus (\psi \land \chi) \lor (\varphi \land \neg \chi)$. 
While this technique can be used for singular head predicates, it can also be used for whole subprograms. Moreover, two separate programs can be combined as follows.
Given two separate programs $\Lambda$ and $\Lambda'$, we can replace the bodies of their induction rules with conditional rules to create a combined program that carries out $\Lambda$ if some condition is met and $\Lambda'$ if not. The chosen program can then be run for as long as desired, even to the point of stopping for halting programs, as we will see with the next technique.

\textbf{While loops}

Informally, a while loop means that an action of a computer program is executed repeatedly while a certain condition is met and stopped once the condition fails. We demonstrate the implementation of this in $\BNL$.

It is easy to define a while loop in $\BNL$ by using if-else statements. 
Say we want to execute a program $\Gamma$ while a $\BNL$-schema $\chi$ is true, and else execute alternate rule bodies $\varphi$; then we simply replace each rule $X \colonminus \psi$ in $\Gamma$ with the conditional rule for $X$ w.r.t. $(\chi, \psi, \varphi)$. Now, each $X$ applies its original body $\psi$ (i.e. executes) while $\chi$ is true and else 
applies~$\varphi$.
Often $\varphi = X$, meaning that $X$ does not update its truth value (i.e. stops) when $\chi$ is false. 
It is only a small step to apply this for multiple rounds in a row, as we shall see next.

Often the condition for a while loop is given by a \textbf{clock} or \textbf{one-hot counter}, i.e., a program of type
\[
\begin{aligned}
    &T_0 (0) \colonminus \top\qquad &&T_0 \colonminus T_\ell \\
    &T_1 (0) \colonminus \bot &&T_1 \colonminus T_0 \\
    &\vdots &&\vdots \\
    &T_\ell (0) \colonminus \bot &&T_\ell \colonminus T_{\ell-1}.
\end{aligned}
\]
The predicates are updated in a periodic fashion as follows.
In round $0$ the only true head predicate is $T_0$. Then in round $1$ the only true predicate is $T_1$, and so on. Moreover, in round 
$\ell+1$
the only true predicate is $T_0$ again, and the process continues in a loop. 
Now we can, for example, define a rule $C \colonminus \bigvee_{i = 0}^{k} T_{i}$ and construct conditional rules w.r.t. $(C, \psi, \varphi)$ such that the program executes a subprogram defined by formulae $\psi$ for $k+1$ rounds and then another subprogram with the rules $\varphi$ for the next $\ell-k$ rounds.

\subsection{SC and BNL are equivalent}

In the following theorem, we show that $\BNL$-programs translate to $\SC$-programs and vice versa. Both translations are linear.

\begin{theorem}\label{SC_BNL}
    For each $\SC$-program of size $m$, we can construct an equivalent{} $\BNL$-program of size $\ordo(m)$ with $1$ precomputation round. Moreover, for each $\BNL$-program of size $n$, we can construct a strongly equivalent $\SC$-program of size $\ordo(n)$.
\end{theorem}

\begin{proof}
We start with the direction from $\SC$ to $\BNL$.
We prove the case where the $\SC$-program includes attention predicates since the case where output rounds are given by an attention function follows from Remark \ref{remark: outputs and attention}.
Given an $\SC$-program $\Lambda$, we construct an equivalent{} $\BNL$-program $\Lambda'$ as follows. Informally, $\Lambda'$ uses one round to compute the base rules of $\Lambda$; the base rules of $\Lambda$ are embedded into the induction rules of $\Lambda'$ using conditional rules.

More formally, we first copy each head predicate of $\Lambda$ to $\Lambda'$ and their status as attention/print predicates, and define their rules as follows. We add a fresh auxiliary predicate $T$ with the rules $T (0) \colonminus \bot$, $T \colonminus \top$. Then from a base rule $X (0) \colonminus \varphi$ and an induction rule $X \colonminus \psi$ of $\Lambda$, we form a new induction rule $X \colonminus (T \land \psi) \lor ( \neg T \land \varphi)$ for $\Lambda'$, i.e., we create a conditional rule for $X$ w.r.t. $(T, \psi, \varphi)$.
Intuitively, $T$ makes it so that the induction rule of $X$ uses the body of the original base rule in round $1$ and the body of the original induction rule in every subsequent round. For each head predicate originally from $\Lambda$ we set $X(0) \colonminus \bot$; this prevents attention predicates from triggering in round $0$. 
Finally, for each proposition symbol $p_i$ that appears in $\Lambda$, we introduce an input predicate $P_{i}$ with the induction rule $P_i \colonminus P_i$, and we assume that variables $P_i$ are ordered w.r.t. the order of proposition symbols $p_i$. Then we replace each instance of a proposition symbol $p_i$ that appears in the bodies of base and induction rules originally from $\Lambda$ with $P_i$. 

Now, the program $\Lambda'$ is ready, and it is easy to show that $\Lambda$ and $\Lambda'$ are equivalent{}. The size of $\Lambda'$ is clearly linear in the size of $\Lambda$ and the computation delay is $0$ with $1$ precomputation round.

From $\BNL$ to $\SC$, we amend the $\BNL$-program with the missing base rules by using proposition symbols. That is, if $X_i$ is an input predicate of the $\BNL$-program, then we add a base rule $X_i (0) \colonminus p_i$ for $X_i$ and assume that the proposition symbols are in the same order as the input predicates. Otherwise, the $\SC$-program is identical to the $\BNL$-program. 
\end{proof}

\subsection{Fully-open BNL-programs}

The induction rules in a $\BNL$-program can vary greatly in terms of complexity from elementary rules such as $X \colonminus Y$ to potentially having to explicitly state the accepted configurations of all head predicates. 
In a realistic setting, the time required to calculate such schemata also varies, e.g., depending on the depth of the schema. Thus, we introduce a so-called fully-open form where each induction rule is elementary, i.e., built by applying at most a single negation or conjunction to schema variables or the logical constant $\top$. 
More formally, we say that a $\cT$-program $\Lambda$ of $\BNL$ is \textbf{fully-open}, if each body $\psi$ of an induction rule $X \colonminus \psi$ of $\Lambda$ fulfills one of the following conditions:
1) $\psi \in \cT \cup \{\top\}$ or 2) $\psi = \neg \varphi$ for some $\varphi \in \cT \cup \{\top\}$ or 3) $\psi = \varphi \land \theta$ for some $\varphi, \theta \in \cT \cup \{\top\}$.
We obtain the following auxiliary result, which we will use in Section \ref{BNL to NN}.

\begin{theorem}\label{BNL_to_FBNL}
    Given a $\BNL$-program $\Lambda$ of size $s$ and depth $d$, we can build an equivalent{} fully-open $\BNL$-program $\Lambda'$ 
    of size $\ordo(s)$ 
    where the computation delay of $\Lambda'$ is $\ordo(d)$.
\end{theorem}
\begin{proof}
    The rough idea is to break the induction rules of $\Lambda$ down into components consisting of at most one operator $\neg$ or $\land$  
    by replacing subschemata $\varphi$ with auxiliary schema variables $X_{\varphi}$. For example, the induction rule $X \colonminus \neg Y \land \neg Z$ could be broken down to the induction rules $X \colonminus X_{\neg Y} \land X_{\neg Z}$, $X_{\neg Y} \colonminus \neg Y$ and $X_{\neg Z} \colonminus \neg Z$, which would simulate each iteration of $X$ in $\Lambda$ in two iterations of~$\Lambda'$. 
    However, this is not quite enough as the depth of the subschemata can vary. Thus, we have to synchronize these auxiliary variables to make sure that subschemata are calculated in the correct round in accordance with their depth; we do this by using a one-hot counter to build conditional rules (see Section \ref{section: tools}). 
    Although conditional rules are not in a fully-open form,
    we find that it is possible to break them down into auxiliary variables as well without further complications.

    We construct an equivalent{} fully-open $\BNL$-program $\Lambda'$ (w.r.t. the canonical similarity over $\{0,1\}$) in the following way. We assume that $\Lambda$ uses attention predicates, since the case for attention functions is easy to obtain by Remark \ref{remark: outputs and attention}.
    First, we copy all head predicates from $\Lambda$, as well as their base rules and their status as input/print/attention predicates. We define a one-hot counter $T_{0}, \dots, T_{5(d+1)}$ exactly as we did in Section \ref{section: tools}.
    We also define a complementing counter $T_{0}', \dots, T_{5(d+1)}'$ where in each round we have that $T_{i}'$ is true if and only if $T_{i}$ is false (i.e., $T_{i}' \equiv \neg T_{i}$, meaning that in any given round exactly one variable $T_{i}'$ is false). This is simple to construct by modifying the one-hot counter in Section \ref{section: tools} by changing each base rule from $\top$ to $\bot$ and vice versa.
    
    Intuitively, we construct a module for each subschema of $\Lambda$, where each module contains eight head predicates in five layers that operate in a loop, each loop consisting of five iteration rounds, one per layer. Each of these loops essentially calculates a conditional rule of depth five for the given subschema, one iteration round per depth.
    More formally, for each occurrence of a subschema $\varphi$, we define corresponding head predicates $X_{\varphi}^{1}, \dots, X_{\varphi}^{5}$ and $Y_{\varphi}^{1}, \dots, Y_{\varphi}^{3}$, where the upper index signifies the layer of the loop the predicate belongs to. Before defining the induction rules for these predicates, we define the schema $\theta_{\varphi}$ as follows. Let $\cT$ denote the set of head predicates of $\Lambda$. If $\varphi \in \cT \cup \{\top\}$, then $\theta_{\varphi} \colonequals \varphi$. If $\varphi \colonequals \neg \psi$, then $\theta_{\varphi} \colonequals \neg X_{\psi}^{5}$. If $\varphi \colonequals \psi_{1} \land \psi_{2}$, then $\theta_{\varphi} \colonequals X_{\psi_{1}}^{5} \land X_{\psi_{2}}^{5}$. Assume the depth of $\varphi$ is $d'$. We define the induction rules of $X_{\varphi}^{1}, \dots, X_{\varphi}^{5}$ and $Y_{\varphi}^{1}, \dots, Y_{\varphi}^{3}$ as follows:\footnote{In the second bullet, the variable $T_{5d'+1}$ makes sure that the truth values of the immediate subschemata of $\varphi$ are witnessed at the right moment; note that $T_{5d'+1}$ is true iff $T_{5d'+1}'$ is false}
    \begin{itemize}
        \item $X_{\varphi}^{1} \colonminus \theta_{\varphi}$ and $Y_{\varphi}^{1} \colonminus X_{\varphi}^{5}$,
        \item $X_{\varphi}^{2} \colonminus T_{5d'+1} \land X_{\varphi}^{1}$ and $Y_{\varphi}^{2} \colonminus T_{5d'+1}' \land Y_{\varphi}^{1}$,
        \item $X_{\varphi}^{3} \colonminus \neg X_{\varphi}^{2}$ and $Y_{\varphi}^{3} \colonminus \neg Y_{\varphi}^{2}$,
        \item $X_{\varphi}^{4} \colonminus X_{\varphi}^{3} \land Y_{\varphi}^{3}$,
        \item $X_{\varphi}^{5} \colonminus \neg X_{\varphi}^{4}$.
    \end{itemize}
    
    By substituting the induction rules of $X_{\varphi}^{1}, \dots, X_{\varphi}^{4}$ and $Y_{\varphi}^{1}, \dots, Y_{\varphi}^{3}$ into the induction rule of $X_{\varphi}^{5}$, we obtain the schema on the left below which by De Morgan's law is logically equivalent to the conditional rule on the right:
    \[
        \neg (\neg (T_{5d'+1} \land \theta_{\varphi}) \land \neg (T_{5d'+1}' \land X_{\varphi}^{5})) \equiv (T_{5d'+1} \land \theta_{\varphi}) \lor (\neg T_{5d'+1} \land X_{\varphi}^{5}).    
    \]
    This illustrates how the head predicates together calculate a conditional rule where $T_{5d'+1}$ is the condition.
    Finally, for the head predicates that we copied from $\Lambda$, we define the induction rule $X \colonminus T_{5(d+1)} \land X_{\psi}^{5}$, where $\psi$ is the induction rule of $X$ in $\Lambda$.
    Clearly, the size of $\Lambda'$ is linear in the size of $\Lambda$, since we only add a constant number of constant-size induction rules for each subschema of $\Lambda$, and the one-hot counters are linear in the size of $\Lambda$. 

    Let $M$ be a suitable model for $\Lambda$ and $\Lambda'$. Note that the programs have the same input predicates $\cI$. It is easy to show by induction on $n \in \N$ that for each $\cI$-model and for each head predicate $X$ that appears in $\Lambda'$ and $\Lambda$ that $M \models X^n$ w.r.t. $\Lambda$ iff $M \models X^{5(d+1)n}$ w.r.t. $\Lambda'$.
    Thus, $\Lambda'$ is equivalent{} to $\Lambda$ and the computation delay of $\Lambda'$ is $\ordo(d)$.
\end{proof}

\subsection{Link to self-feeding circuits}\label{section:self-feeding_circuits}

In this section, we introduce self-feeding circuits which are more or less just a syntactic variation of $\BNL$-programs and related to sequential circuits \cite{CavanaghJoseph2006SLAa}. We will show that for every $\BNL$-program, we can construct an equivalent self-feeding circuit and vice versa. We also pay special attention to the size and time complexities in the translations.

\subsubsection{Circuits}

We first recall some basics related to Boolean circuits. 
A \textbf{Boolean circuit} is a directed, acyclic graph where each node is either labeled with one of the symbols $\land , \lor , \neg$ or unlabeled; only nodes with in-degree $1$ can be labeled $\neg$, and only nodes with in-degree $0$ can be unlabeled.
The nodes of a circuit are called \textbf{gates}.
The in-degree of a gate $u$ is called the \textbf{fan-in} of $u$, and the out-degree of $u$ is the \textbf{fan-out}. The \textbf{input gates} of a circuit are precisely the gates that have zero fan-in and no label $\land , \lor$ or $\neg$. The \textbf{output gates} are the ones with fan-out zero; we allow multiple output gates in a circuit.

The \textbf{size} $\abs{C} $ of a circuit $C$ is the number of gates in $C$.
The \textbf{height} $\mathsf{height}(G)$ of a gate $G$ in $C$ is the length of the longest path from a gate with fan-in zero to the gate $G$. 
The \textbf{depth} $\mathsf{depth}(C)$ (or the \textbf{computation/evaluation time}) of $C$ is the maximum height of a gate in $C$. 
The notions of depth and height can also be used analogously for any directed acyclic graphs.
The sets of input and output gates of a
circuit are both linearly ordered. A circuit with $n$ input gates and $k$ output gates then \textbf{computes} (or specifies) a 
function of type $\{0,1\}^n \rightarrow \{0,1\}^k$. This is done in the natural way, analogously to the Boolean operators corresponding to $\wedge,\vee,\neg$; see, for example, \cite{libkin} for the formal definition. The output of the circuit is the \emph{binary string} determined by the bits of the linearly ordered output gates. Note that 
gates with the labels $\wedge,\vee$ can have any fan-in (also $0$), meaning that by default, circuits have \textbf{unbounded fan-in}. In the elaborations below, we say that a circuit has \textbf{bounded fan-in} if
the fan-in of every $\land$-gate and $\lor$-gate of the circuit is at most $2$. The $\land$-gates that have fan-in zero always output $1$, and therefore correspond to the symbol $\top$. Respectively, the $\lor$-gates that have fan-in zero correspond to the symbol $\bot$.

A \textbf{Boolean formula} $\varphi$ with $n$ variables is a circuit with bounded fan-in specifying a function of type $\{0,1\}^n \to \{0,1\}$ and where the fan-out of every non-input gate is at most~$1$.
Let $\cC$ denote the set of all circuits. Given $x$ and $y$ in the set 
$\cC$,
we say that $x$ and $y$ are \textbf{equivalent} if they specify the same function.

\subsubsection{Self-feeding circuits}

Given an integer $k \in \Z_+$, a \textbf{self-feeding circuit for $k$} is a circuit $C$ that specifies a function 
$
f \colon \{0,1\}^{k} \rightarrow \{0,1\}^{k}.
$
The circuit $C$ is associated with a set of \textbf{input positions} $I \subseteq [k]$ and an \textbf{initializing function} $\pi \colon [k] \setminus I \to \{0,1\}$. The elements of $[k]\setminus I$ are called \textbf{auxiliary positions}. 
Moreover, $C$ is also associated with a set $P\subseteq [k]$ of \textbf{print positions} and optionally a set $A \subseteq [k]$ of \textbf{attention positions} that determine the output rounds of the circuit. If there are no attention positions, the output rounds can be given by an external \textbf{attention function} $a \colon \{0,1\}^{\abs{I}} \to \wp(\N)$. Analogously to programs, whenever necessary for the context, we will tell whether a self-feeding circuit is associated with an attention function or not.

Informally, a self-feeding circuit computes as follows. In round $0$, the input gates in input positions are fed with the input and the input gates in auxiliary positions are fed with the values given by the initializing function; then the circuit produces an output in the ordinary way.
After that in each round $n$ the output from the previous round $n-1$ is fed back to the circuit itself to produce a new output. We then define the computation of self-feeding circuits formally. Let $C$ be a self-feeding circuit for $k$ with the set $I$ of input positions and input $i \colon I \to \{0,1\}$ (or the corresponding bit string $\bi \in \{0,1\}^{\abs{I}}$). Respectively, the function $\pi \cup i$ corresponds to the binary string $\bs_{\pi \cup i} \in \{0,1\}^{k}$, where for each $\ell \in [k]$, $\bs_{\pi \cup i}(\ell) = i(\ell)$ if $\ell \in I$ and $\bs_{\pi \cup i}(\ell) = \pi(\ell)$ if $\ell \notin I$. Each round $n \in \N$ defines a \textbf{global configuration} $\bg_n \in \{0,1\}^{k}$. The configuration of round $0$ is the $k$-bit binary string $\bg_0=\bs_{\pi \cup i}$. 
Recursively, assume we have defined $\bg_{n}$. 
Then $\bg_{n+ 1}$ is the 
output string of $C$ when it is fed with the string $\bg_n$.
The set of attention strings of $C$ is $\cS = \{0,1\}^{\abs{A}} \setminus \{0\}^{\abs{A}}$.
Now, consider the sequence $B_{C} = (\bg_n)_{n\in\mathbb{N}}$ of $k$-bit strings that $C$ produces. The circuit $C$ with input $i$ (or $\bi$) also induces the run sequence of $B_{C}$ w.r.t. $(A, P, \cS)$ (or the run sequence w.r.t. $(A, P)$).
It thus also induces a set of \textbf{output rounds} and an \textbf{output sequence} w.r.t. $(A,P)$ (or if $C$ does not contain attention positions, then together with an attention function $a$ it induces a set of output rounds and an output sequence w.r.t. $(a(\bi), P)$). 

Analogously to $\SC$ and $\BNL$, we define \textbf{equivalence{}}, \textbf{strong equivalence}, \textbf{computation delay} and \textbf{precomputation rounds} between two self-feeding circuits or between a self-feeding circuit and a program. 
Also, analogously to $\BNL$, we define \textbf{halting self-feeding circuits} and related notions, i.e., the \textbf{computation time} and the \textbf{output} of a halting self-feeding circuit, and how a halting self-feeding circuit specifies a function.

Self-feeding circuits differ from Boolean circuits; they can express some Boolean functions more compactly (in terms of size or depth). 
A \textbf{family} of Boolean functions is a sequence $(f_n)_{n \in \N}$, where $f_n \colon \{0,1\}^n \to \{0,1\}^m$ for some $m \in \N$. We say that a sequence $(C_n)_{n \in \N}$ of self-feeding circuits (or Boolean circuits) \textbf{computes} a family $(f_n)_{n \in \N}$ of Boolean functions if $C_n$ specifies $f_n$ for every $n \in \N$.
Given $n \in \N$, a \textbf{parity function} $\mathrm{PAR}_n \colon \{0,1\}^n \to \{0,1\}$ is defined by $\mathrm{PAR}_n (\bs) = 1$ if and only if $\bs$ has an odd number of $1$s. It is a well-known fact that $\mathrm{PARITY} = (\mathrm{PAR}_n)_{n \in \N}$ is not in the circuit complexity class $\mathsf{AC}^0$, i.e. there does not exist a sequence $(C_n)_{n\in \N}$ of Boolean circuits such that $C_n$ computes $\mathrm{PAR}_n$ and the depth of $C_n$ is $\ordo(1)$ and the size of $C_n$ is polynomial in $n$. For self-feeding circuits this is not true, as we will show in the following theorem.

\begin{theorem}
    $\mathrm{PARITY}$ is computable by a sequence $(C_n)_{n \in \N}$ of halting self-feeding circuits such that the depth of $C_n$ is constant, the size of $C_n$ is linear in $n$ and $C_n$ computes $\mathrm{PAR}_n$ 
    with computation time $\ordo(\log n)$.
    Moreover, each $C_n$ has only one auxiliary position.
\end{theorem}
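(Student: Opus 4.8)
The plan is to compute $\mathrm{PAR}_n$ by a binary ``tournament'' reduction. Work with $k=n+1$ positions: positions $1,\dots,n$ are the input positions and initially hold $b_1,\dots,b_n$, and position $n+1$ is the single auxiliary position, initialized to $0$ by $\pi$. One application of the circuit replaces the content of each position $i\in[n]$ by the XOR of the contents of positions $2i-1$ and $2i$ (treating any position index outside $[n]$, including $n+1$, as holding $0$). Since $x\oplus y$ is computed by the formula $(x\wedge\neg y)\vee(\neg x\wedge y)$ of depth $3$, this part of the circuit has depth $3$ and $\ordo(n)$ gates. Two invariants drive the argument, both provable by easy inductions: (i) the XOR of the contents of positions $1,\dots,n$ is preserved by one application, since each stored bit is used in exactly one pair; and (ii) the tree part stabilizes, reaching the configuration $(c,0,\dots,0)$ with $c=b_1\oplus\cdots\oplus b_n$, and it does so within $\lceil\log_2 n\rceil$ rounds (the content of position $j$ migrates to position $\lceil j/2\rceil$ each round). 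In particular $(c,0,\dots,0)$ is the unique configuration of positions $1,\dots,n$ stable under the tree step, and it is stable exactly when positions $2,\dots,n$ all hold $0$.

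To obtain a genuine \emph{halting} self-feeding circuit, I would use the auxiliary position $n+1$ as a ``ready'' flag, updated by the rule $\neg(\bs_2\vee\cdots\vee\bs_n)$, a constant-depth subcircuit of $\ordo(n)$ gates (it uses one unbounded-fan-in $\vee$-gate, which is permitted). By the observation above, the flag is $1$ in round $m$ iff the tree part was already stable in round $m-1$, i.e.\ iff the whole configuration is a fixed point in round $m$; so the flag switches on exactly one round after positions $2,\dots,n$ first become all $0$, and then stays on. Taking print positions $P=\{1\}$ and attention positions $A=\{n+1\}$, the circuit $C_n$ outputs precisely in its fixed-point rounds, always printing the value of position $1$, which at those rounds is $c=\mathrm{PAR}_n(b_1,\dots,b_n)$; and since every input leads to the fixed point $(c,0,\dots,0,1)$, $C_n$ has no cycle attractors, so it is halting and specifies $\mathrm{PAR}_n$. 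Its computation time equals its transient time $\lceil\log_2 n\rceil+1=\ordo(\log n)$ (worst case: a single $1$ in the last position, needing $\lceil\log_2 n\rceil$ rounds to reach position $1$ and one more for the flag), its depth is $\ordo(1)$, its size is $\ordo(n)$, and it has exactly one auxiliary position. The degenerate case $n=0$ is trivial: $\mathrm{PAR}_0$ is the constant $0$, realized by a one-position self-feeding circuit whose single (auxiliary) position stays $0$ and is printed every round (e.g.\ with the attention function $a(\epsilon)=\N$).

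The routine parts are the two inductions of the first paragraph and the elementary depth/size count. The one point that genuinely needs care is reconciling the flag with the halting conventions: the circuit must not output in a round where the tree part is already stable but the flag has not yet switched on --- this happens, e.g., for the all-zero input, where the configuration $(0,\dots,0,0)$ occurs before the fixed point $(0,\dots,0,1)$ --- and it must output in \emph{every} fixed-point round, so that the requirement ``outputs precisely at fixed points'' in the definition of a halting self-feeding circuit is satisfied. Pinning down the exact round in which the flag flips settles both directions at once, and is also what produces the ``$+1$'' in the computation-time bound.
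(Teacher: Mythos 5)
Your proof is correct and takes essentially the same route as the paper's: the pairwise-XOR ``tournament'' that halves the support of the string each round, together with a single auxiliary attention position that switches on once positions $2,\dots,n$ have all become $0$. If anything, your explicit pairing $(2i-1,2i)$ and your verification that the flag fires exactly at the fixed-point rounds are more careful than the paper's write-up, which pairs $(x_i,x_{i+1})$ in the text (apparently a typo, judging from its own example tables) and leaves the halting bookkeeping implicit.
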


\begin{proof}
    We start with an informal description. For a bit string of length $n_{0} \in \Z_{+}$, each iteration of the circuit pairs up the bits of the string and calculates the parity of each pair, resulting in a string of length $n_{k+1} \colonequals \lceil \frac{n_{k}}{2} \rceil$, which is filled back up to length $n_{0}$ by adding zeros to the right in order to feed it back to the circuit. 
    After at most $\lceil \log(n_{0}) \rceil$ rounds the remaining string has length $1$, meaning that the parity of the whole input string has been calculated. One extra bit is used in an attention position to determine when to output.

    More formally, the circuit $C_{n}$ is defined as follows. The input gates are $x_{1}, \dots, x_{n}, a$, where $a$ is in an auxiliary attention position, meaning that the input string has an extra zero added to it. The output gates of $C_{n}$ are $o_{1}, \dots, o_{n}, o$, that operate as follows.
    \begin{itemize}
        \item For $1 \leq i \leq \lceil \frac{n}{2} \rceil$, the gate $o_{i}$ is the output gate of a subcircuit calculating the ``exclusive or'': $\mathrm{XOR}(x_{2i-1}, x_{2i}) \colonequals (x_{2i-1} \land \neg x_{2i}) \lor (\neg x_{2i-1} \land x_{2i})$, i.e., it outputs $1$ if exactly one of the input bits is $1$. If $n$ is odd, then $o_{\lceil \frac{n}{2} \rceil}$ is instead an $\lor$-gate with fan-in $1$ for $x_{n}$.
        \item For $\lceil \frac{n}{2} \rceil < i \leq n$, the gate $o_{i}$ is an $\lor$-gate with fan-in zero, i.e., it always outputs $0$.
        \item The gate $o$ is the output gate of a subcircuit checking when $\mathrm{PAR}_n$ has been computed, which happens when $x_2, \dots, x_n$ are all $0$s. In other words, $o$ computes $\bigwedge_{2 \leq i \leq n} \neg x_{i}$. 
    \end{itemize}

    The input and output gates are ordered in the natural way, i.e., $x_{1} < \dots < x_{n} < a$ for input gates and $o_{1} < \dots < o_{n} < o$ for output gates. The only print position is $1$, the only attention position is $n+1$, and the set of input positions is $[n]$. The initializing function is $\pi \colon \{n+1\} \to \{0,1\}, \pi(n+1) = 0$.

    It is easy to verify that $C_{n}$ constructed this way computes $\mathrm{PAR}_{n}$ 
    with computation time $\ordo(\log(n))$.
    Below are two tables showing how $C_{5}$ handles the inputs $01100$ and $01011$. The underlined elements of each row are essentially the length of the string in that round, not taking into account the greyed-out elements which are zeros regardless of the input. 
    \begin{table}[!htb]
        \begin{minipage}{0.5\linewidth}
        \caption{Simulation with input $01100$.}    
        \centering
        \begin{tabular}{c|cccccc}
            Round & $x_1$ & $x_2$ & $x_3$ & $x_4$ & $x_5$ & $a$\\
            \hline
            0. & \underline{0} & \underline{1} & \underline{1} & \underline{0} & \underline{0} & 0\\
            \hline
            1. & \underline{1} & \underline{1} & \underline{0} & \textcolor{gray}{0} & \textcolor{gray}{0} & 0\\
            \hline
            2. & \underline{0} & \underline{0} & \textcolor{gray}{0} & \textcolor{gray}{0} & \textcolor{gray}{0} & 0\\
            \hline
            3. & \underline{0} & \textcolor{gray}{0} & \textcolor{gray}{0} & \textcolor{gray}{0} & \textcolor{gray}{0} & 1\\
            \hline
            4. & \underline{0} & \textcolor{gray}{0} & \textcolor{gray}{0} & \textcolor{gray}{0} & \textcolor{gray}{0} & 1\\
            \hline
            \vdots & \vdots & \vdots & \vdots & \vdots & \vdots & \vdots\\
        \end{tabular}
        \end{minipage}
        \begin{minipage}{0.5\linewidth}
        \centering
        \caption{Simulation with input $01011$.}
        \begin{tabular}{c|cccccc}
            Round & $x_1$ & $x_2$ & $x_3$ & $x_4$ & $x_5$ & $a$\\
            \hline
            0. & \underline{0} & \underline{1} & \underline{0} & \underline{1} & \underline{1} & 0\\
            \hline
            1. & \underline{1} & \underline{1} & \underline{1} & \textcolor{gray}{0} & \textcolor{gray}{0} & 0\\
            \hline
            2. & \underline{0} & \underline{1} & \textcolor{gray}{0} & \textcolor{gray}{0} & \textcolor{gray}{0} & 0\\
            \hline
            3. & \underline{1} & \textcolor{gray}{0} & \textcolor{gray}{0} & \textcolor{gray}{0} & \textcolor{gray}{0} & 0\\
            \hline
            4. & \underline{1} & \textcolor{gray}{0} & \textcolor{gray}{0} & \textcolor{gray}{0} & \textcolor{gray}{0} & 1\\
            \hline
            \vdots & \vdots & \vdots & \vdots & \vdots & \vdots & \vdots\\
        \end{tabular}
        \end{minipage}
        \label{label}
    \end{table}
\end{proof}

We recall a well-known fact for Boolean formulae (that also applies to $\BNL$-schemata).
\begin{lemma}[\cite{size-depth-tradeoff}]\label{lem:BL_to_circuit}
    Given a Boolean formula of size $n$, there exists an equivalent Boolean formula of size $\ordo(n^2)$ and depth $\ordo(\log n)$.
\end{lemma}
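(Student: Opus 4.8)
The plan is to use Spira's classical formula-balancing technique. View the Boolean formula $\varphi$ as its parse tree $T$: leaves are labelled by $\top$, $\bot$ or variables, internal nodes by $\neg$ (one child) or by $\wedge,\vee$ (two children), and the size $n$ equals the number of nodes of $T$. For a node $v$ write $\varphi_v$ for the subformula rooted at $v$ and $\varphi[v\mapsto c]$ (with $c\in\{\top,\bot\}$) for $\varphi$ with the subtree at $v$ replaced by the constant $c$; then $|\varphi_v| + |\varphi[v\mapsto c]| = n+1$.

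First I would establish a tree-separator lemma: whenever $n$ exceeds some absolute constant, $T$ has a node $v$ with $n/3 \le |\varphi_v| \le 2n/3$ (up to an $\ordo(1)$ slack). This is proved by descending from the root, always moving to a child whose subtree has more than $2n/3$ nodes — there is at most one such child, since two would already account for more than $n$ nodes — and stopping when no child qualifies. At the stopping node $u$ one has $|\varphi_u| > 2n/3$ while every child of $u$ has subtree size $\le 2n/3$, and a short case split on whether $u$ is unary or binary shows the heavier child of $u$ is the desired $v$ (for a unary node the child has size $|\varphi_u|-1$; for a binary node the heavier child has size at least $(|\varphi_u|-1)/2$).

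Next, fix such a $v$, put $\psi := \varphi_v$, and use the substitution identity
\[
\varphi \;\equiv\; \bigl(\psi \wedge \varphi[v\mapsto\top]\bigr) \;\vee\; \bigl(\neg\psi \wedge \varphi[v\mapsto\bot]\bigr),
\]
which holds because the truth value of $\psi$ decides which restriction $\varphi$ collapses to. Recursively balance the three formulas $\psi$, $\varphi[v\mapsto\top]$, $\varphi[v\mapsto\bot]$, each of size at most $2n/3 + \ordo(1)$, into bounded-fan-in, one-output circuits, and wire them together with the four extra gates ($\neg$, two $\wedge$'s, one $\vee$) from the identity, all of fan-in at most $2$; constant-size formulas are the base case. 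The depth then satisfies $D(n) \le D(2n/3 + \ordo(1)) + 4$, which unwinds to $D(n) = \ordo(\log n)$, and every gate has fan-in at most $2$ by induction.

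For the size bound, let $S(n)$ be the worst-case size produced on a formula of size $\le n$. With $s := |\psi| \in [n/3, 2n/3]$ the three recursive inputs have sizes $s$, $n-s+1$, $n-s+1$, so $S(n) \le S(s) + 2\,S(n-s+1) + \ordo(1)$. Since $\alpha \mapsto \alpha^2 + 2(1-\alpha)^2$ on $[1/3,2/3]$ attains its maximum $1$ at $\alpha = 1/3$, the ansatz $S(n) \le C n^2$ — slightly strengthened, e.g. to $S(n) \le C n^2 - C' n$, to absorb the additive $\ordo(1)$, or alternatively by choosing the separator with $s/n$ bounded away from $1/3$ — closes the induction and yields $S(n) = \ordo(n^2)$. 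The main obstacle is exactly this size accounting: a crude ``recurse on three pieces each of size $\le 2n/3$'' bound only gives $n^{\log_{3/2} 3} \approx n^{2.71}$, so pushing the exponent down to $2$ genuinely requires the observation that the $1/3$–$2/3$ split keeps $\alpha^2 + 2(1-\alpha)^2 \le 1$ across the whole feasible range, together with clean control of the lower-order terms.
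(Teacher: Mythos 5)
The paper does not prove this lemma at all --- it is imported as a known fact from the cited reference --- so there is no in-paper argument to compare against; what you have written is essentially the standard Spira/Brent formula-balancing proof that underlies results of this kind, and it is correct. The separator lemma, the substitution identity $\varphi \equiv (\psi \wedge \varphi[v\mapsto\top]) \vee (\neg\psi \wedge \varphi[v\mapsto\bot])$, and the depth recurrence $D(n) \le D(2n/3 + \ordo(1)) + \ordo(1)$ are all right, and you correctly identify the only delicate point: since $\alpha^2 + 2(1-\alpha)^2$ equals $1$ exactly at $\alpha = 1/3$, the naive ansatz $S(n) \le Cn^2$ has zero slack, and the additive constant and the $+1$ shifts must be absorbed by a strengthened hypothesis such as $S(n) \le Cn^2 - C'n$ (whose recursive contribution $-C'(2n - s + 2) \le -\tfrac{4}{3}C'n$ supplies a surplus of order $C'n$, enough to swallow the lower-order cross terms, with small $n$ handled as an enlarged base case). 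One caveat: your fallback suggestion of ``choosing the separator with $s/n$ bounded away from $1/3$'' is not available in general --- a node of size just above $2n/3$ with two balanced children forces the selected subtree to have size $\approx n/3$ --- so the strengthened-induction route is the one that actually closes the argument; since you present it as the primary route, the proof stands.
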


The next two theorems give our translation from $\BNL$ to self-feeding circuits and back. Theorem~\ref{thrm:BNL_to_SF_circuit} gives two alternative translations with a trade-off between depth and size. 
\begin{theorem}\label{thrm:BNL_to_SF_circuit}
    Given a $\BNL$-program $\Lambda$ of size $n$ and depth $d$, we can construct a strongly equivalent self-feeding circuit with bounded fan-in either of size $\ordo(n)$ and depth $d$ or 
    of size~$\ordo(n^2)$ and depth $\ordo(\log n)$.
\end{theorem}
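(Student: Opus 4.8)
The plan is to translate a $\BNL$-program $\Lambda$ with head predicates $\cT=\{X_1,\dots,X_k\}$ clause-by-clause into one acyclic circuit whose $i$-th output gate evaluates the body of the iteration clause of $X_i$, delegating all recursion to the self-feeding mechanism. Concretely, I would introduce $k$ input gates $g_1,\dots,g_k$ (one per head predicate, ordered by $<^{\var}$), and for each iteration clause $X_i\colonminus\psi_i$ build the parse tree of $\psi_i$ as a bounded-fan-in circuit: every $\land$ becomes a fan-in-$2$ gate, every $\neg$ a fan-in-$1$ gate, every occurrence of $\top$ a fan-in-$0$ $\land$-gate, and every occurrence of a variable $X_j$ a wire from $g_j$; the root becomes the $i$-th output gate $o_i$ (with a trivial one-input copy gate inserted when $\psi_i$ is itself a single variable, so that $o_i$ is a genuine fan-out-zero gate). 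Since $\BNL$ bodies only read head predicates and never update them internally, the union of these $k$ trees over the shared gates $g_1,\dots,g_k$ is acyclic, hence a legitimate circuit $C$ computing a function $\{0,1\}^k\to\{0,1\}^k$. I then set the input positions $I=\{\,i\mid X_i\in\cI\,\}$, the initializing function $\pi(i)=1$ iff the terminal clause of $X_i$ is $X_i(0)\colonminus\top$, the print positions $\{\,i\mid X_i\in\cP\,\}$, and the attention positions (or attention function) inherited verbatim from $\Lambda$.

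Correctness follows by induction on the round $t$, showing that the global configuration $\bg_t$ of $C$ as a self-feeding circuit coincides with the global configuration $g_t$ of $\Lambda$. The base case $t=0$ is exactly the definition $\bg_0=\bs_{\pi\cup i}$ matched against the round-$0$ $\BNL$ semantics (input predicates take their value from the input, the others from their terminal clause); the induction step holds because feeding $C$ with $\bg_t$ outputs, in coordinate $i$, the value of $\psi_i$ evaluated under $\bg_t$, which is precisely $\cM\models X_i^{t+1}$. As the print and attention data are copied over and $|I|=|\cI|$, the circuit $C$ and $\Lambda$ share global configuration sequences, output rounds and output sequences, i.e. they are globally equivalent. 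For the resource bounds, the iteration bodies have total size $\ordo(n)$, so the $k\le n$ parse trees together with the $k$ input gates use $\ordo(n)$ gates; and since each parse tree has depth $d(\psi_i)$ with $o_i$ as an output gate, $\mathsf{depth}(C)=\max_i d(\psi_i)=d$, the depth of $\Lambda$. This establishes the first claim.

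For the ``moreover'' part I would keep the same self-feeding wrapper ($I$, $\pi$, $P$, attention data unchanged) but replace each parse-tree subcircuit for $\psi_i$ by the bounded-fan-in, one-output-gate circuit of size $\ordo(|\psi_i|^2)$ and depth $\ordo(\log|\psi_i|)$ equivalent to the Boolean formula $\psi_i$ furnished by Lemma \ref{lem:BL_to_circuit}, wiring its formula inputs to the shared gates $g_1,\dots,g_k$. Each such subcircuit is equivalent to the parse tree it replaces, so the inductive correctness argument goes through unchanged and global equivalence is preserved. Since there are at most $n$ iteration clauses, each yielding a circuit of size $\ordo(n^2)$ (as $|\psi_i|\le n$) and depth $\ordo(\log n)$, the resulting self-feeding circuit has size $\ordo(n^3)$ and depth $\ordo(\log n)$, as required.

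I do not anticipate a real obstacle: the argument is essentially bookkeeping. The two points needing care are (i) confirming that the self-feeding semantics of a circuit faithfully realizes a single $\BNL$ iteration step --- which is immediate once one observes that $\BNL$ bodies have no self-reference beyond reading the previous round, keeping $C$ acyclic --- and (ii) the degenerate bodies $\psi_i\in\{\top,\bot,X_j\}$, where a constant gate or a one-input copy gate is needed so that output positions are honest fan-out-zero gates while respecting the fan-in-$\le 2$ and depth constraints (this costs nothing when $d\ge 1$, and the $d=0$ case can be handled directly).
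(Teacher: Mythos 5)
Your proposal is correct and follows essentially the same route as the paper's proof: build a bounded fan-in circuit for each iteration-clause body (parse tree for the first claim, the quadratic-size logarithmic-depth circuit of Lemma~\ref{lem:BL_to_circuit} for the second), merge them over shared input gates, and read off the input positions, initializing function, print and attention data directly from $\Lambda$. The extra care you take with degenerate bodies and the explicit induction on rounds only fills in details the paper leaves implicit.
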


\begin{proof}    
We prove both claims at once. We consider the case without attention functions, since the case with attention functions is easy to obtain by Remark \ref{remark: outputs and attention}. Let $X_1 \colonminus \psi_1, \ldots, X_m \colonminus \psi_m$ enumerate the induction rules of $\Lambda$ w.r.t. $<^{\var}$ and let $\cI$ denote its set of input predicates. 
We construct a strongly equivalent self-feeding circuit $C_{\Lambda}$ as follows.
First, for each $\psi_i$, we obtain a corresponding circuit $C_i$ with bounded fan-in as follows. 
For the first claim, it is straightforward to represent each $\psi_i$ as a corresponding circuit $C_i$ which contains a node $v_{\psi}$ for each instance of a subschema $\psi$ of $\psi_i$ (including $\psi_i$ itself) and for all subschemata $\psi$ and $\chi$ an edge from $v_{\chi}$ to $v_{\psi}$ iff $\chi$ is an immediate subschema of $\psi$ (e.g. $\chi$ is a conjunct of $\psi$, etc.).
Nodes are labeled with the main connective of the corresponding schema (i.e. the label of $v_{\neg \chi}$ is $\neg$, etc.).
Note that $\top$ corresponds to $\land$ with fan-in~$0$. 
Nodes corresponding to variables $X_j$ are not labeled; they are input nodes and ordered according to the indices.
Note that each circuit $C_i$ is a Boolean formula.
To prove the second claim, each $C_i$ is obtained by applying Lemma \ref{lem:BL_to_circuit} to the corresponding Boolean formula obtained for the first claim. 
Then, we combine the circuits $C_i$ into a single circuit $C_{\Lambda}$ such that they share common input gates and order the output gates based on the indices $i$.
In both cases, the initializing function $\pi$ is defined as follows. If $X_i \notin \cI$, then $\pi(i) = 1$ iff the body of the base rule of $X_i$ is $\top$.

The depth of the obtained circuit $C_{\Lambda}$ is $\ordo(\log n)$ if each circuit $C_i$ is obtained by applying Lemma \ref{lem:BL_to_circuit} and otherwise the depth is $d$, since combining circuits does not affect the depth. 
For each head predicate $X_i$, let $n_i$ denote the size of the body of its induction rule.
The size of $C_{\Lambda}$ is $\ordo(n^2)$ if Lemma \ref{lem:BL_to_circuit} was applied to circuits $C_i$, since the size of each $C_i$ is $\ordo(n_i^2)$ and thus the size of $C_\Lambda$ is $\ordo(n^2)$ (since $n_1^2 + \cdots + n_m^2 \leq n^2$).
Otherwise the size of $C_{\Lambda}$ is $\ordo(n)$ since each $C_i$ was linear in the size of the corresponding body $\psi_i$. The corresponding input positions, print positions and attention positions (or attention function) are straightforward to define. Clearly $C_{\Lambda}$ is strongly equivalent to the program $\Lambda$. 
\end{proof}

Below, the \textbf{fan-in of a circuit} refers to the maximum fan-in of its gates.
\begin{theorem}\label{thrm:SF_circuit_to_BNL}
    Given a self-feeding circuit $C$ with size $n$, depth $d$, fan-in $k$ and $m$ edges, we can construct an equivalent{} $\BNL$-program of size $\ordo(n+m)$ and depth $\ordo(k)$ where the computation delay of the program is $\ordo(d)$. Moreover, if $C$ has bounded fan-in, then the size of the program is $\ordo(n)$ and the depth of the program is $\ordo(1)$.
\end{theorem}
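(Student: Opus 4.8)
The plan is to construct the $\BNL$-program $\Lambda$ by turning each gate of $C$ into its own head predicate, and using a one-hot counter to keep the circuit's inputs frozen for $d+1$ rounds at a time so that $C$ is evaluated in full before its outputs are fed back to it. Write $u_1,\dots,u_k$ for the input gates of $C$ in the position order, $o_1,\dots,o_k$ for the output gates, and let $G$ range over the $n$ gates of $C$; introduce a head predicate $Z_G$ for each $G$. If $G=u_j$ with $j\in I$, then $Z_G$ is an input predicate; if $j\in[k]\setminus I$, then $Z_G$ gets the terminal clause $Z_G(0)\colonminus\top$ when $\pi(j)=1$ and $Z_G(0)\colonminus\bot$ otherwise. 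For every non-input gate $G$ with in-neighbours $H_1,\dots,H_\ell$, the iteration clause of $Z_G$ mirrors the label of $G$: $Z_G\colonminus Z_{H_1}\land\dots\land Z_{H_\ell}$ for an $\land$-gate (and $Z_G\colonminus\top$ when $\ell=0$), the same with $\lor$ for an $\lor$-gate (and $Z_G\colonminus\bot$ when $\ell=0$), and $Z_G\colonminus\neg Z_{H_1}$ for a $\neg$-gate; the terminal clauses of non-input, non-constant gate predicates are immaterial and may be set to $\bot$. Add a one-hot counter $T_0,\dots,T_d$ (so $T_i$ is true exactly in the rounds $\equiv i\pmod{d+1}$), and flag each input-gate predicate so that it absorbs the matching output gate only once per cycle: $Z_{u_j}\colonminus(T_d\land Z_{o_j})\lor(\neg T_d\land Z_{u_j})$. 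Finally, for each attention position $j$ add an auxiliary predicate $\widehat{A}_j$ with $\widehat{A}_j\colonminus T_{d-1}\land Z_{u_j}$ and $\widehat{A}_j(0)\colonminus\bot$, take $\{\widehat{A}_j\mid j\in A\}$ as the attention predicates and $\{Z_{u_j}\mid j\in P\}$ as the print predicates. (If $C$ carries an attention function, equip $\Lambda$ with the attention function $\bi\mapsto\{\,n(d+1)\mid n\in a_C(\bi)\,\}$ and drop the $\widehat{A}_j$; the degenerate case $d=0$ is handled directly with delay $1$.)

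The key correctness claim to prove is that, for every $n$, the input-gate predicates $Z_{u_1},\dots,Z_{u_k}$ all carry the string $\bg_n$ throughout the block of rounds $n(d+1),\dots,n(d+1)+d$. The base case $n=0$ is immediate from the round-$0$ values of input predicates and from the terminal clauses encoding $\pi$. For the step, inside such a block the flag $T_d$ is true only in the last round, so the flagged clauses keep $Z_{u_1},\dots,Z_{u_k}$ constant across the block; since $C$ has depth $d$, a routine induction on gate height shows that after $d$ consecutive rounds on a fixed input every gate — in particular every output gate $o_j$ — holds the value it receives in $C$, so in the last round of the block $Z_{o_j}$ equals $f(\bg_n)_j=\bg_{n+1}(j)$, and the flag $T_d$ then copies these values into $Z_{u_1},\dots,Z_{u_k}$ in the first round of the next block. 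Consequently $\widehat{A}_j$ is true precisely in the rounds $n(d+1)+d$ with $\bg_n(j)=1$ (because $T_{d-1}$ fires one round earlier, while $Z_{u_j}$ still reads $\bg_n(j)$), so the output rounds of $\Lambda$ are exactly $\{\,n(d+1)+d\mid n\text{ is an output round of }C\,\}$, and at such a round the print predicates read exactly $\bp_n$. Hence $\Lambda$ and $C$ have the same output sequence on every input, i.e.\ they are asynchronously equivalent, with computation delay $\ordo(d)$ (in the attention-function variant one instead reads the print predicates at the start of each block, round $n(d+1)$, which also covers a possible output in round $0$).

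For the size, the iteration clause of a gate predicate $Z_G$ is a conjunction or disjunction over the in-neighbours of $G$, a single negation, or a constant, hence of size linear in one plus the fan-in of $G$; summed over all gates this is $\ordo(n+m)$. Flagging the $k\le n$ input-gate predicates adds only a constant to each, the counter contributes $\ordo(d)=\ordo(n)$, the attention predicates contribute $\ordo(\abs{A})=\ordo(k)=\ordo(n)$, and every terminal clause has constant size; altogether $\Lambda$ has size $\ordo(n+m)$. When $C$ is fan-in bounded we have $m\le 2n$, so this is $\ordo(n)$.

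The main obstacle I expect is purely the timing bookkeeping: choosing the counter length and the two taps — $T_d$ to trigger the feedback step and $T_{d-1}$ to latch attention and print one round before the inputs switch — so that at each output round the print predicates and the attention predicates simultaneously read the current configuration $\bg_n$ rather than a half-evaluated configuration or the next configuration $\bg_{n+1}$. Only minor extra care is then needed for round $0$, for the attention-function variant, and for checking that constant ($\top$/$\bot$) gates and gates of fan-out greater than one are covered by the uniform description above.
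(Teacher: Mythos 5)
Your proposal is correct, and while the overall architecture matches the paper's (one head predicate per gate, a one-hot counter of length tied to the depth, and a counter-flagged feedback rule copying each output-gate predicate into the corresponding input-gate predicate), the timing mechanism at the heart of the correctness argument is genuinely different. The paper first replaces $C$ by a globally equivalent levelled circuit $C'$ in which every output gate has the same height $\ordo(d)$, and then flags \emph{every} gate predicate with the counter predicate matching its height, so that a gate at height $h$ computes exactly once per cycle, in the round where $T_h$ holds (with output-gate predicates resetting to $\bot$ otherwise); the levelling is what guarantees all output gates deliver their values in the same round. You instead leave all non-input gate predicates unflagged and free-running, freeze only the input-gate predicates for $d+1$ rounds, and prove by induction on gate height that the network \emph{settles}: after $h$ rounds on a fixed input a height-$h$ predicate holds the correct circuit value and keeps it for the rest of the block, so in particular all output gates are simultaneously correct in the block's last round regardless of their individual heights. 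This buys you two simplifications — no levelling preprocessing (and hence no appeal to the construction from the cited paper) and no per-gate flags — at the cost of tolerating transient ``garbage'' values during the settling phase, which is harmless since only the frozen input-gate predicates are ever read by the print, attention, and feedback rules. Your choice to read the print and attention values off the input-gate predicates (which carry $\bg_n$ throughout block $n$) also handles the round-$0$ output cleanly, a point the paper's sketch leaves ambiguous. The size and delay accounting is the same in both proofs and is correct as you state it.
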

\begin{proof}
    We prove the case without attention functions, since the case with attention functions is easy to obtain by Remark \ref{remark: outputs and attention}.
    The proof is heavily based on the proofs of Lemma $4.1$ and Theorem $4.2$ in \cite{ahvonen_journal}, but we give a proof here. We assume that $d > 0$, since the case for $d = 0$ is trivial. First, we modify $C$ so that we obtain a strongly equivalent circuit $C'$ of size $\ordo(n)$ such that each output gate has the same height $\ordo(d)$, see, e.g., Lemma $4.1$ in \cite{ahvonen_journal}. We define a one-hot counter $T_0, \ldots, T_{\mathsf{depth}(C')}$ as defined in Section \ref{section: tools}. We define a head predicate $X_G$ for each gate $G$ in $C'$ as follows. 
    If $G$ is labeled by $\land$ at height $h$, its fan-in is non-zero and $Y_1, \ldots, Y_k$ are the corresponding head predicates of gates that connect to $G$, then 
    $X_G \colonminus (T_{h} \land Y_1 \land \cdots \land Y_k) \lor (\neg T_{h} \land \psi_G)$, where $\psi_G$ is $X_G$ if $G$ is not an output gate and otherwise $\psi_G$ is $\bot$. On the other hand, if the fan-in of $G$ is zero, then $X_G \colonminus (T_{0} \land \top) \lor (\neg T_{0} \land \psi_G)$.
    The cases for $\lor$-gates and $\neg$\,-gates are analogous. Intuitively, $T_h$ acts as a condition to ensure that each $X_G$ is evaluated in the correct time step.
    The input, print and attention predicates are the predicates corresponding to the output gates in input, print and attention positions, respectively. Let $I_i$ and $O_i$ be the input gate and the output gate in the $i$th positions, respectively.
    We define $X_{I_i} \colonminus (T_{0} \land X_{O_i}) \lor (\neg T_{0} \land X_{I_i})$. 
    If $i$ is in an auxiliary position and $\pi$ is the initializing function of $C$, then we define $X_{O_i}(0) \colonminus \top$ if $\pi(i) = 1$ and $X_{O_i}(0) \colonminus \bot$ if $\pi(i) = 0$. For all predicates corresponding to non-output gates, we may define the base rules as we please, e.g. $\bot$ for all.

    The constructed program is clearly equivalent{} to $C$. Moreover, the computation delay is $\ordo(d)$ (or more precisely $\mathsf{depth}(C') + 1$) since it takes $\mathsf{depth}(C') + 1$ rounds to ``simulate'' each round of $C$. 
    The depth of the program is clearly $\ordo(k)$ if $C$ does not have bounded fan-in and $\ordo(1)$ if $C$ does have bounded fan-in.
    The size is clearly $\ordo(n + m)$ if $C$ does not have bounded fan-in and $\ordo(n)$ if $C$ has bounded fan-in. This is because for each gate $G$, the size of its corresponding base rule is $\ordo(1)$ and the size of its corresponding induction rule is either constant or linear in the fan-in of $G$. Thus, the number of edges adds to the size of the program, but a bound for the fan-in ensures that the number of edges stays linear in the size of the circuit.
\end{proof}

\section{Arithmetic with BNL}\label{Arithmetic with BL}

In this section, we show how to carry out integer addition (see Lemma \ref{lem:add_BL}) and multiplication (see Lemma \ref{lem:mul_BL}) in Boolean network logic in parallel. We then extend this demonstration to floating-point arithmetic, including floating-point polynomials and piecewise polynomial functions, see Theorems \ref{fp-addition}, \ref{BL multiplication} and \ref{piecewise-polynomial_simulation}.

Informally, the idea is to split both addition and multiplication into simple steps that are executed in parallel.
We will show that we can simulate integer arithmetic (respectively, floating-point arithmetic) by programs whose size is polynomial in the size of the integer format where the integers are represented (respectively, in the size of the floating-point format).
We also analyze the computation times of the constructed programs. The computation time is polylogarithmic in the size of the integers (and resp. in the size of the floating-point format) and sometimes even a constant. Ultimately, the same applies to floating-point piecewise polynomial functions.

\subsection{Integer arithmetic}\label{Integer arithmetic}

Let $\beta, p \in \Z_+$ and $\beta \geq 2$.
An \textbf{integer format} over $p$ and $\beta$ is the set $\cZ(p, \beta)$ of strings of the form $\pm d_{1} \cdots d_{p} \in \{+, -\} \times [0; \beta - 1]^{p}$ (excluding $-0 \cdots 0$), where $\beta$ is called the \textbf{base} and $p$ the \textbf{precision}.
We exclude the string $-0 \cdots 0$ from integer formats for simplicity, but we could also include it and obtain essentially the same results.
When $p$ and $\beta$ are clear from the context, we may omit them and simply write $\cZ$ instead of $\cZ(p, \beta)$. 
The semantics of a string of the above form is naturally the integer that the string represents in base $\beta$. In later parts, we may also identify a string of the above form with its corresponding integer for brevity when the format is clear from the context. Note that the ordinary linear ordering of integers induces a linear order in each integer format $\cZ$.

Given an integer $\pm d_1\cdots d_p$ in a format $\cZ(p, \beta)$, by \textbf{leading zeros} we mean a prefix $d_1 \cdots d_{m} = 0^m$ such that either $d_{m+1} \neq 0$ or $m = p$; we call $d_i$ a \textbf{leading zero} for each $i = 1, \dots, m$. 
For each $\bz \in \cZ(p, \beta)$ and each $p' \geq p$ there is trivially some $\bz' \in \cZ(p', \beta)$ such that $\bz$ and $\bz'$ represent the same integer; $\bz'$ is obtained simply by adding leading zeros to $\bz$.
On the other hand, if $\bz \in \cZ(p,\beta)$ has $m$ leading zeros, then for each $p'$ such that $p-m \leq p' < p$ there exists some $\bz' \in \cZ(p', \beta)$ such that $\bz$ and $\bz'$ represent the same integer, and $\bz'$ obtained simply by omitting enough leading zeros from $\bz$.

We next define how a \emph{halting} $\BNL$-program simulates integer functions with arbitrary precision $p$ and base $\beta \in \Z_+$, $\beta\geq 2$. 
Informally, we define an embedding from $\cZ(p, \beta)$ to bit strings of length $1+p\beta$; the strings are split into substrings of length $\beta$, where exactly one bit in each substring is $1$ and the others are $0$.
Formally, let $\cZ(p,\beta)$ be an integer format, let $\bz = \pm d_1 \cdots d_p \in \cZ$, let $\bs_{0} \in \{0, 1\}$ be a bit, and let $\bs_1, \ldots, \bs_p \in \{0,1\}^{\beta}$ be \textbf{one-hots}, i.e. bit strings with exactly one $1$. We say that $\bs \colonequals \bs_0 \bs_1 \cdots \bs_p$ \textbf{corresponds} to 
$\bz$ (w.r.t. $\cZ$)
if $\bs_{0} = 1$ exactly when the sign of $\bz$ is $+$ and for every $i \in [p]$, we have $\bs_i(d_i+1) = 1$
(and other values in $\bs_i$ are zero). We may also call $\bs$ a \textbf{one-hot representation} of 
the integer $\bz$. For example, if $\beta = 5$, then $1 \cdot 00100 \cdot 01000 \cdot 00001 \in \{0,1\}^{1 +  3 \cdot 5}$ corresponds to $+ 2 1 4 \in \cZ(3, 5)$. 
We likewise say that a bit string $\bs$ \textbf{corresponds} to (or is a \textbf{one-hot representation} of) a sequence $(\bz_{1}, \dots, \bz_{k}) \in \cZ^k$ of integers (w.r.t. $\cZ$) if $\bs$ is the concatenation of the bit strings that correspond to $\bz_{1}, \dots, \bz_{k}$ (w.r.t. $\cZ$) from left to right.

Using binary one-hot representations, we can represent integers in $\BNL$ by assigning each bit with a head predicate that is true if and only if the bit is $1$.

\begin{definition}\label{definition: simulation}
    Let $\cZ_1 \colonequals \cZ(p, \beta)$ and $\cZ_2 \colonequals \cZ(p', \beta)$ be integer formats. We say that a halting $\BNL$-program $\Lambda$ \textbf{simulates} (or computes) a function $f \colon \cZ_{1}^{\ell} \to \cZ_{2}^{k}$ if whenever a bit string $\bi$ corresponds to a sequence $(\bz_{1}, \ldots, \bz_{\ell}) \in \cZ^\ell_{1}$, the output $\Lambda(\bi)$ also corresponds to $f(\bz_{1}, \dots, \bz_{\ell}) \in \cZ_{2}^{k}$.
\end{definition}
Note that in the definition of simulation, the ``incorrect inputs'', i.e., the bit strings that do not correspond to a sequence of integers, are disregarded. However, we could define that the simulating program does not produce an output with an incorrect input, which in a sense corresponds to ``error handling''.

We start with an auxiliary result for comparing two $p$-length integers in an arbitrary base $\beta$. A comparison function is a function $f \colon \cZ^{2} \to \cZ$ over an integer format $\cZ$ such that for all $\bx, \by \in \cZ$: $f(\bx, \by) = + 0 \cdots 01$ if $\bx > \by$, and $f(\bx, \by) = + 0 \cdots 0$ otherwise.

\begin{lemma}\label{Integer comparison}
    Let $\cZ(p, \beta)$ be an integer format.
    Comparing two numbers in $\cZ$ can be simulated with a (halting) $\BNL$-program of size $\ordo(p \beta^{2} + p^{2})$ and computation time $2$ (where the program has $\ordo(p\beta)$ head predicates).
\end{lemma}
\begin{proof}
    Let $(\bx, \by) \in \cZ^2$ be an input for the comparison function $\cZ^{2} \to \cZ$. We assume that both numbers $\bx$ and $\by$ are positive since we can easily cover each case using conditional rules. We assign a head predicate $X_{i, d}$ for the digit $d \in [0;\beta-1]$ in position $i \in [p]$ (counting from left to right) of the first number $\bx$, and likewise predicates $Y_{i, d}$ for the second number $\by$. These are the input predicates with induction rules $X_{i, d} \colonminus X_{i, d}$ and $Y_{i, d} \colonminus Y_{i, d}$. This encoding means that if the $i$th digit (from left to right) of $\bx$ is $d$, then $X_{i, d}$ is true and $X_{i, d'}$ is false for each $d' \neq d$, and likewise for $\by$ and predicates~$Y_{i, d}$.

    To simulate the comparison function, we define a subprogram with auxiliary predicates $H_{i}$ and $L_{i}$ where $i \in [p]$. The intuition is that a predicate $H_{i}$ (resp. $L_i$) becomes true if $\bx$ has a higher (resp. lower) digit in position $i$ than $\by$. Each of these predicates has a base rule with the body $\bot$ and the induction rules are
    \[
    H_{i} \colonminus \bigvee\limits_{\substack{d, d' \in [0;\beta-1],\, d > d'}} X_{i, d} \land Y_{i, d'} \quad \text{and} \quad L_{i} \colonminus \bigvee\limits_{\substack{d, d' \in [0;\beta-1],\, d < d'}} X_{i, d} \land Y_{i, d'}.
    \]
    These head predicates can be computed in a single round and the total size of these rules is $\ordo(p\beta^2)$. 
    With these head predicates we can easily define a subprogram consisting of a single head predicate $C$ which is true iff $\bx > \by$ and otherwise false as follows: 
    \[
    C (0) \colonminus \bot, \qquad
    C \colonminus \bigwedge\limits_{i \in [p]} \Big( L_{i} \rightarrow \bigvee\limits_{\substack{j \in [p],\, j < i}} H_{j} \Big) \land \bigvee_{i \in [p]} H_{i}.
    \]
    The total size of the rules for $C$ is $\ordo(p^{2})$ and computing it from predicates $H_i$, $L_i$ requires a single iteration round.

    We also define the print predicates $P_0$ and $P_{i, k}$, where $i \in [p]$ and $k \in [0;\beta-1]$ with the following induction rules as follows. 
    The predicate $P_0$ encodes the sign which is always $+$, thus we define $P_0 \colonminus \top$. 
    Next we define auxiliary formulae: for each $k \in [0; \beta-1]$, we let $\theta_k$ denote $\bot$ if $k \in [\beta-1]$ and $\top$ for $k = 0$. 
    Now, we can define the rest of the print predicates: for each $i \in [p-1]$ and $k \in [0; \beta-1]$ we define $P_{i,k} \colonminus \theta_k$, and for $i = p$ we define $P_{p,0} \colonminus \neg C$, $P_{p, 1} \colonminus C$ and $P_{p,k} \colonminus \bot$ for all $k = 2, \dots, \beta-1$.
    By combining these constructed subprograms (using methods in Section \ref{section: tools}), we obtain a complete program of size $\ordo(p\beta^2 + p^2)$ and computation time $2$,
    which contains $\ordo(p \beta)$ distinct schema variables.
\end{proof}

\begin{remark}\label{remark:integer_comparison}
    Note that we can always compare two integers $\bx \in \cZ(p, \beta)$ and $\by \in \cZ(p', \beta)$ by letting $r = \max\{p, p'\}$, finding $\bx', \by' \in \cZ(r, \beta)$ such that $\bx$ and $\bx'$ (resp. $\by$ and $\by'$) represent the same integer and then comparing $\bx'$ and $\by'$ in $\cZ(r, \beta)$.
    We may thus always compare two integers in separate formats without specifying the details.
\end{remark}

\subsubsection{Parallel addition}\label{paral_add}

In this section, we construct a $\BNL$-program that simulates integer addition via a parallel algorithm. The algorithm is mostly well known and is based on how integer addition is computed in the circuit complexity class $\mathsf{NC}^1$ called Nick's class, i.e. we parallelize the textbook method (sometimes called the long addition algorithm or the carry-look ahead addition). Here, the main difference to integer addition in Nick's class (which is carried out in base $2$) is that we generalize the algorithm for arbitrary bases.

To illustrate our integer addition simulation method, consider the following example where we add $\bx = + 614$ and $\by = + 187$ in base $10$. Let $c_1, c_2$ and $c_3$ denote the \textbf{carry-over digits} and let $s_0$, $s_1$, $s_2$ and $s_3$ denote the digits of the sum $\bx + \by$ from left to right. We have
\[
c_3 = \left\lfloor \frac{4 + 7}{10} \right\rfloor = 1, \quad c_2 = \left\lfloor \frac{1 + 8 + c_3}{10} \right\rfloor = 1, \quad c_1 = \left\lfloor \frac{6 + 1 + c_2}{10} \right\rfloor = 0.
\]
Thus $s_3 = 1$ because $1 \equiv 4 + 7(\, \mathrm{mod}\, 10)$, $s_2 = 0$ since $0 \equiv 1+ 8 + 1(\, \mathrm{mod}\, 10)$, $s_1 = 8$ because $8 \equiv 6 + 1 + 1 (\, \mathrm{mod}\, 10)$ and $s_0 = c_3 = 0$.
Therefore $\bx + \by = + 0801$, as wanted. As we can see, in order to know that $c_2 = 1$ we have to first check if $c_3 = 1$. 
Generally, let $\bx = + x_1 \cdots x_p$ and $\by = + y_1 \cdots y_p$ be two positive numbers in a format $\cZ(p,\beta)$. The carry-over digits of $\bx + \by$ can then be computed in a similar fashion: $c_p = \left\lfloor \frac{x_p + y_p }{\beta} \right\rfloor$ and
$c_{i} = \left\lfloor \frac{x_{i} + y_{i} + c_{i+1} }{\beta} \right\rfloor$ for all $i \in [p-1]$. 
Now,
$\bx + \by = (x_1 \cdots x_p) + (y_1 \cdots y_p) = s_{0} \cdots s_{p} \eqqcolon \bs$,
where for each $j \in [p]$, $c_j$ is the carry-over digit and $s_j \equiv (x_{j} + y_{j} + c_{j+1})(\, \mathrm{mod}\, \beta)$ (if $j = p$, then $c_{j+1}$ is omitted), and $s_{0} = c_1$.
In other words, we have to check if the carry from the previous position has been propagated forward. This method is easy to generalize for other signs of $\bx$ and $\by$.

Now we are ready to prove the following lemma. In the statement below the \textbf{addition of two integers} $\bx$ and $\by$ in the integer format $\cZ(p, \beta)$ refers to a saturating sum $+ \colon \cZ^2 \to \cZ$ where we take a precise sum of $\bx$ and $\by$ and then map it to the closest number in $\cZ$. 
More precisely, with a \emph{precise sum} we mean the regular sum of the integers that $\bx$ and $\by$ represent over the ring of all integers. 
Note that the precise sum of $\bx$ and $\by$ might not be representable in the finite format $\cZ$ since it might exceed the maximum or minimum integer representable in $\cZ$, in which case any accurate representation of the precise sum in base $\beta$ has more than $p$ digits. 
Such numbers are mapped to either the greatest or lowest integer in $\cZ$ depending on the sign. Note that depending on the context, the notation $+$ may have different meanings throughout the paper, but we trust the meaning to always be clear from the context.
\begin{lemma}\label{lem:add_BL}
Let $\cZ(p, \beta)$ be an integer format.
The addition of two numbers in $\cZ(p, \beta)$ can be simulated with a (halting) $\BNL$-program of size $\ordo(p^3 + p\beta^2)$ and computation time $\ordo(1)$ (where the program has $\ordo(p \beta)$ head predicates).
\end{lemma}

\begin{proof}
We consider the case where both numbers are positive; in the end of the proof we give a short sketch of how the construction can be generalized for other signs. 
Let $\bx = + x_1 \cdots x_p$ and $\by = + y_1 \cdots y_p$ be integers in $\cZ(p, \beta)$. 
We encode the input integers $\bx$ and $\by$ to input predicates $X_{j, m}$ and $Y_{j,m}$ respectively where $j \in [p]$ and $m \in [0;\beta-1]$ using one-hot representations (and ignoring the signs for convenience). We define the induction rules $X_{j, m} \colonminus X_{j,m}$ and $Y_{j,m} \colonminus Y_{j,m}$. 
The body of the base rule for each head predicate defined henceforth is $\bot$.
The precise sum of $\bx$ and $\by$ is encoded by the predicates $S_{i, k}$, where $i \in [0;p]$ and $k \in [0;\beta - 1]$, for which we define the induction rules later.

For each $i \in [p]$ and $b \in \{0,1\}$, we define the head predicates $O_{i, b}$ and $C_{i}$ with the following intuitions.
Each predicate $O_{i, b}$ becomes true if the sum of the digits in position $i$ results in the carry-over digit $1$ \emph{presuming that the sum of digits in position} $i + 1$ has resulted in the carry-over digit $b$. Each predicate $C_{i}$ becomes true if the carry-over digit created in position $i$ is $1$, taking into account the whole sum.
We thus define the following induction rules:
$O_{p,1} \colonminus \bot$ and otherwise we define
\[
O_{i,b} \colonminus 
\bigvee_{n+m +b \geq \beta} 
(X_{i,n} \land Y_{i,m}), 
\quad\quad\quad 
C_i \colonminus \bigvee_{i \leq j \leq p} \Big( O_{j,0} \land \bigwedge_{i \leq k < j} \Big( O_{k,1} \Big) \Big).
\]

Now we can write rules for variables $S_{i, k}$, where $i \in [0;p]$ and $k \in [0;\beta - 1]$. For $i = p$ and $k \in [0;\beta-1]$, we have $S_{p,k} \colonminus \bigvee_{n + m \equiv k (\mathrm{mod}\, \beta)} ( X_{p, n} \land Y_{p, m} )$ and for $i \in [p-1]$, we have
\[
S_{i,k} \colonminus \bigvee_{n + m \equiv k (\mathrm{mod}\, \beta)} ( X_{i, n} \land Y_{i, m} \land \neg C_{i+1}) \lor \bigvee_{n + m +1\equiv k (\mathrm{mod}\, \beta)} ( X_{i, n} \land Y_{i, m} \land C_{i+1}).
\]
For $i = 0$, we have
$S_{0,0} \colonminus \neg C_1$, $S_{0,1} \colonminus C_1$, and $S_{0, k} \colonminus \bot$
for all $k \in [0;\beta-1] \setminus \{0,1\}$.
After three iteration rounds, the values of predicates $S_{i,k}$ have been computed correctly. 

Now, we can easily define print predicates $P_{i, k}$, where $i \in [p]$, $k \in [0;\beta-1]$ (ignoring the sign for convenience), which in one step calculate the correct output integer depending on the truth value of $S_{0,1}$ (after the values of $S_{i,k}$ have been computed correctly) as follows.
If the predicate $S_{0,1}$ is true after three iteration rounds, then we output the greatest integer in $\cZ$. Otherwise, we output the integer encoded by $S_{i, k}$, where $i \in [p]$ and $k \in [0; \beta-1]$.
More precisely, the rules are defined as follows: 
$P_{i,k} \colonminus (S_{0,1} \land \theta_k) \lor (\neg S_{0,1} \land S_{i,k})$
where $\theta_k \colonequals \top$ if $k = \beta - 1$ and otherwise $\theta_k \colonequals \bot$.
These one-step subprograms (the first consisting of predicates $O_{i, b}$, the second consisting of predicates $C_{i}$, the third consisting of predicates $S_{i, k}$, and the fourth consisting of print predicates $P_{i,k}$) can be combined into the complete program correctly without increasing the size and time complexity by using if-else statements explained in Section \ref{section: tools}.
We have $\ordo(p)$ predicates $O_{i,b}$ of size $\ordo(\beta^{2})$, $\ordo(p)$ predicates $C_{i}$ of size $\ordo(p^{2})$, $\ordo(p \beta)$ predicates $S_{i, k}$ of size $\ordo(\beta)$ and $\ordo(p \beta)$ predicates $P_{i,k}$ of size $\ordo(1)$. Thus, the total size is $\ordo(p \beta^{2} + p^3 + p\beta^2 + p \beta) = \ordo(p^3 + p\beta^2)$ for the whole program.
The program also contains $\ordo(p \beta)$ distinct schema variables.

If $\bx$ and $\by$ both have negative signs, we can use the same addition algorithm; the output simply includes a negative sign. If $\bx$ and $\by$ have opposite signs, we need to use a subtraction algorithm instead. First, we need to compare $\bx$ and $\by$ with signs omitted; in other words, we compare which number has a greater absolute value (this also determines the sign of the output). As stated before in Lemma \ref{Integer comparison}, this requires $\ordo(p \beta^{2} + p^{2})$ space and $2$ iteration rounds. Then, we modify the addition algorithm above in the following way. Instead of adding digits of $\bx$ and $\by$ together, we subtract them modulo $\beta$; the digits of the number with the smaller absolute value are subtracted from the digits of the number with the greater absolute value. If the subtraction of two digits goes past $0$, it results in a negative carry $-1$ that is added to the next subtraction. Otherwise the algorithm works in the same exact way, and thus does not increase the size and time complexity of the program.
\end{proof}
\begin{remark}\label{remark: precise add}
It is always possible to simulate the sum (or resp. the precise sum) of integers from two (possibly different) formats $\cZ(p, \beta)$ and $\cZ(p', \beta)$ by letting $r = \max\{p, p'\}$, representing both numbers in $\cZ(r, \beta)$ (resp. in $\cZ(r+1, \beta)$) and then adding them. 
We may thus always add two integers in separate formats without specifying the details.
\end{remark}

\subsubsection{Parallel multiplication}\label{paral_mul}

In this section, we introduce a parallel multiplication algorithm and show that it can be simulated in $\BNL$. The parallelization method is mostly well known and is based on splitting the multiplication into simple addition tasks. 
As with addition, in the statement below the \textbf{product of two integers} $\bx$ and $\by$ in the integer format $\cZ(p, \beta)$ refers to a saturating product $\cdot \colon \cZ^2 \to \cZ$, where we take a \emph{precise product} of $\bx$ and $\by$ and then map it to the closest number in $\cZ$. Just like we explained in parallel addition, a precise product means the regular multiplication of the integers that $\bx$ and $\by$ represent over the ring of all integers. As we noted with addition, the precise product can lead to a number that is not representable in the format $\cZ$ and depending on the sign such a number is mapped to either the greatest or lowest integer in $\cZ$. 
Respectively as with the parallel sum, we note that the notation $\cdot$ may have different meanings throughout the paper, but the meaning should always be clear from the context.
\begin{lemma}\label{lem:mul_BL}
Let $\cZ(p, \beta)$ be an integer format.
The product of two numbers in the format $\cZ$ can be simulated with a (halting) $\BNL$-program of size 
$\ordo(p^4 + p^3\beta + p^{2}\beta^2 + p\beta^3)$
and computation time $\ordo(\log(p) + \log(\beta))$ (where the program has $\ordo(p^{2} \beta + p \beta^{2})$ head predicates).
\end{lemma}

\begin{proof}
Suppose that we have two integers in the format $\cZ(p, \beta)$: a multiplicand $\bx$ and a multiplier $\by = \pm y_1 \cdots y_p$. Informally, the product of $\bx$ and $\by$ is computed in the following two steps.
\begin{enumerate}
    \item 
    We run $p$ different precise multiplications in parallel, where the multiplicand $\bx$ is multiplied by $y_j0 \cdots 0$ with $p-j$ zeros on the right (for each $j \in [p]$). Moreover, every such multiplication is computed in parallel using the parallel addition algorithm. 
    As a result, we obtain $p$ different numbers of length $2p$.
    \item 
    We add the numbers from the first step by using the parallel addition algorithm. 
\end{enumerate} 

More formally, to compute in parallel, we use the ``binary tree strategy'', where the bottom of the tree consists of the multiplications in step 1, and the higher levels of the tree consist of adding these products together in parallel. This can be formally constructed as follows. Let $d = \lceil \log (p) \rceil$. For each $i \in [0;d]$ and $j \in [2^{d-i}]$ we compute integers $\bz_{i,j}$. Intuitively, picture a binary tree with $d+1$ levels where each level $i$ (starting with level $i=0$ at the bottom) consists of integers $\bz_{i,j}$ and the number of integers on a level decreases logarithmically as we move up the tree. Each integer $\bz_{i,j}$ is calculated based on the level $i-1$ as follows. The integer $\bz_{0,j}$ is obtained as the product of $x$ and $y_{j} 0^{p-j}$, and each integer $\bz_{i+1, j}$ is obtained as the sum of integers $\bz_{i,2j-1}$ and $\bz_{i, 2j}$. Note that if $p < 2^{d}$, then for $i > p$ we can simply set $\bz_{0,j}$ to $0$. The sign of the final product is easy to obtain from the signs of $\bx$~and $\by$.

We compute each product $\bz_{0, j}$ in parallel with a different method. First, we compute the integers $2 \bx, \dots, (\beta-1) \bx$ as follows. First we let $0 \bx$ denote the representation of $0$ in the format and we let $1 \bx$ denote $\bx$. Then for each $i = 2, \dots, \beta-1$, we define that $i \bx$ is the precise sum of $k \bx$ and $(i-k) \bx$, where $k$ is the greatest power of $2$ such that $k < i$. Now, we define that $\bz_{0, j}$ is the integer $(y_j \bx) \cdot 0^{p-j}$; we have to add $p-j$ zeros to the product to account for the position of $y_{j}$, since $\bz_{0, j}$ is the product not of $\bx$ and $y_{j}$ but of $\bx$ and $y_{j}0^{p-j}$.

The described algorithm is easy (but relatively notation-heavy) to implement with $\BNL$ so we omit the implementation. 
To obtain the complete program, we use one-hot counters and if-else statements (introduced in Section \ref{section: tools}) to ``time'' the program properly without affecting the size and time. Next, we analyze the time and space complexity of our parallel algorithm.
First, each $\bz_{0, j}$ can be computed with parallel addition as described above with integers $i \bx$, which means that we run $\beta-2$ additions in parallel in $\ordo(\log(\beta))$ levels. Since each addition requires $\ordo(1)$ time and runs in parallel, we need $\ordo(\log(\beta))$ time to compute every $\bz_{0, j}$. Likewise, since the number of integers $i \bx$ is $\ordo(\beta)$ and each addition needs size $\ordo(p^3 + p\beta^2)$, the total size required is $\ordo(p^3\beta + p\beta^3)$.

Next, for the other integers $\bz_{i, j}$, where $i > 0$, the total size and time are calculated as follows. We have $\lceil \log(p) \rceil$ layers, each performing an addition algorithm that lasts for $\ordo(1)$ rounds, which adds up to $\ordo(\log(p))$. On the other hand, we have $\ordo(p)$ additions of size $\ordo(p^3 + p\beta^2)$ each, which gives us a size increase of $\ordo(p^4 + p^{2}\beta^2)$.

The total amount of time required is thus $\ordo( \log(\beta)) + \ordo(\log(p)) = \ordo(\log(p) + \log(\beta))$ and the amount of space required is $\ordo(p^4 + p^3\beta + p^{2}\beta^2 + p\beta^3)$. The program requires $\ordo(p^{2} \beta + p \beta^{2})$ distinct schema variables, because we have $\ordo(\beta)$ additions required to obtain integers $\bz_{0, j}$ and another $\ordo(p)$ additions to obtain $\bz_{d, 1}$, and each addition requires $\ordo(p \beta)$ distinct schema variables.
\end{proof}
\begin{remark}\label{remark: precise mul}
It is always possible to simulate multiplication (resp. precise multiplication) of two numbers in two (possibly different) formats $\cZ(p, \beta)$ and $\cZ(p', \beta)$ by letting $r = \max\{p, p'\}$, representing both numbers in $\cZ(r, \beta)$ (resp. $\cZ(2r, \beta)$) and then multiplying them. We may thus always multiply two integers from any two formats without specifying the details.
\end{remark}

\subsection{Floating-point arithmetic}\label{fpa}

In this section, we consider floating-point arithmetic, including polynomials and piecewise polynomial functions. We show that $\BNL$-programs can simulate these in polynomial space and in 
polylogarithmic time, and some simple arithmetic operations even in constant time.

\subsubsection{Floating-point formats}\label{section: float formats}

Floating-point numbers are a syntactical way for computers to represent rational numbers over a large range with the same relative accuracy for numbers both close to and far from zero. This is intuitively achieved by defining numbers via two integers. The first integer defines the sign of the number and a sequence of so-called significant digits in the number's representation in the chosen base (e.g. the binary or decimal representation). The second integer scales the first integer; intuitively, it expresses how far the number is from zero.
Floating-point numbers also allow the approximation of real numbers with some obvious rounding errors.
For example, the decimal representation of $\pi$ is $3.14159 26535\ldots$, but it may be approximated as a floating-point number $0.314159 \times 10^1$; here the first integer expressing the significant digits is $314159$ and the second integer expressing the distance from zero is the exponent~$1$ (numbers with a negative exponent are even closer to zero).

Now, we formally define floating-point numbers. Let $p, q, \beta \in \Z_+$ where $\beta \geq 2$.
A \textbf{floating-point number} over $p$, $q$ and $\beta$ is a pair $(\bs, \be)$, where $\bs = \pm d_1\cdots d_p \in \cZ(p, \beta)$ and $\be = \pm e_1 \cdots e_q \in \cZ(q, \beta)$. 
We let $\bbf$ denote the string $\pm 0.d_1 \cdots d_p$, where the sign is the sign of $\bs$.
The semantics of a floating-point number $(\bs,\be)$ is the (rational) number represented by the string $\bbf \times \beta^{\be}$ in the natural way.
For such a number, we call $\bs$ the \textbf{significand}, $\bbf$ the \textbf{fraction}, $p$ the \textbf{significand precision} or \textbf{fraction precision}, the dot between $0$ and $d_1$ in $\bbf$ the \textbf{radix point}, $\be$ the \textbf{exponent}, $q$ the \textbf{exponent precision} and $\beta$ the \textbf{base} (or \textbf{radix}).
Multiple floating-point numbers over $p$, $q$ and $\beta$ may represent the same rational number; in practice, a one-to-one correspondence is obtained via a so-called normalized form.
We call $(\bs, \be)$ \textbf{normalized} (w.r.t. $\beta$) if $d_1 \neq 0$ or if $\bs = +0^p$ and $\be$ is the smallest integer representable in $\cZ(q, \beta)$. 
Some floating-point numbers over $p$, $q$ and $\beta$ may represent a number very close to zero that has no corresponding normalized floating-point number over $p$, $q$ and $\beta$; in practice, such numbers can be represented in subnormalized form.
We call $(\bs, \be)$ \textbf{subnormalized} (or \textbf{de-normalized})
(w.r.t. $\beta$) 
if $d_1 = 0$, $\bs\neq +0^p$ and $\be$ is the smallest integer representable in $\cZ(q, \beta)$.

A \textbf{floating-point format over $p$, $q$ and $\beta$} is the set $\cF(p, q, \beta)$ of all normalized and subnormalized floating-point numbers 
over $p$, $q$ and $\beta$. 
Moreover, a \textbf{normalized floating-point format over $p$, $q$ and $\beta$} is the set $\cF_{\mathsf{n}}(p,q,\beta)$ of all floating-point numbers in $\cF(p, q, \beta)$ that are normalized (w.r.t. $\beta$).
An \textbf{unrestricted floating-point format over $p$, $q$ and $\beta$} is the set $\cF_{\mathsf{u}}(p,q,\beta)$ of all floating-point numbers over $p$, $q$ and $\beta$.
If $p, q$ and $\beta$ are clear from the context, we may omit them and simply write $\cF$ (resp. $\cF_{\mathsf{n}}$ and $\cF_{\mathsf{u}}$).
Note that each number that is representable in $\cF$ or $\cF_{\mathsf{n}}$ has a unique corresponding representation in that system, i.e., in $\cF$ and $\cF_{\mathsf{n}}$ the semantics and syntax match one-to-one.
On the other hand, in $\cF_{\mathsf{u}}$ a number might have multiple corresponding representations. 
Unrestricted formats $\cF_{\mathsf{u}}$ will be mainly used as an implementation tool, while arithmetic operations for floats will be defined such that both inputs and outputs belong to a format $\cF$ (respectively, to a normalized format $\cF_{\mathsf{n}}$).
In the rest of the paper, if a format or normalized format is clear from the context, then we may identify a floating-point number in the format with the rational number that it represents.
Note that since floating-point numbers represent rational numbers, the ordinary linear ordering of rational numbers induces a linear order in each floating-point format $\cF$ and each normalized floating-point format $\cF_{\mathsf{n}}$.

In practice, it is often assumed that the inputs and outputs of floating-point operations are normalized, as this makes the algorithms faster, more uniform and offers simple bounds for errors, 
see \cite{Knuth_art, DBLP:books/sp/18/MBDJ2018}. 
Subnormalized numbers are used to obtain more accurate results near zero (e.g. IEEE 754 standard).
For example, consider the subnormalized number $0.01 \times 10^{-99}$ in $\cF(2,2,10)$; it has the normalized form $0.10 \times 10^{-100}$ in the format $\cF(2,3,10)$, but no normalized form in $\cF(2,2,10)$.
Normalized formats $\cF_{\mathsf{n}}$ do not include subnormalized numbers, resulting in a larger gap around zero.
Some floating-point formats in practice also include special numbers such as $+ \infty$, $- \infty$ and NaN (not a number); we do not include them in our formats and their inclusion would not change our results much, but we discuss this in more detail in Section \ref{section:remarks}.

\subsubsection{Representing floating-point numbers in binary}\label{one-hot}

Our way of representing floating-point numbers of arbitrary base in binary is based on international standards (e.g. IEEE 754). Informally, we define an embedding $\sigma$ from floating-point numbers in a format $\cF(p,q,\beta)$ to binary strings, where for a given floating-point number $F \in \cF$, $\sigma(F)$ represents $F$ in binary as follows. The
the first two bits encode the signs of the exponent and fraction.
The next $q \beta$ bits encode the exponent (minus the sign) in base $\beta$, and the last $p \beta$ bits encode the significand (minus the sign) in base $\beta$. The embedding naturally extends for sequences of floats.

More formally, let $\cF(p,q,\beta)$ be a floating-point format and let $\fF \in \{\cF_{\mathsf{u}}, \cF, \cF_{\mathsf{n}}\}$. Let
$(\bs, \be)\in \fF$ where $\bs = \pm d_1 \cdots d_p$ and $\be = \pm e_{1} \cdots e_{q}$. Let $\bp_1, \bp_2 \in \{0,1\}$ and $\bb_1, \ldots, \bb_q, \bb'_1, \ldots, \bb'_p \in \{0,1\}^{\beta}$. We say that $\bb \colonequals \bp_1 \bp_2 \bb_1 \cdots \bb_q \bb'_1 \cdots \bb'_p$ \textbf{corresponds} to $(\bs, \be)$ w.r.t. $\fF$ 
(or $\bs$ is a \textbf{one-hot representation} of $(\bs, \be)$ w.r.t. $\fF$) 
if $\bp_1 \bb_1 \cdots \bb_q$ corresponds to $\be$ and $\bp_2 \bb'_1 \cdots \bb'_p$ corresponds to $\bs$.
For example, the floating-point number $-0.2001 \times 3^{+120}$ in the format $\cF(4,3, 3)$ has the following corresponding bit string w.r.t. $\cF$ or $\cF_{\mathsf{n}}$:
\[
\underbrace{1}_{\bp_1}\cdot \underbrace{0}_{\bp_2} \cdot \underbrace{010 \cdot 001 \cdot 100}_{\bb_{1} \bb_{2} \bb_{3}} \cdot \underbrace{001 \cdot 100 \cdot 100 \cdot 010}_{\bb'_{1} \bb'_{2} \bb'_{3} \bb'_{4}}.
\] 
We say that a bit string $\bb$ \textbf{corresponds} to a sequence 
$(F_{1}, \dots, F_{k}) \in \fF^{k}$
w.r.t. $\fF$ if $\bb$ is a concatenation of bit strings that correspond to the floating-point numbers
$F_{1}, \dots, F_{k}$ 
w.r.t. $\fF$ in the same order from left to right. 
Each floating-point number in a format $\fF \in \{\cF, \cF_{\mathsf{n}}\}$ has a unique one-hot representation w.r.t. $\fF$, but possibly various one-hot representations w.r.t. $\cF_{\mathsf{u}}$.
We let $\sigma \colon \fF^* \to \{0,1\}^*$ denote the embedding where $\sigma(F_1, \ldots, F_n)$ is the one-hot representation of $(F_1, \dots, F_n)$ and we call $\sigma$ the \textbf{embedding of $\fF$ into binary}.

It is easy to generalize correspondence to mixed sequences of floating-point numbers and integers.
Let $k \in \Z_+$ and for each $i \in [k]$ let $\cS_i$ be either an integer format, a floating-point format, a normalized floating-point format or unrestricted floating-point format. Let $\fS = (\cS_1, \dots, \cS_k)$. We say that a bit string $\bb$ \textbf{corresponds} to a sequence $(x_1, \dots, x_k) \in \cS_1 \times \cdots \times \cS_k$ (w.r.t. $\fS$) if $\bb$ is a concatenation of bit strings $\bb_1, \dots, \bb_k$ such that for each $i \in [k]$ the bit string $\bb_i$ corresponds to $x_i$ (w.r.t. $\cS_i$).

\begin{definition}
    Let $\fS = (\cS_{1}, \ldots, \cS_{k})$ and $\fS' = (\cS_{1}', \ldots, \cS_{\ell}')$ be sequences of formats. We say that a halting $\BNL$-program $\Lambda$ \textbf{simulates} a given function $f \colon \fS \to \fS'$ if the following holds for all 
    $(x_1, \ldots, x_k) \in \fS$:
    if a bit string $\bi$ corresponds to $(x_1, \ldots, x_k)$ 
    (w.r.t. $\fS$), then
    the output $\Lambda(\bi)$ corresponds to 
    $f(x_1, \ldots, x_k)$ 
    (w.r.t.~$\fS'$).
\end{definition}
Analogously to the simulation of integer functions, we could define that the simulating program does not produce an output with inputs that do not correspond to a sequence of integers and floats, which in a sense would correspond to ``error handling''.

\subsubsection{Shifting}\label{shifting}

As an important tool to implement floating-point arithmetic in $\BNL$, we introduce \emph{shifting}, which informally means transforming a floating-point number to another floating-point number that represents the same rational number. In a bit more detail, this means adding leading zeros to the right of the radix point (or resp. removing leading zeros from the right of the radix point) and adding the number of added zeros to the exponent (or resp. subtracting the number of removed zeros form the exponent). 
Moreover, the program for shifting to the right is defined w.r.t. an upper bound for how much a floating-point number can be shifted to keep our programs finite and small in size.
Importantly, we do not perform rounding when shifting; rounding is a separate operation that will be described in the next section.

More formally, let $\cF(p,q,\beta)$ be a floating-point format, let $n \in \N$ and let $(\bs, \be) \in \cF_{\mathsf{u}}$. Assuming $\bs = \pm d_1 \cdots d_p$, \textbf{$(\bs, \be)$ shifted to the right by $n$ steps} refers to any float $(\pm 0^n d_1 \cdots d_p 0^k, \be')$, where $\be'$ is the precise sum of $\be$ and the number $n$ (represented in base $\beta$ with any precision), $k \in \N$ and the sign of the significand stays the same. Analogously, assuming $\bs = \pm 0^m d_1 \cdots d_{p-m}$ and $n \leq m$, \textbf{$(\bs, \be)$ shifted to the left by $n$ steps} refers to a float $(\pm 0^{m-n} d_1 \cdots d_{p-m} 0^n, \be')$, where $\be'$ is the precise subtraction of $n$ (represented in base $\beta$ with any precision) from $\be$ and the sign of the significand stays the same.  
For example, the floating-point number $+ 0.235 \times 10^{+9}$ shifted to the right by $2$ steps is $+ 0.00235 \times 10^{+11}$ in the format $\cF_{\mathsf{u}}(5,2,10)$. 

Now, we can define related shifting functions formally.
Let $\cF$ be a floating-point format in base $\beta$, let $n \in \N$, let $\cZ$ be the smallest integer format (in base $\beta$) where $n$ is representable and let $\cF'$ be the smallest floating-point format (in base $\beta$) such that $\cF_{\mathsf{u}}'$ contains each number in $\cF_{\mathsf{u}}$ shifted to the right by $n$ steps. 
\textbf{Shifting to the right in $\cF_{\mathsf{u}}$ by at most $n$ steps} is a function of type 
$
\cF_{\mathsf{u}} \times \cZ \to \cF_{\mathsf{u}}'
$ 
that maps each pair $((\bs, \be), \bz)$ to the floating-point number $(\bs', \be')$ as follows. If $\bz$ represents an integer $\ell \in [0;n]$, then $(\bs', \be')$ is the number $(\bs, \be)$ shifted to the right by $\ell$ steps; if $\bz$ represents a number greater than $n$ or less than zero, then $(\bs', \be')$ is the normalized representation of zero.

Let $\cZ'$ be the smallest integer format (in base $\beta$) where $p$ is representable, and let $\cF''$ be the smallest floating-point format (in base $\beta$) such that $\cF_{\mathsf{u}}''$ contains each number in $\cF_{\mathsf{u}}$ shifted to the left by any number of steps.
\textbf{Shifting to the left in $\cF_{\mathsf{u}}$} refers to a function of type 
$
\cF_{\mathsf{u}} \times \cZ' \to \cF_{\mathsf{u}}''
$
that maps each pair $((\bs, \be), \bz)$ to a floating-point number $(\bs', \be')$ as follows. Assume that $\bs = \pm 0^m d_1 \cdots d_{p-m}$. If $\bz$ represents a number $\ell \in [0;m]$, then $(\bs', \be')$ is the number $(\bs, \be)$ shifted to the left by $\ell$ steps. If $\bz$ represents a number greater than $m$ or less than zero, then $(\bs', \be')$ is the normalized representation of zero.

\begin{lemma}\label{lemma:shifting}
    Let $\cF(p,q,\beta)$ be a floating-point format and let $r= \max\{p,q\}$. Shifting to the right in $\cF_{\mathsf{u}}$ by at most $r$ steps (resp. to the left in $\cF_{\mathsf{u}}$) can be simulated with a (halting) $\BNL$-program of size $\ordo(r^{3} + r^2\beta + r \beta^{2})$ and computation time $\ordo(1)$ (where the program has $\ordo(r \beta)$ head predicates).
    Moreover, shifting to the right in $\cF_{\mathsf{u}}$ by at most one step can be simulated by a program of size $\ordo(r^3 + r\beta^2)$ and computation time $\ordo(1)$. 
\end{lemma}
\begin{proof}
It is easy but notation-heavy to implement shifting to the left and right in $\BNL$, but we go through the implementation informally and also consider the size and time complexity of the required $\BNL$-program. We consider the case where we shift to the right, as shifting to the left is analogous, and we do not consider the signs of the numbers as they are trivial to deal with.
Let the head predicates $F_{i, b}$, where $i \in [p]$ and $b \in [0; \beta-1]$, encode the fraction of the input float (without sign). 
Let $r' \colonequals \lfloor \log_{\beta}(r) \rfloor + 1$ denote the number of digits required to represent $r$ in base $\beta$.
Let the head predicates $Z_{j, b}$, where 
$j \in [r']$
and $b \in [0; \beta-1]$, encode the input integer (without sign). 
For each positive integer $x \in [0;r]$ represented by $\bx \colonequals x_1 \cdots x_{r'}$ in $\cZ(r', \beta)$, we define the induction rule 
\[
X_x \colonminus \bigwedge_{i \in [r']} \bigg{(}Z_{i, x_i} \land \bigwedge_{b \in [0;\beta-1]\setminus \{x_i\}} \neg Z_{i, b}\bigg{)},
\]
which encodes $\bx$ and the total required size is 
$\ordo(r \beta \log_{\beta}r)$.
Let us define for each $b \in [0;\beta-1]$ an auxiliary formula $\chi_b$ as follows: $\chi_0 \colonequals \top$ and for each $b \neq 0$ we define $\chi_b \colonequals \bot$.

Now, the shifted fraction can be encoded with the head predicates $F'_{i,b}$, where $i \in [p+r]$ and $b \in [0; \beta-1]$, where each $F_{i,b}'$ has the following induction rule: 
\[
F'_{i, b} \colonminus \bigwedge_{\substack{x \in [0;r],\, i -x \in [p] }} \bigg{(} X_{x} \rightarrow F_{i-x, b} \bigg{)} \land \bigwedge_{\substack{x \in [0;r],\, i -x \notin [p] }} \bigg{(} X_{x} \rightarrow \chi_b \bigg{)}.
\]
The total size of these induction rules is $\ordo(r^2\beta)$ and they can be computed in a single round. 
The addition of the exponent can be done by Lemma \ref{lem:add_BL} in $\ordo(r^{3} + r \beta^{2})$ space and $\ordo(1)$ time, since $\cF'_{\mathsf{u}}$ is the smallest format which contains each number in $\cF$ shifted to the right by at most $r$ steps, which means that the exponent precision is $q' = \max\{q+1, r'+1\} \leq r+1$.

Cases where the input integer is greater than $r$ are easy to handle by defining a head predicate $E$ that becomes true if $X_{x}$ is true for some $x \in [0;r]$ as follows: $E \colonminus \bigvee_{x \in [0;r]} X_{x}$. This can be used to create conditional rules for the print predicates: if $F_{i,b}' \colonminus \psi$ is the induction rule for $F_{i,b}'$ as defined above, then we replace it with the conditional rule for $F_{i,b}'$ w.r.t. $(E, \psi, \chi_b)$, i.e., $F_{i,b}' \colonminus (\psi \land E) \lor (\chi_b \land \neg E)$.
In the case of shifting to the left, there is an additional step that precedes this: we have to first encode the number of leading zeros and then use Lemma \ref{Integer comparison} to compare the encoding to the input integer. The exponent variables can be dealt with in a similar manner. Clearly this does not increase the size or time complexity of the program.

Using the tools in Section \ref{section: tools}, it is easy to combine the steps described into a halting $\BNL$-program which performs shifting and requires 
$\ordo(r^{3} + r^2\beta + r \beta^{2})$ space, $\ordo(r \beta)$ distinct head predicates, and $\ordo(1)$ computation time. Moreover, when we consider shifting to the right by at most $1$ step, the total size of head predicates $X_x$ is only $\ordo(\beta)$ and the total size of head predicates $F_{i,b}'$ is only $\ordo(r \beta)$, so the size of the program is that of addition, i.e. $\ordo(r^3 + r \beta^2)$.
\end{proof}

\subsubsection{Normalizing a floating-point number}

In this section, we consider the normalization of floating-point numbers. Given an unrestricted floating point format $\cF_{\mathsf{u}}$, \textbf{normalization} (in $\cF_{\mathsf{u}}$) refers to a function of type $\cF_{\mathsf{u}} \to \cF$ or $\cF_{\mathsf{u}} \to \cF_{\mathsf{n}}$ defined as follows. A normalization function of type $\cF_{\mathsf{u}} \to \cF$ maps each floating-point number to the normalized or subnormalized floating-point number that represents the same rational number.
A normalization function of type $\cF_{\mathsf{u}} \to \cF_{\mathsf{n}}$ does the same except that floating-point numbers that would be mapped to subnormalized numbers are instead mapped to zero.
The following lemma shows how the normalization of floating-point numbers can be simulated with $\BNL$ in 
cubic
space and constant time w.r.t. a given floating-point format.
\begin{lemma}\label{normalization}
    Let $\cF(p, q, \beta)$ be a floating-point format and let $r = \max\{p,q\}$. Normalization of a floating-point number in $\cF_{\mathsf{u}}$ can be simulated with a (halting) $\BNL$-program of size 
    $\ordo(r^{3} + r^2\beta + r \beta^{2})$
    and computation time $\ordo(1)$ (where the program has $\ordo(r \beta)$ head predicates).
\end{lemma}

\begin{proof}
We describe the $\BNL$-program consisting of four subprograms which are the computational steps required for normalization. 
Because the rules of the program are very notation-heavy, but not that complicated to write, we only describe individual and important rules of the subprograms during the description. We also analyze the space and time complexity of each subprogram as we go. 

Let $(\bs, \be)$ be a floating-point number in the format $\cF_{\mathsf{u}}$ where $\bs =  d_{1} \cdots d_{p}$ and $\be = e_{1} \cdots e_{q}$ (we omit the signs w.l.o.g.). We normalize $(\bs, \be)$ to the format $\cF$ in the following steps (the case for $\cF_{\mathsf{n}}$ is the same, but easier). Let $\be_{\max}$ (and resp. $\be_{\min}$) denote the maximum (resp. the minimum) exponent in $\cF$, i.e. the maximum and minimum integers in $\cZ(q, \beta)$, and let $r = \max\{p,q\}$.
Below, we assume that $\bs$ is encoded by the input predicates $F_{i, k}$, where $i \in [p]$ and $k \in [0; \beta-1]$.
\begin{enumerate}
    \item First, we test whether $\bs = 0 \cdots 0$.
    If $\bs \neq 0 \cdots 0$, we move to step 2.
    If $\bs = 0 \cdots 0$, then $(\bs, \be)$ represents the number $0$; we set the floating-point number to be the representation of $0$ in $\cF$ and output it. 
    This step is trivial to implement in $\BNL$ in a single computation step. For example, testing if $\bs = 0 \cdots 0$ can be done by an auxiliary rule $A_{1} \colonminus \bigwedge_{i \in [p]} F_{i,0}$ which only adds $\ordo(p)$ to the size.
    \item 
    Let $m$ be the greatest number for which $d_1 \cdots d_{m} = 0 \cdots 0$.
    Determining $m$ can be done with the following auxiliary rules:
    $A_n \colonminus \bigwedge_{i \in [n]} F_{i, 0} \land \neg F_{n+1, 0}$ 
    for each $n \in [p]$ (we omit $\neg F_{n+1, 0}$ if $n = p$); the total size of these rules is $\ordo(p^2)$.
    We encode $m$ as an integer $\bm_{\beta} \in \cZ(r, \beta)$ as follows.
    We define head predicates $M_{j, b}$ where $j \in [\lfloor \log_{\beta}(p) \rfloor+1]$ (i.e. $j$ is at most the number of digits required to represent $p$ in base $\beta$) and $b \in [0; \beta-1]$. The rule for each $M_{j,b}$ is simply a disjunction of exactly those predicates $A_n$ where the $j$th digit of the representation of $n$ in base $\beta$ is $b$; the total size of these rules is $\ordo(r \log_{\beta}(r) \beta)$.
    \item We compute the precise sum $\be - \bm_{\beta}$ by Remark \ref{remark: precise add}, which by Lemma \ref{lem:add_BL} takes $\ordo(r^3 + r \beta^2)$ space and $\ordo(1)$ computation steps.
    \item We compare $\be - \bm_{\beta}$ obtained in step 3 to $\be_{\min}$ which requires size $\ordo(r\beta^2 + r^2)$ and time $\ordo(1)$ by Lemma \ref{Integer comparison}, and shift the number to the left, where the number of steps shifted depends on which inequality holds:
    \begin{enumerate}
        \item If $\be_{\min} \leq \be - \bm_{\beta}$, then we shift $(\bs, \be)$ to the left by $m$ steps, i.e., we run a program for shifting to the left with $(\bs, \be)$ and $\bm_\beta$ as input. The resulting exponent is representable in $\cZ(q, \beta)$.
        \item If $\be - \bm_{\beta} < \be_{\min}$, then we shift $(\bs, \be)$ to the left by the number of steps represented by the precise sum $\be + \be_{\max}$, i.e., we run a program for shifting to the left with inputs $(\bs, \be)$ and the precise sum $\be + \be_{\max}$.
        Again the precise sum $\be + \be_{\max}$ takes $\ordo(r^3 + r\beta^2)$ and $\ordo(1)$ computation steps. 
        The shifting results in an exponent that represents the same number as $\be - (\be + \be_{\max}) = \be_{\min}$.
    \end{enumerate}
    By Lemma \ref{lemma:shifting}, shifting takes $\ordo(r^3 + r^2 \beta + r\beta^2)$ space and $\ordo(1)$ computation steps.
    In both cases the significand is expressible in $\cZ(p,\beta)$ and the exponent in $\cZ(q,\beta)$ and thus the whole floating-point number is representable in $\cF(p, q, \beta)$. Thus, we have obtained the desired number.
\end{enumerate}
With these steps completed, the floating-point number is either normalized or subnormalized, and each step is possible to execute with $\BNL$ in the described space and time. To combine the subprograms together, we use the tools described in Section \ref{section: tools}. As each of the subprograms requires only $\ordo(r \beta)$ distinct schema variables, so does the whole program.
\end{proof}

\subsubsection{Rounding a floating-point number}

In this section, we consider the implementation of rounding operations in $\BNL$. Given two floating-point formats $\cF$ and $\cF' \subseteq \cF$ (in the same base), \textbf{rounding} (from $\cF$ to $\cF'$) refers to a function $\cF \to \cF'$ mapping each number $(\bs, \be) \in \cF$ to a number $(\bs', \be') \in \cF'$ such that $(\bs', \be')$ is either the smallest number in $\cF'$ such that $(\bs, \be) \leq (\bs', \be')$ or the greatest such that $(\bs', \be') \leq (\bs, \be)$.
The easiest way to round a floating-point number is \textbf{truncation}, where the least significant digits of the number are simply omitted, rounding the number toward zero. 
With this technique, a rounding function $\cF(3,1,10) \to \cF(2,1,10)$ would round both $0.115 \times 10^3$ and $0.114 \times 10^3$ to $0.11 \times 10^3$.
On the other hand, the most common method is to round to the nearest number in the format, with ties rounding to the number with an even least significant digit (the rightmost digit in the significand). This is called \textbf{round-to-nearest ties-to-even}. With this technique, the number $0.115 \times 10^3$ would instead round to $0.12 \times 10^3$. We will use round-to-nearest ties-to-even as it is the most common and arguably the most precise method, but other rounding methods like truncation are equally possible in our framework.

The following theorem shows that round-to-nearest ties-to-even can be simulated by a $\BNL$-program in 
cubic
space and constant time w.r.t. a given floating-point format.
\begin{lemma}\label{lemma: rounding}
Let $\cF(p, q, \beta)$ be a floating-point format, let $p' \geq p$ and $q' \geq q$ and let $r = \max\{p', q'\}$. Rounding of a floating-point number in $\cF(p', q', \beta)$ to the floating-point format $\cF(p, q, \beta)$ 
using round-to-nearest ties-to-even can be simulated with a (halting) $\BNL$-program of size 
$\ordo(r^{3} + r \beta^{2})$,
and computation time $\ordo(1)$ (where the program has $\ordo(r \beta)$ head predicates).  
\end{lemma}
\begin{proof}
    Consider a float $(\bs', \be') \in \cF(p', q', \beta)$ where $\bs' = \pm  d_{1} \cdots d_{p'}$ and $\be' = \pm e_{1} \cdots e_{q'}$, which we seek to round to the format $\cF(p, q, \beta)$ (where $p' \geq p$ and $q' \geq q$) using round-to-nearest ties-to-even.

    Informally, round-to-nearest ties-to-even can be simulated and implemented in the following two steps in $\BNL$. 
    Without loss of generality, we assume that the signs of the fraction and exponent are both $+$, since the other cases are analogous.
    \begin{enumerate}
        \item \textbf{Description:} We check the value of $d_{p + 1}$ to round the fraction correctly. If $d_{p + 1} < \frac{\beta}{2}$, then we let $\bs = d_{1} \cdots d_{p}$. If $d_{p + 1} > \frac{\beta}{2}$, 
        then we let $\bs = d_{1} \cdots d_{p} + 0^{p - 1} 1$ denote the precise sum in $\cZ(p+1, \beta)$ obtained by Lemma \ref{lem:add_BL} and Remark \ref{remark: precise add}.
        (We round to the nearest number in both cases.) If $d_{p + 1} = \frac{\beta}{2}$, then we let $\bs = d_{1} \cdots d_{p}$ if $d_{p}$ is even and $\bs = d_{1} \cdots d_{p} + 0^{p - 1} 1$ if $d_{p}$ is odd (again the sum is precise). In other words, in the case of a tie we round to the nearest number whose rightmost digit is even. Then we move to the next step.
        
        \textbf{Implementation:} Assume that the significand $\bs'$
        is encoded in the variables $F_{i, b}$, where $i \in [p']$ and $b \in [0; \beta-1]$. Checking if
        $d_{p + 1} < \frac{\beta}{2}$ can be done with an auxiliary rule $A_{<} \colonminus \bigvee_{b < \beta / 2} F_{p+1, b}$. Similarly, we can define auxiliary predicates, which check if $d_{p + 1} > \frac{\beta}{2}$ or $d_{p + 1} = \frac{\beta}{2}$. Using these head predicates as conditions, it is easy to round the fraction. In the worst case we need to perform integer addition which requires size $\ordo(p^{3} + p\beta^{2})$ and time $\ordo(1)$ by Lemma~\ref{lem:add_BL}.
        \item \textbf{Description:}
        Let $\be_{\max}$ denote the greatest number in $\cZ(q, \beta)$.
        First, if $\bs \in \cZ(p, \beta)$, then we output $(\bs, \be)$ where $\be \in \cZ(q, \beta)$ represents the same integer as $\be'$. If $\bs \notin \cZ(p, \beta)$, then we check if $\be' \geq \be_{\max}$; if yes, we output the largest number in $\cF(p,q,\beta)$ and if no, then we shift to the right by one and run the program again starting from step 1 with the shifted floating-point number as input.

        \textbf{Implementation:} It is easy to define subprograms for this step with the subprograms for integer comparison in Lemma \ref{Integer comparison} and for shifting to the right by one in Lemma \ref{lemma:shifting}; together these require size 
        $\ordo(r^{3} + r \beta^{2})$
        and time $\ordo(1)$ where $r = \max\{p', q'\}$.
    \end{enumerate}
    For each step above it is easy but notation-heavy to write a complete simulating subprogram. Using if-else statements and while loops, it is possible to combine the subprograms into a single $\BNL$-program which simulates rounding based on round-to-nearest ties-to-even. Combining the size and time complexities of the steps above, the program requires space 
    $\ordo(r^{3} + r \beta^{2})$,
    time $\ordo(1)$ and $\ordo(r \beta)$ distinct schema variables.
\end{proof}

\subsubsection{Addition of floating-point numbers}

In this section, we show that we are able to simulate floating-point addition via $\BNL$-programs in constant computation time and 
cubic
space w.r.t. a given floating-point format.
\textbf{Floating-point addition} in a format $\cF$ refers to a saturating sum $+ \colon \cF^{2} \to \cF$ (respectively, floating-point addition in $\cF_{\mathsf{n}}$ is an operation $+_{\mathsf{n}} \colon \cF_{\mathsf{n}}^{2} \to \cF_{\mathsf{n}}$), where we first take the precise sum of two numbers $\bx$ and $\by$, and then round it to the closest float in the format by using round-to-nearest ties-to-even (in the case of normalized addition, rounded values between the two non-zero numbers closest to zero in $\cF_{\mathsf{n}}$ are taken to zero). 
By precise sum, we mean the floating-point number that represents the regular sum of the rational numbers represented by $\bx$ and $\by$ over the field of all rational numbers. By saturation, we refer to the fact that if the precise sum is greater (or lesser) than any number in the format $\cF$, then the result is simply the greatest (or least) number in $\cF$. We consider addition both in the case where we have subnormalized numbers and in the case where we do not.

\begin{lemma}\label{fp-addition}
Let $\cF(p, q, \beta)$ be a floating-point format and $r = \max\{p, q\}$.
Addition of two floating-point numbers in $\cF$ (resp. in $\cF_{\mathsf{n}}$) can be simulated with a (halting) $\BNL$-program where the program has size 
$\ordo(r^{3} + r^2\beta + r \beta^{2})$
and computation time $\ordo(1)$ (and $\ordo(r \beta)$ head predicates).
\end{lemma}

\begin{proof}
We describe the steps used when adding numbers together, as well as the size and time complexity they require in $\BNL$, rather than writing out the rules of the program. Adding two floating-point numbers together takes advantage of integer addition. 

\begin{enumerate}
    \item If either of the input numbers is zero, we skip the remaining steps and output the other input number
    (if both are zero, we output zero).
    \item We compare the exponents to see which number is greater; this is because we need to shift the lesser number before adding the fractions together. Comparing the exponents requires $\ordo(q \beta^{2} + q^{2})$ space and $\ordo(1)$ time steps by Lemma \ref{Integer comparison}. The greater exponent is stored in new predicates, which requires size $\ordo(q \beta)$ and only a single computation step.
    \item We perform the precise subtraction of the lesser exponent from the greater one by Remark \ref{remark: precise add} to determine the difference in magnitude of the two numbers; this requires $\ordo(q^{3} + q \beta^{2})$
    space and $\ordo(1)$ time by Lemma \ref{lem:add_BL}. 
    \item We compare the difference of the exponents to $p$, which requires $\ordo(r \beta^{2} + r^{2})$ space and $\ordo(1)$ time by Lemma \ref{Integer comparison}. If the difference is greater than $p$, i.e., the lesser number has no effect on the sum, then we skip the remaining steps and output the greater input number.
    Otherwise, we move to step $5$.
    \item We shift the floating-point number with the lower exponent to the right by the difference of the exponents (at most $p$) using the program for shifting to the right by at most $r$ steps which requires 
    $\ordo(r^{3} + r^2\beta + r \beta^{2})$
    space and $\ordo(1)$ time steps by Lemma \ref{lemma:shifting}. The number that is not shifted is then represented in the same format as the shifted number. Note that after the shifting, the exponents are the same.
    \item We perform precise integer addition on the significands obtained in the previous step by using
    Remark \ref{remark: precise add} which requires $\ordo(p^{3} + p \beta^{2})$
    space and $\ordo(1)$ time (note that the precision of the significands is $\ordo(p)$).
    Let $\be$ be the exponent of the greater input number.
    If $\bs'$ is the precise sum of the significands and 
    $\be' = \be+ 0 \cdots 01$,
    then we define $F = (\bs', \be')$. 
    Let $\cF_{\mathsf{u}}(p', q', \beta)$ be the unrestricted floating-point format where $F$ belongs to (note that it is the same for all non-zero inputs).
    \item We normalize the floating-point number $F$ to the format $\cF(p', q', \beta)$.
    We then round the number to the floating-point format $\cF(p,q,\beta)$.
    In the case of addition in $\cF_{\mathsf{n}}(p, q, \beta)$, 
    as an extra step we normalize to the format $\cF_{\mathsf{n}}(p, q, \beta)$ to flush the subnormalized numbers to zero.
    In either case, this requires 
    $\ordo(r^{3} + r^2\beta + r \beta^{2})$ size and $\ordo(1)$ time by Lemmas \ref{normalization} and \ref{lemma: rounding}.
\end{enumerate}

With the above steps completed, the sum has been computed, and each step is executable in $\BNL$ in the described space and time. The subprograms can be combined together in the desired space and time by using the tools described in Section \ref{section: tools}. As each of the subprograms requires only $\ordo(r \beta)$ distinct schema variables, so does the whole program.
\end{proof}

\subsubsection{Multiplication of floating-point numbers}

In this section, we show that we are able to simulate floating-point multiplication via $\BNL$-programs in logarithmic time and 
quartic
space w.r.t. a given floating-point format.
\textbf{Floating-point multiplication} in a format $\cF$ is the saturating operation $\cdot \colon \cF^{2} \to \cF$ (respectively, floating-point multiplication in $\cF_{\mathsf{n}}$ is an operation $\cdot_{\mathsf{n}} \colon \cF_{\mathsf{n}}^{2} \to \cF_{\mathsf{n}}$) where we first take the precise product of two numbers $\bx$ and $\by$  and then round it to the closest number in the format by using round-to-nearest ties-to-even (in normalized multiplication, rounded values between the two non-zero values in $\cF_{\mathsf{n}}$ closest to zero are taken to zero).
Analogously to floating-point addition, by the precise product we mean the floating-point number that represents the regular product of the rational numbers represented by $\bx$ and $\by$ over the field of all rational numbers.
The multiplication requires logarithmic time, since the proof applies the result obtained for integer multiplication in Lemma \ref{lem:mul_BL}.
 
\begin{lemma}\label{BL multiplication}
Let $\cF(p, q, \beta)$ be a floating-point format and $r = \max\{p, q\}$.
Multiplication of two floating-point numbers in $\cF$ (resp. in $\cF_{\mathsf{n}}$) can be simulated with a (halting) $\BNL$-program of size 
$\ordo(r^4 + r^3\beta + r^{2}\beta^2 + r\beta^3)$
and computation time $\ordo(\log(r) + \log(\beta))$ (where the program has $\ordo(r^{2} \beta + r \beta^{2})$ head predicates).
\end{lemma}
\begin{proof}
    As before, we describe how to calculate the product of two floating-point numbers and describe the size and time complexity each step requires in $\BNL$. Informally and simply, we add the exponents together and multiply the fractions. 
    Formally, multiplication of two floating-point numbers is done in the following steps.

\begin{enumerate}
    \item We take the precise sum of the exponents of the floating-point numbers by using Remark \ref{remark: precise add}. This adds $\ordo(q^{3} + q \beta^{2})$
    space and $\ordo(1)$ time steps by Lemma \ref{lem:add_BL}. 
    We let $\be$ denote the sum of the exponents. 
    \item We take the precise product of the significands by using Remark \ref{remark: precise mul}. This results in an integer $\bs$ of length $2p$. Multiplication requires 
    $\ordo(p^4 + p^3\beta + p^{2}\beta^2 + p\beta^3)$
    space and $\ordo(\log(p) + \log(\beta))$ time by Lemma \ref{lem:mul_BL}. 
    \item 
    Let $\cF_{\mathsf{u}}(p', q', \beta)$ denote the format that $(\bs, \be)$ belongs to (note that it is the same for all inputs, and the precision $q'$ of the exponent is $\ordo(q)$ and the precision $p'$ of the significand is $\ordo(p)$).
    We normalize the number $(\bs, \be)$ to the format $\cF(p',q', \beta)$.
    We then round the number to the format $\cF(p,q,\beta)$.
    Finally, as an extra step for multiplication in $\cF_{\mathsf{n}}$, we normalize the number to $\cF_{\mathsf{n}}(p,q,\beta)$, flushing all subnormalized numbers to zero.
    In either case, by Lemmas \ref{normalization} and \ref{lemma: rounding} this
    requires 
    $\ordo(r^{3} + r^2\beta + r \beta^{2})$
    space and $\ordo(1)$ time.
\end{enumerate}

With the above steps completed, we have calculated the product of the two numbers, and each step is executable in $\BNL$ in the described size and time.
These subprograms are combined using the tools described in Section \ref{section: tools}, giving the desired space and time. The number of distinct schema variables is $\ordo(r^{2} \beta + r \beta^{2})$ as integer multiplication requires that many of them, and the other subprograms only require $\ordo(r \beta)$ of them.
\end{proof}

\subsubsection{Floating-point polynomials and piecewise polynomial functions}\label{section: float polynomials}

In this subsection, we consider floating-point polynomials and piecewise polynomial functions over floats. 
For context, over real numbers, a piecewise polynomial function refers to a real function which is defined as separate polynomial functions over certain intervals of reals, e.g., the function ``$f(x) = x^{2}$ when $x \geq 0$ and $f(x) = -x$ when $x < 0$'' is piecewise polynomial. 
We begin by defining the syntax and the semantics for piecewise polynomial functions over floating-point numbers and then show in Theorem \ref{piecewise-polynomial_simulation} that such functions can be simulated in polynomial space and polylogarithmic time in $\BNL$.

Before we define floating-point polynomials and piecewise polynomial functions over floats, we define iterated addition and powering over floats.
Let $\fF$ be a floating-point format or normalized floating-point format.
For each $n \in \N$, the \textbf{$n$-ary iterated parallel floating-point sum over $\fF$} is the function $\mathrm{SUM}_n \colon \fF^n \to \fF$ defined as follows. For $n = 0$, $\mathrm{SUM}_0$ denotes the floating-point number that represents $0$ in $\fF$. Now assume $n \neq 0$ and let $j = \lceil \log(n) \rceil$. Given an input $(s_{1}, \dots, s_{n}) \in \fF^n$, for each $i \in [2^j]$ we let $s_{i,0} \colonequals s_{i}$ if $i \in [n]$ and otherwise $s_{i,0}$ is the floating-point number zero in $\fF$. We iteratively calculate for each $\ell \in [j]$ and each $i \in [2^{j-\ell}]$ the floating-point sum $s_{i, \ell+1} = s_{2i-1, \ell} + s_{2i, \ell}$ in $\fF$. Finally, we define that $\mathrm{SUM}_n(s_{1}, \dots, s_{n}) \colonequals s_{1, j}$.
Next for each $n \in \N$ and $F \in \fF$, we define $F^n$, i.e., \textbf{$F$ to the power of $n$} as follows. 
We define that $F^1 \colonequals F$ and for each $i \geq 2$ we define that $F^{i}$ is the floating-point product of $F^k$ and $F^{i-k}$ in $\fF$, where $k$ is the greatest power of $2$ such that $k < i$. Naturally, $F^0$ denotes the floating-point number that represents $1$ in $\fF$. Note that rounding is performed in each addition and product, but this is natural since it also happens in practical applications as well.

Given $n \in \N$, a (single-variable) \textbf{floating-point polynomial (of order $n$)} over a floating-point format $\fF$ 
is a sequence $\bp \colonequals (a_n, \dots, a_0)$ in $\fF^{n+1}$, where $a_n$ does not represent $0$
and $\bp$ represents an expression of the form $a_{n} x^{n} + a_{n-1} x^{n-1} + \cdots + a_{1} x + a_{0}$.
Moreover, the semantics for a floating-point polynomial $\bp$ is defined as follows. 
Each floating-point polynomial $(a_n, \dots, a_0)$ induces a function $p \colon \fF \to \fF$ where 
\[
p(x) = \mathrm{SUM}_{n+1}(a_n x^n, a_{n-1} x^{n-1}, \dots, a_1 x, a_0).
\]
The semantics of a floating-point polynomial is the function $p'$ obtained from $p$ by replacing floating-point numbers in $\fF$ with the rational numbers they represent.

Our method for evaluating polynomials is essentially the same as the method in \cite{MUNRO1973189}, p.~191. In that article, the method does not consider the space or time complexities of the evaluations of the powers $x^{i}$, the products $a_{i}x^{i}$ or the sum of terms $a_{i}x^{i}$. We believe that other similar evaluation methods for polynomials can be implemented efficiently in $\BNL$.

A (single-variable) \textbf{piecewise polynomial function over $\fF$ (with $\ell$ pieces)} is a pair $(s_1, s_2)$, where $s_1$ is a sequence $(\bp_1, \dots, \bp_{\ell})$ of floating-point polynomials $\bp_i$ over $\fF$ and $s_2$ is a sequence 
$((L_1, U_1), (L_2, U_2), \dots, (L_{\ell}, U_{\ell}))$ of a lower bound $L_i \in \fF$ and an upper bound $U_i \in \fF$ for each $i \in [\ell]$, 
where $L_i < U_i < L_{i+1}$ for each $i \in [\ell-1]$, and for each $F \in \fF$ there is an $i \in [\ell]$ such that $L_{i} \leq F \leq U_{i}$ (i.e. $U_{i}$ and $L_{i+1}$ are adjacent floats for each $i \in [\ell-1]$ with $L_1$ and $U_{\ell}$ being the least and greatest numbers in $\fF$ respectively).
A piecewise polynomial function induces a function $f \colon \fF \to \fF$  such that $f(x) = p_{i}(x)$ if
$L_{i} \leq x \leq U_{i}$.
We may use the syntactic pair $\rho \colonequals (s_1, s_2)$ to denote its induced function $f_\rho$ and for any floating-point number $F \in \fF$, we let $\rho(F)$ denote $f_\rho(F)$.
The semantics of a piecewise polynomial function is the function $g$ defined otherwise the same as $f$ except that we replace the input and output floating-point numbers with the rational numbers they represent.
The \textbf{order} of a piecewise polynomial function $(s_1, s_2)$ over $\fF$, where $s_1 = (\bp_1, \dots, \bp_{\ell})$, is the maximum order of a polynomial $\bp_1, \dots, \bp_{\ell}$.
We obtain the following theorem for simulating piecewise polynomial floating-point functions.

\begin{theorem}\label{piecewise-polynomial_simulation}
    Let $\alpha$ be a piecewise polynomial function over a floating-point format $\cF(p, q, \beta)$ (or $\cF_{\mathsf{n}}$) and let $r = \max\{p, q\}$. 
    Let $\Omega$ be the order of $\alpha$ (setting $\Omega := 2$ if the order is $1$ or $0$).
    and let $P$ be the number of pieces in $\alpha$. We can construct a $\BNL$-program $\Lambda$ that simulates $\alpha$ such that
    \begin{enumerate}
        \item the size of $\Lambda$ is $\ordo(P \Omega (r^4 + r^3\beta + r^{2}\beta^2 + r\beta^3))$,
        \item there are $\ordo(P \Omega(r^{2} \beta + r \beta^{2}))$ head predicates in $\Lambda$, and
        \item the computation time of $\Lambda$ is $\ordo(\log (\Omega) (\log(r) + \log(\beta)))$.
    \end{enumerate}
\end{theorem}
\begin{proof}
    The simulation of $\alpha$ starts with determining which of the $P$ polynomials is to be calculated by comparing the input $x$ simultaneously to each of the $P-1$ interval limits of $\alpha$. By Lemma~\ref{Integer comparison}, comparing $P$ fractions and exponents requires size $\ordo(P (r \beta^{2} + r^{2}))$ and time $\ordo(1)$. Then, we calculate the polynomial. 
    We first calculate the powers $x^2, \dots, x^{\Omega}$ in parallel, which means $\lceil \log(\Omega)\rceil$ sequential multiplications and $\Omega-1$ products in total to compute $x^{\Omega}$; by Lemma \ref{BL multiplication}, the powers thus require 
    $\ordo(\Omega(r^4 + r^3\beta + r^{2}\beta^2 + r\beta^3))$ 
    space and
    $\ordo(\log(\Omega)(\log(r) + \log(\beta)))$
    time steps.
    After this, for each of the $P$ polynomials we calculate at most $\Omega$ floating-point products $a_{i} x^{i}$, which require space 
    $\ordo(P \Omega (r^4 + r^3\beta + r^{2}\beta^2 + r\beta^3))$ 
    and time $\ordo(\log(r) + \log(\beta))$.
    
    Finally, for each of the $P$ polynomials there is an $(\Omega+1)$-ary iterated parallel floating-point addition which requires $\lceil \log(\Omega)\rceil$ sequential additions and $\Omega$ additions in total. Therefore, computing the $(\Omega+1)$-ary iterated parallel floating-point additions
    in total requires space 
    $\ordo(P \Omega(r^{3} + r^2\beta + r \beta^{2}))$
    and time $\ordo(\log(\Omega))$, since a single addition requires 
    $\ordo(r^{3} + r^2\beta + r \beta^{2})$
    space and $\ordo(1)$ time steps
    by Lemma~\ref{fp-addition}.
    Both the size and time complexity introduced by addition are thus dwarfed by the earlier multiplications. 
    The complete program can be obtained by using the tools described in Section~\ref{section: tools} without increasing size or time.
    Thus, the program contains $\ordo(P \Omega(r^{2} \beta + r \beta^{2}))$ distinct schema variables as there are $P \Omega$ multiplications which each contain $\ordo(r^{2} \beta + r \beta^{2})$ distinct schema variables.
    Finally, as $\alpha$ consists entirely of additions and multiplications, the size and time complexities are not affected by whether the operations are w.r.t. a floating-point format or a normalized format.
\end{proof}

\section{Descriptive complexity for general neural networks}\label{descsection}

In this section, we establish connections between Boolean network logic and neural networks, see Theorems \ref{NN_to_BNL} and \ref{BNL_to_NN}. Informally, we define a general neural network as an edge-weighted directed graph (with any topology) operating on floating-point numbers in some floating-point format. Each node receives either a fixed initial value or an input as its first activation value. In each communication round, each node sends its activation value to its out-neighbours and calculates a new activation value as follows. Each node multiplies the incoming activation values from its in-neighbours with associated weights, adds them all together along with a node-specific bias and feeds the result into a node-specific activation function. Note that floating-point formats are bounded, and the input space of a neural network is thus finite, but the obtained translations are not trivial to obtain as we care about the size and time complexities.

\subsection{General neural networks}\label{section: general neural networks}

Next, we define neural networks formally. We consider neural networks over any floating-point format  or normalized format $\fF$.
A \textbf{(directed) graph} is a tuple $(V, E)$, where $V$ is a finite set of \textbf{nodes} (also referred to as ``vertices'' in literature) and $E \subseteq V \times V$ is a set of \textbf{edges}. Note that we allow self-loops in graphs, i.e. edges $(v,v) \in E$.
A (heterogeneous) \textbf{general neural network} $\cN$ for a floating-point format $\fF$ is defined as a tuple $(G, \fa, \fb, \fw, \pi)$, where $G = (V, E, <^V)$ is a directed graph associated with a linear order $<^V$ for nodes in $V$, and $\fa$, $\fb$, $\fw$ and $\pi$ are defined below. The network $\cN$ contains sets $I, O \subseteq V$ of \textbf{input} and \textbf{output} nodes respectively, and a set $H = V \setminus (I \cup O)$ of \textbf{hidden nodes}. The tuples $\fa = (\alpha_{v})_{v \in V}$ and $\fb = (b_{v})_{v \in V}$ are assignments of a piecewise polynomial \textbf{activation function} $\alpha_{v}$ over $\fF$ and a \textbf{bias} $b_{v} \in \fF$ for each node $v \in V$. Likewise, $\fw = (w_{e})_{e \in E}$ is an assignment of a \textbf{weight} $w_{e} \in \fF$ for each edge $e \in E$. The function $\pi \colon (V \setminus I) \to \fF$ assigns an initial value to each non-input node.
If each node has the same activation function, then we say that $\cN$ is \textbf{homogeneous}.
For each $(u, v) \in E$, we say that $u$ is an \textbf{in-neighbour} of $v$ and $v$ is an \textbf{out-neighbour} of $u$.
The \textbf{in-degree} (resp., \textbf{out-degree}) of a node $v$ is the number of in-neighbours (resp., out-neighbours) of $v$. Note that we allow reflexive loops so a node might be its own ``neighbour''.

The computation of a general neural network is defined w.r.t. an input function of type $i \colon I \to~\fF$. Similar to $\BNL$-programs, an input function $i$ induces a floating-point string $\bi \in \fF^{\abs{I}}$, and respectively, a floating-point string $\bi$ induces an input function $i$. The \textbf{state of the network at time $t \in \N$} is a function $g_{t} \colon V \to \fF$, which is defined recursively as follows. For $t = 0$, we have $g_{0}(v) = i(v)$ if $v$ is an input node and $g_{0}(v) = \pi(v)$ if $v$ is not an input node. Now, assume we have defined the state at time~$t$.
The state of $v$ at time $t + 1$ is
\[
g_{t + 1}(v) = \alpha_{v} \Big( b_{v} + \sum_{(u,v) \in E} \left( g_{t}(u) \cdot w_{(u,v)} \right) \Big).
\]
In more detail, $\sum$ above denotes the iterated sum $\mathrm{SUM}_{n}$ where $n$ is the number of neighbours of $v$ and the inputs $g_{t}(u) \cdot w_{(u,v)}$ are given in increasing order w.r.t. $<^V$.
Given a node $v \in V$, its \textbf{activation value} at time $t$ is $g_t(v)$.
If we designate that $u_{1}, \dots, u_{k}$ enumerate the set $O$ of output nodes in the order $<^{V}$,
then the state of the system induces an output tuple $o_{t} = (g_{t}(u_{1}), \dots, g_{t}(u_{k}))$ at time $t$ for all $t \in \N$.

We once again define two frameworks for designating output rounds: machine-internal and machine-external. In the first framework, the set $V$ contains a set $A$ of \textbf{attention nodes} $u$, each of which is associated with a \emph{threshold} $s_{u} \in \fF$; the order of the nodes induces a \textbf{threshold string} $\bt \in \fF^{\abs{A}}$. Intuitively, an output round is triggered whenever the activation value of any attention node exceeds its threshold. In the second framework, attention nodes and thresholds are excluded and the neural network is instead given an \textbf{attention function} $a \colon \fF^{\abs{I}} \to \wp(\N)$ that gives the output rounds based on the input. 
Analogously to programs, whenever necessary for the context, we will tell whether a neural network is associated with an attention function or not.

Next, we define how a neural network generates a run sequence. Let $v_{1}, \dots, v_{n}$ enumerate the nodes of a neural network $\cN$ (in the order $<^{V}$). Let $i \colon I \to \fF$ be an input function that induces an input $\bi \in \fF^{\abs{I}}$. The neural network $\cN$ with input $\bi$ induces the sequence $C_{\cN} = (\bs_{t})_{t \in \N}$ called the \textbf{network state sequence}, where $\bs_{t} = g_{t}(v_{1}) \cdots g_{t}(v_{n})$. The set $O$ of output nodes corresponds to the set $P_{\cN} \colonequals \{\, i \mid v_{i} \in O \,\}$ of print positions. If the network has a set $A$ of attention nodes, then $A$ corresponds to the set $A_{\cN} \colonequals \{\, i\mid v_{i} \in A \,\}$ of attention positions. 
Moreover, the set of corresponding attention strings is defined as follows: $\cS_\cN \colonequals \{\, \bs \in \fF^{\abs{A}} \mid 
\bs(i) \geq \bt(i) \text{ for some } i \in \{1, \dots, \abs{A}\}
\,\}$, where $\bt$ is the threshold string of $\cN$. 
Thus, $\cN$ with input $\bi$ \textbf{induces} the run sequence of $C_{\cN}$ w.r.t. $(A_{\cN}, P_{\cN}, \cS_\cN)$ or we may simply say that $\cN$ with input $\bi$ induces the run sequence w.r.t. $(A, O, \bt)$. 
If the network has no attention nodes, and is instead given an attention function, then the output rounds are given by $a(\bi)$. Therefore, a neural network with an input $\bi$ induces \textbf{output rounds} and an \textbf{output sequence} w.r.t. $(A, O, \bt)$ (or resp. w.r.t. $(a(\bi), O)$).

We then define some parameters that will be important when describing how neural networks and $\BNL$-programs are related in terms of space and time complexity. The \textbf{degree} of a general neural network $\cN$ is the maximum in-degree of a node in the underlying graph. The \textbf{piece-size} of $\cN$ is the maximum number of ``pieces'' across all its piecewise polynomial activation functions. 
The \textbf{order} of $\cN$ is the highest order 
of its piecewise polynomial activation functions. 

A general neural network can easily emulate typical \emph{feedforward neural networks} where we require that the graph of the general neural network is, inter alia, connected and acyclic and the sets $I$, $O$ and $H$ are chosen correctly. We study feedforward neural networks in Section~\ref{sec: fnn}, where we prove that any feedforward neural network can be translated into an equivalent feedforward-style $\BNL$-program and vice versa.

In general, our neural network models are \emph{recurrent} in the sense that they allow loops. 
They are \emph{one-to-many} networks, i.e., they can map any and all inputs to a sequence of outputs, unlike feedforward neural networks, which always map each input to a single output.

\subsection{Notions on equivalence between programs and neural networks}\label{subsection:equivalence_of_NNs}

In this section, we define equivalence between $\BNL$-programs and general neural networks (analogously to programs). We start by defining a general notion of equivalence and then we define two special cases of equivalence that will be used in theorems.
Let $\fF$ be an arbitrary floating-point format or normalized float format, let $\sigma_1$ be a similarity from $\fF$ to $\{0,1\}$ and let $\sigma_2$ be a similarity from $\{0,1\}$ to $\fF$. The below definitions generalize for $\SC$-programs by replacing each occurrence of a $\BNL$-program with an $\SC$-program. They likewise generalize for self-feeding circuits by replacing each instance of a $\BNL$-program with a self-feeding circuit.

We say that a $\BNL$-program $\Lambda$ is \textbf{equivalent{} w.r.t. $\sigma_1$} to a general neural network $\cN$ for $\fF$ if for each input $\bi$ for $\cN$ the run sequence induced by $\Lambda$ with input $\sigma_1(\bi)$ is equivalent{} w.r.t. $\sigma_1$ to the run sequence induced by $\cN$ with input~$\bi$.
Respectively, a general neural network $\cN$ for $\fF$ is \textbf{equivalent{} w.r.t. $\sigma_2$} to a $\BNL$-program $\Lambda$ if
for each input $\bj$ for $\Lambda$ the run sequence of $\cN$ with input $\sigma_2(\bj)$ is equivalent{} w.r.t. $\sigma_2$ to the run sequence induced by $\Lambda$ with input $\bj$.
\textbf{Strong equivalence w.r.t. $\sigma_1$ or $\sigma_2$} is defined analogously, i.e., the run sequences must be strongly equivalent w.r.t. $\sigma_1$ or $\sigma_2$ with any inputs that are similar w.r.t. $\sigma_1$ or $\sigma_2$. 
The concepts of \textbf{time delay} and \textbf{precomputation rounds} between programs and neural networks are defined analogously to how they were defined between programs.
We once again also require that the time delay and number of precomputation rounds are the same regardless of input.

The above definitions generalize for two general neural networks as follows. Let $\fF'$ be another floating-point format or normalized float format and let $\sigma$ be a similarity from $\fF$ to $\fF'$. 
A neural network $\cN'$ for $\fF'$ is equivalent{} to a neural network $\cN$ for $\fF$ if for each input $\bi$ for $\cN$ the run sequence induced by $\cN'$ with input $\sigma(\bi)$ is equivalent{} w.r.t. $\sigma$ to the run sequence induced by $\cN$ with input $\bi$.
As with programs, attention functions of equivalent{} objects are trivial to define by using Remark~\ref{remark: outputs and attention}.

Next, we define three important similarities between $\BNL$-programs and neural networks that we will use to obtain our translations. The similarities are canonical and not trivializing. 
\begin{enumerate}
    \item The first canonical similarity is a similarity $\rho_1$ from $\fF$ to $\{0,1\}$ and it is actually the one that we defined previously in Section \ref{one-hot} where it was called an embedding of $\fF$ into binary. Recall that $\rho_1$ thus maps each sequence of floating-point numbers in $\fF$ to a corresponding one-hot representation.
    \item The second canonical similarity is a similarity $\rho_2$ from $\{0,1\}$ to $\fF'$ and it 
    is induced by an arbitrary embedding $\sigma \colon \{0,1\} \to \fF'$ as follows. 
    We define for each bit string $b_1 \cdots b_n \in \{0,1\}^*$ that $\rho_2(b_1 \cdots b_n) = (\sigma(b_1), \dots, \sigma(b_n))$.
    \item Taking the canonical similarity $\rho_1$ from $\fF$ to $\{0,1\}$ and a canonical similarity $\rho_2$ from $\{0,1\}$ to $\fF'$, we obtain a canonical similarity $\rho_2 \circ \rho_1$ from $\fF$ to $\fF'$.
\end{enumerate}
For the rest of the paper, when we say that \textbf{1)} a $\BNL$-program is equivalent{} to a general neural network, \textbf{2)} a general neural network is equivalent{} to a $\BNL$-program, or \textbf{3)} a general neural network is equivalent{} to another general neural network, we mean equivalent{} w.r.t. a corresponding canonical similarity.

\subsection{From NN to BNL}

We provide a translation from general neural networks to Boolean network logic. The proof is based on the results obtained for floating-point arithmetic in Section \ref{fpa}. Note that in the theorem below the obtained program is equivalent{} w.r.t. a canonical similarity as defined in Section \ref{subsection:equivalence_of_NNs}.
\emph{Informally, the following theorem states that we can translate a general neural network to an equivalent program with a small blow-up in size and small computation delay.}

\begin{theorem}\label{NN_to_BNL}
    Let $\cF(p, q, \beta)$ be a floating-point format, let $r = \max\{p,q\}$ and let $\cN$ be a (possibly heterogeneous) general neural network for $\cF$ (or for $\cF_{\mathsf{n}}$) with $N$ nodes, 
    $L$ edges, 
    degree $\Delta$ and piece-size~$P$. Let $\Omega$ be the order of $\cN$
    (setting $\Omega := 2$ if the order is $1$ or $0$).
    We can construct an equivalent{} $\BNL$-program $\Lambda$ such that
    \begin{itemize}
        \item the size of $\Lambda$ is $\ordo((N P \Omega + L) (r^4 + r^3\beta + r^{2}\beta^2 + r\beta^3))$,
        \item there are 
        $\ordo((N P\Omega + L)(r^{2} \beta + r \beta^{2}))$ 
        head predicates in $\Lambda$, and
        \item the computation delay of $\Lambda$ is $\ordo(\log(\Omega)(\log(r) + \log(\beta)) + \log(\Delta))$.
    \end{itemize}
\end{theorem}

\begin{proof}
Intuitively, the $\BNL$-program must, for each node in the neural network, carry out the aggregation of the activation values of its in-neighbours and then simulate the activation function, and repeat this process indefinitely via a while loop. The head predicates are split such that for each node there are separate head predicates which calculate its activation values. Each one-hot representation of a number in $\cF$ contains $\ell = 2 + \beta(p + q)$ bits, and so each node has $\ell$ head predicates that encode its activation value. For input nodes, these are input predicates, and for output nodes, these are print predicates. Additionally, each attention node has an associated attention predicate. Whereas a neural network can calculate new activation values in one step, doing so with $\BNL$ would blow up the size of the program, and so we use the algorithms introduced in Section \ref{fpa}.

First, we handle aggregations. 
Each of the $N$ nodes simultaneously calculates the aggregation of its in-neighbours' activation values, which consists of 
$L$ multiplications and $L$ additions across the whole graph (including the addition of biases).
The size of the aggregation is thus 
$L$
times the sum of the size of multiplication in Lemma \ref{BL multiplication} 
and addition in Lemma \ref{fp-addition}, which totals 
$\ordo(L (r^4 + r^3\beta + r^{2}\beta^2 + r\beta^3))$. 
The multiplications are done all at once in parallel, after which the additions are performed in a parallel 
sequence of length at most $\lceil \log(\Delta) \rceil$.
Thus, the time required is the time for multiplication in Lemma \ref{BL multiplication} plus $\lceil \log(\Delta) \rceil$ times the time for addition in Lemma \ref{fp-addition}, and therefore each aggregation phase requires in total $\ordo(\log(r) + \log(\beta)) + \ordo(\log(\Delta))$ time.

Then, we move on to activation functions. Each of the $N$ nodes simultaneously calculates a piecewise polynomial function over $\cF$; by Theorem \ref{piecewise-polynomial_simulation} this requires size 
$\ordo(N P \Omega (r^4 + r^3\beta + r^{2}\beta^2 + r\beta^3))$
and time 
$\ordo(\log(\Omega)(\log(r) + \log(\beta)))$.
Combining this with the aggregation, we get a program of size 
$\ordo((N P \Omega + L) (r^4 + r^3\beta + r^{2}\beta^2 + r\beta^3))$ 
and computation delay 
$\ordo(\log(\Omega)(\log(r) + \log(\beta)) + \log(\Delta))$.

As for the attention nodes, each of them requires a subprogram that compares the node's activation value to its threshold value, which is stored in normalized or subnormalized form. This subprogram then consists of \textbf{1)} 
normalizing the activation value,
\textbf{2)} comparing the exponents and \textbf{3)} comparing the significands. At worst, we have $N$ attention nodes and thus $N$ instances of shifting and $2N$ comparisons. This does not affect the size and time complexity of the program.

As these steps must be iterated indefinitely, we also include a one-hot counter in the program that tells when to loop. By calculating the amount of time $n$ required to simulate aggregation and activation functions in the worst case scenario (thus, $n$ is linear in the computation delay), we define the one-hot counter $T_{0}, \dots, T_{n}$ where the predicates are used as conditions to ensure that the calculations are carried out synchronously for each node. In particular, $T_{n}$ is used as a condition for attention predicates, which check if the attention threshold is surpassed in any attention node, and $T_{0}$ is used to trigger ``communication'' between nodes. The counter does not add to the space or time complexity of the program.

Lastly, as the program contains $L$ multiplications with $\ordo(r^2 \beta + r \beta^2)$ head predicates each and $N$ piecewise polynomial activation functions with $\ordo(P \Omega (r^2 \beta + r \beta^2))$ head predicates each, the total number of head predicates is $\ordo((N P \Omega + L)(r^2 \beta + r \beta^2))$ (just like with size, this dwarfs the number of head predicates introduced by comparisons, shifting and the one-hot counter).
\end{proof}

\subsection{From BNL to NN}\label{BNL to NN}

When translating $\BNL$-programs to neural networks, the choice of activation function can greatly influence the similarity by which equivalence is obtained. We consider the homogeneous case, i.e., the case where each node has the same activation function, and provide a general axiomatization for activation functions and the corresponding similarity for which we obtain equivalence. 
The axiomatization includes two alternative cases: the second one one utilizes the properties of floating-point arithmetic, rounding and saturation in particular, while the first one does not.

Intuitively, the axiomatization states that given an activation function, there are three points on the graph of the activation function which do not belong to the same line and are at a suitable distance from each other. In more detail, there are three distinct points $(x_1, y_1)$, $(x_2, y_2)$, $(x_3, y_3)$ on the graph of the activation function such that $(x_1, y_1)$ and $(x_2, y_2)$ are on the same horizontal line and $(x_3, y_3)$ is equally far from $(x_2, y_2)$ along the $x$-axis as $(x_1, y_1)$ (but in the opposite direction) and not on the same horizontal line as the two other points, i.e., $x_3 \neq x_1$, $\abs{x_2 - x_3} = \abs{x_2 - x_1}$, $y_1 = y_2$ and $y_1 \neq y_3$.  In order to simulate a $\BNL$-program with a neural network by using arithmetic operations we have to assume that some numbers w.r.t. $x_1, x_2, x_3, y_1, y_2, y_3$ are accurately representable in the floating-point format that is used for the corresponding neural network. 
Moreover, in the case where the activation function increases or decreases fast enough w.r.t. the maximum (or minimum) value of the floating-point format, we obtain even simpler requirements which take advantage of the properties of floating-point formats.

The axiomatization covers a wide class of approximations of activation functions (see \cite{activation_function_survey} for an overview on activation functions), including the following: \emph{linear functions} $y = kx$, the \emph{rectified linear unit} $\mathrm{ReLU}(x) = \max\{0, x\}$ and many variations thereof, the \emph{Heaviside step function} $H(x) = 1$ if $x > 0$ and $H(x) = 0$ otherwise, the \emph{sigmoid} function $s(x) = \frac{1}{1 + e^{-x}}$, the \emph{hyperbolic tangent} $\tanh(x) = \frac{e^{x} - e^{-x}}{e^{x} + e^{-x}}$, and the \emph{Gaussian function} $g(x) = e^{-x^{2}}$.

Now, we begin to formally define the axiomatization.
Let $\fF$ be an arbitrary floating-point format or normalized float format and let $\alpha \colon \fF \to \fF$ be a piecewise polynomial activation function. 
We also define a corresponding canonical similarity $\sigma$ from $\{0,1\}$ to $\fF$ for the axiomatization.
Below in the proof of Theorem \ref{BNL_to_NN}, we identify each floating-point number in $\fF$ with the rational number that it represents.
We also let $F_{\max}$ and $F_{\min}$ respectively denote the greatest and least element in $\fF$.

The axiomatization is given below, where the sum, subtraction, multiplication and division are the precise operations over the field of real numbers. There are two possible cases, where the former places more restrictions on the floating-point format than the latter.
\begin{equation}\label{A}\tag{A}
    \parbox{0.85\textwidth}{%
    There exist $x_0,y_0,d,h \in \R$, $d, h \neq 0$ such that $x_0$ and $x_0+d$ are accurately representable in $\fF$, $\alpha(x_0)$ accurately represents $y_0-h$, $\alpha(x_0+d)$ accurately represents $y_0+h$ and one of the following holds:
    \smallskip
    
    \begin{enumerate}
        \item The following real numbers are also accurately representable in $\fF$: $x_0-d$, $\frac{d}{h}$, $\frac{d}{2h}$, $\frac{d}{h}y_0$, $x_0 + \frac{d}{h}y_0$, $x_0 - \frac{d}{h}y_0$, $\frac{d}{h}y_0 + d$, $\frac{d}{h}y_0 - d$, $\frac{d}{2h}y_0 + \frac{d}{2}$ and $\frac{d}{2h}y_0 - \frac{d}{2}$. 
        Moreover, we require that $\alpha(x_0-d) = \alpha(x_0)$.
        \item $d \in \{F_{\max}, F_{\min}\}$, $y_0 = -h$ and $\abs{y_0} \geq \frac{1}{2}$.
    \end{enumerate}
    }
\end{equation}
The chosen similarity is the canonical similarity induced by $\sigma(0) = \alpha(x_0)$ and $\sigma(1) = \alpha(x_0+d)$ as specified in Section \ref{subsection:equivalence_of_NNs}. 
We say that an activation function $\alpha \colon \fF \to \fF$ is \textbf{regular} if it satisfies Axiomatization \ref{A}.

Case 1 of the axiomatization covers, e.g., the rectified linear unit $\mathrm{ReLU}$ (and biased variations thereof) and the Heaviside step function.
Note that $\mathrm{ReLU}$ and Heaviside satisfy the axioms with any floating-point format $\fF$, as we can simply choose $x_0 = 0$ and $d = 1$, in which case $y_0 = h = \frac{1}{2}$ and 
all of the other values in Case 1 of \ref{A} are in the set $\{-1, 0, 1, 2\}$.
Case 2 of the axiomatization implies that $\abs{\alpha(x_0)} \geq 1$ and $\alpha(x_0+d) = 0$, and it covers e.g., any function of the type $\alpha(x) = kx$ where $k \in \fF$ and $k \neq 0$ as we can choose $x_0 = F_{\max}$ and $d = F_{\min}$ (with a large enough floating-point format $\fF$ to ensure that $\abs{\alpha(F_{\max})} \geq 1$). Case~2 also covers many variations of $\mathrm{ReLU}$ and piecewise polynomial approximations of the Gaussian function (in the case of the Gaussian function we simply choose $x_0 = 0$). Moreover, due to underflow Case 1 of the axiomatization also covers floating-point approximations of some linear functions $y = ax + b$, where $a$ and $b$ are small enough w.r.t. the floating-point format. For example, consider $\cF(1,1,10)$ and let $f(x) = \frac{x}{10} + \frac{2}{10}$, $x_0 = \frac{5}{10}$, $d = -\frac{1}{10}$, $y_0 = \frac{25}{100}$ and $h = -\frac{5}{100}$. Now, we have $f(x_0 - d) = f(x_0) = y_0 - h = \frac{3}{10}$ and $f(x_0 + d) = y_0 + h = \frac{2}{10}$ as wanted and all the numbers required in the axiomatization are in the set $\{0,\frac{2}{10}, \frac{3}{10}, \frac{4}{10}, \frac{5}{10}, \frac{6}{10}, 1, 2\} \subseteq \fF$. Thus, the function $f$ satisfies the axiomatization (note also that $y_0 \notin \fF$, but this is allowed as the axiomatization only requires that $y_0-h, y_0+h \in \fF$).

By utilizing properties of floating-point arithmetic, Case 1 of the axiomatization also covers such piecewise polynomial approximations $\alpha$ of \textbf{1)} the sigmoid function and the Gaussian function 
that output $0$ with a small enough floating-point number and \textbf{2)} the hyperbolic tangent that output $-1$ with a small enough floating-point number. This is a justified assumption because the values of these functions approach $0$ (resp. $-1$) very quickly when moving toward negative infinity, e.g. the sigmoid function $s(x) = \frac{1}{1 + e^{-x}}$ decreases exponentially toward $0$ when $x$ approaches $- \infty$.
This means that we can choose $x_0$ to be any (negative) floating-point number such that $s(x_0)$ rounds to $0$ in $\fF$ and $2x_0 \in \fF$ and then choose $d = -x_0$. This fixes $y_0 = h = \frac{1}{4}$ and we have $\alpha(x_0 - d) = \alpha(2x_0) = \alpha(x_0) = 0$ and $\alpha(x_0+d) = \alpha(0) = \frac{1}{2}$. This idea works also for the Gaussian function $g$ by replacing $s$ with $g$ but notice that $y_0 = h = \frac{1}{2}$ and $\alpha(0) = 1$. The same idea also works for the hyperbolic tangent, but there we choose $x_0$ to be any floating-point number such that $\tanh(x_0)$ rounds to $-1$ in $\fF$, in which case we get $y_0 = -\frac{1}{2}$, $h = \frac{1}{2}$ and $\alpha(0) = 0$.

We are now ready to translate $\BNL$-programs into equivalent neural networks.
In the below theorem, the equivalent neural network for $\fF$ is constructed with respect to the corresponding canonical similarity associated with Axiomatization \ref{A}.
The weights and biases of the network are defined slightly differently depending on which case of \ref{A} is satisfied.
\begin{theorem}\label{BNL_to_NN}
    For any regular activation function $\alpha$, a $\BNL$-program of size $s$ and depth $d$ translates to an equivalent general neural network with $\ordo(s)$ nodes, computation delay $\ordo(d)$ and $\alpha$ at each node.
\end{theorem}
\begin{proof}
    We consider the case where the $\BNL$-program $\Lambda$ includes attention predicates, since the case where the output rounds are given by an attention function is trivial to obtain by Remark \ref{remark: outputs and attention}.
    We first use Theorem \ref{BNL_to_FBNL} to translate $\Lambda$ into an equivalent{} fully-open $\BNL$-program $\Lambda'$ where $\Lambda'$ has size $\ordo(s)$ and computation delay $\ordo(d)$. 
    Next, we can construct for any regular activation function $\alpha \colon \fF \to \fF$ a general neural network $\cN$ over $\fF$ that is strongly equivalent to $\Lambda'$ (w.r.t. the canonical similarity $\sigma$ from $\{0,1\}$ to $\fF$ associated with Axiomatization \ref{A}) as follows. We create a node $v_{X}$ for each head predicate $X$ of $\Lambda'$ (hence we have $\ordo(s)$ nodes), and an edge from node $v_{X}$ to $v_{Y}$ iff $X$ appears in the induction rule of $Y$. Note that the in-degree of each node is thus at most~$2$.
    The input/output/attention nodes of $\cN$ are the nodes $v_{X}$, where $X$ is an input/print/attention predicate of $\Lambda'$.
    The nodes are ordered according to the ordering of the head predicates.
    We define the bias of $v_{X}$ and the weights of its incoming edges depending on the induction rule $\psi$ of $X$.
    Since schemata of the form $\top \land Y$ and $Y \land \top$ are logically equivalent to $Y$ and since $\top \land \top$ is logically equivalent to $\top$, we do not consider schemata of these types separately.
    Recall that $\alpha(x_0-d) = \alpha(x_0) = y_0-h$, $\alpha(x_0+d) = y_0+h$, $\sigma(0) = \alpha(x_0)$ and $\sigma(1) = \alpha(x_0+d)$.
    If $\psi = \top$, then the bias is $x_0 + d$ and if $\psi = \neg \top$, then the bias is~$x_0$. 
    We proceed to define the remaining cases depending on how $\alpha$ satisfies~\ref{A}.

    First, we consider the case where $\alpha$ satisfies Case 1 of \ref{A}. If $\psi = Y$, then the bias is $x_0 - \frac{d}{h}y_0$ and the weight is $\frac{d}{h}$. If $\psi = \neg Y$, then the bias is $x_0 + \frac{d}{h}y_0$ and the weight is $-\frac{d}{h}$. Finally, if $\psi = Y \land Z$, then the bias is $x_0 - \frac{d}{h}y_0$ and the weights are $\frac{d}{2h}$.

    We go over the calculations in the case of negation, and leave the remaining cases to the reader.
    For all $t \in \N$, we calculate the activation value $g_{t+1}(v_X)$ based on previous activation values $g_t$, where $g_t$ is the state of the network at time $t$ as defined in Section \ref{section: general neural networks}.
    If $\psi = \neg Y$, then $g_{t+1}(v_X) = \alpha((x_0 + \frac{d}{h}y_0) + g_{t}(v_Y)(-\frac{d}{h}))$ where
    \begin{itemize}
        \item if $g_{t}(v_Y) = y_0+h = \sigma(1)$, we get $\alpha((x_0 + \frac{d}{h}y_0) + (-\frac{d}{h}y_0 - d)) = \alpha(x_0 - d) = \sigma(0)$, and
        \item if $g_{t}(v_Y) = y_0-h = \sigma(0)$, we get $\alpha((x_0 + \frac{d}{h}y_0) + (-\frac{d}{h}y_0 + d)) = \alpha(x_0 + d) = \sigma(1)$.
    \end{itemize}

    Second, we consider the case where $\alpha$ satisfies Case 2 of \ref{A}.
    We consider the case where $y_0 \geq \frac{1}{2}$, since the case where $y_0 \leq -\frac{1}{2}$ is obtained simply by switching the signs of the weights. If $\psi = Y$, then the bias is $x_0+d$ and the weight is $-d$. If $\psi = \neg Y$, then the bias is $x_0$ and the weight is $d$. 
    Finally, if $\psi = Y \land Z$, then the bias is $x_0 + d$ and the weights are $-d$. 
    We demonstrate floating-point calculations with $d$ in the case where $\psi = Y \land Z$ and both $v_Y$ and $v_Z$ have the activation value $\alpha(x_0) = \sigma(0)$; then the value of $v_X$ becomes $\alpha((x_0+d) + (\alpha(x_0) \cdot (-d) + \alpha(x_0) \cdot (-d))) = \alpha((x_0+d) + (-d -d)) = \alpha((x_0+d) - d) = \alpha(x_0) = \sigma(0)$, where the first equality holds because $\alpha(x_0) = y_0-h = 2y_0 \geq 1$ and the second because $-d - d = -d$.
\end{proof}

\subsection{Fixed parameter results and other consequences}\label{section:remarks}

From Theorems \ref{NN_to_BNL} and \ref{BNL_to_NN}, we obtain the following corollary in the scenario where the parameters of the floating-point format and the activation functions are assumed to be constant (the latter direction is unchanged from Theorem \ref{BNL_to_NN}).
\begin{theorem}\label{theorem:NN_to_BNL_fixed}
    Given a fixed floating-point format $\cF$ and fixed parameters for the activation functions, the following hold.
    \begin{itemize}
        \item A general neural network for $\cF$ (resp. for $\cF_{\mathsf{n}}$) with $N$ nodes, 
        $L$ edges 
        and degree $\Delta$ translates to an equivalent{} $\BNL$-program of size 
        $\ordo(N + L)$ 
        and computation delay $\ordo(\log(\Delta))$.
        \item For any regular activation function $\alpha$ over $\cF$ (resp. over $\cF_{\mathsf{n}}$), a $\BNL$-program of size $s$ and depth $d$ translates to an equivalent general neural network for $\cF$ (resp. for $\cF_{\mathsf{n}}$) with $\ordo(s)$ nodes, computation delay $\ordo(d)$ and $\alpha$ at each~node.
    \end{itemize}
\end{theorem}

We note that it is possible to obtain different space and time complexities for Theorem \ref{NN_to_BNL} depending on the implementation of the arithmetic operations.
For example, in circuit complexity there are various ways to implement integer addition, multiplication and comparison with some size-depth tradeoffs in the literature, e.g., the Kogge-Stone adder \cite{Kogge-stone}.
These circuit complexity results for arithmetic integer operations can be used by taking the corresponding circuit for an arithmetic operation, transforming it into a corresponding self-feeding circuit and then using Theorem \ref{thrm:SF_circuit_to_BNL} to obtain the wanted program. 
Next we present a corollary in the case where the subprograms for integer addition, comparison and multiplication are arbitrary.

Let $\cF(p, q, \beta)$ be a floating-point format and let $r = \max\{p,q\}$.
Assume that we can simulate integer operations in $\cZ(r, \beta)$ in $\BNL$ as follows: \textbf{1)} integer comparison in size $s_C$ and computation time $t_C$ (and $h_C$ head predicates), \textbf{2)} integer addition in space $s_A$ and computation time $t_A$ (and $h_A$ head predicates), and \textbf{3)} integer multiplication in space $s_M$ and computation time $t_M$ (and $h_M$ head predicates). Then we obtain the following corollary.
\begin{corollary}\label{corollary:black_box}
    Let $\cN$ be a general neural network for $\cF$ (or for $\cF_{\mathsf{n}}$) 
    with $N$ nodes, 
    $L$ edges, 
    degree~$\Delta$ and piece-size $P$. Let $\Omega$ be the order of $\cN$
    (setting $\Omega := 2$ if the order is $1$ or $0$).
    We can construct an equivalent{} $\BNL$-program $\Lambda$ such that
    \begin{itemize}
        \item the size of $\Lambda$ is 
        $\ordo((NP\Omega + L)(s_M + s_A + s_C + r^2 \beta))$,
        \item there are 
        $\ordo((N P \Omega + L) (h_M + h_A + h_C))$
        head predicates in $\Lambda$, and
        \item the computation delay of $\Lambda$ is $\ordo(\log(\Omega)(t_M + t_A + t_C) + \log(\Delta)(t_A + t_C))$.
    \end{itemize}
\end{corollary}
\begin{proof}
    We analyze the size and time delay in the $\BNL$-implementations of shifting, normalization, rounding, floating-point addition, floating-point multiplication and piecewise polynomial floating-point functions. The proofs below are essentially the same as the proofs in previous sections, except that we replace the subprograms for integer operations with the subprograms above.
    
    Shifting to the right involves integer addition and auxiliary rules of size $\ordo(r^2 \beta)$, and thus has size $\ordo(s_A + r^2 \beta)$ and computation time $\ordo(t_A)$ (and $\ordo(h_A)$ head predicates). Shifting to the right by one reduces the size to $\ordo(s_A)$ while the computation time (and number of head predicates) stays the same. Shifting to the left is otherwise the same as shifting to the right, but it also involves integer comparison, so it has size $\ordo(s_A + s_C + r^2 \beta)$ and computation time $\ordo(t_A + t_C)$ (and $\ordo(h_A + h_C)$ head predicates). Thus, shifting to the right dwarfs integer addition and shifting to the left dwarfs shifting to the right.

    Normalization involves integer comparison, integer addition and shifting to the left, so it has the same complexities as shifting to the left: size $\ordo(s_A + s_C + r^2 \beta)$ and computation time $\ordo(t_A + t_C)$ (and $\ordo(h_A + h_C)$ head predicates).

    Rounding involves integer addition, integer comparison and shifting to the right by one, so it has size $\ordo(s_A + s_C)$ and computation time $\ordo(t_A + t_C)$ (and $\ordo(h_A + h_C)$ head predicates). Thus, rounding is dwarfed by normalization.

    Floating-point addition involves integer comparison, integer addition, shifting to the right, normalization and rounding. Thus, it has the same complexities as normalization: size $\ordo(s_A + s_C + r^2 \beta)$ and computation time $\ordo(t_A + t_C)$ (and $\ordo(h_A + h_C)$ head predicates).

    Floating-point multiplication involves integer addition, integer multiplication, rounding and normalization. Thus, it has size $\ordo(s_M + s_A + s_C + r^2 \beta)$ and computation time $\ordo(t_M + t_A + t_C)$ (and $\ordo(h_M + h_A + h_C)$ head predicates). Thus, floating-point multiplication dwarfs floating-point addition.

    Simulating a piecewise polynomial floating-point function with order $\Omega$ and $P$ pieces involves $\ordo(P)$ integer comparisons, $\ordo(\Omega)$ floating-point multiplications in a binary tree of height $\ordo(\log(\Omega))$, $\ordo(P \Omega)$ floating-point multiplications in parallel and $\ordo(P\Omega)$ floating-point additions in a binary tree of height $\ordo(\log(\Omega))$. Thus, the size of the program is $\ordo(P \Omega (s_M + s_A + s_C + r^2 \beta))$ and the computation time is $\ordo(\log(\Omega)(t_M + t_A + t_C))$ (and the number of head predicates is $\ordo(P \Omega (h_M + h_A + h_C))$).

    Finally, a single iteration of a general neural network involves 
    $L$
    floating-point multiplications in parallel, 
    $L$
    floating-point additions in a binary tree of height $\log(\Delta)$ and $N$ piecewise polynomial floating-point functions in parallel. Thus, the size is 
    $\ordo((N P \Omega + L) (s_M + s_A + s_C + r^2 \beta))$
    and the delay is $\ordo(\log(\Omega)(t_M + t_A + t_C) + \log(\Delta) (t_A + t_C))$ (and 
    $\ordo((N P \Omega + L) (h_M + h_A + h_C))$
    head predicates).
\end{proof}

While we considered neural networks with saturating floating-point formats, there are several other floating-point formats used in practical applications that could be studied, e.g. fixed-point arithmetic. 
As mentioned before, our floating-point formats do not include special symbols, e.g., $\infty$, $-\infty$ or $\mathrm{NaN}$ (Not a Number). However, we believe our translations in both directions generalize (with some modifications) for formats that do include them even when cases of overflow map to $\infty$ or $- \infty$ although with the apparent exception of activation functions that only satisfy Case 2 of Axiomatization \ref{A} in Theorem \ref{BNL_to_NN}.

As a concluding note for this section, the match between $\BNL$ and neural networks provides a 
concrete demonstration of the obvious fact that---in some relevant 
sense---there is no difference
between symbolic and non-symbolic approaches. Under reasonable background assumptions, non-symbolic approaches can be technically reduced to symbolic ones. More than to the differences between the symbolic and non-symbolic realms, the clear advantages of modern AI methods relate to the difference between systems based on explicit programming and systems that involve an aspect of learning not based on explicit and fully controlled programming steps.

\section{Results for feedforward neural networks}\label{sec: fnn}

While we have so far studied general neural networks, 
in many practical applications the neural networks that are used are feedforward \cite{rumelhart1986learning}.
A feedforward network contains no loops and the nodes are instead split into multiple distinct successive \emph{layers}. Each node calculates its activation value as before, but only based on nodes in previous layers. In this section, we provide definitions for feedforward neural networks and analogous definitions for feedforward $\BNL$-programs. 
We then show that the translations of Section \ref{descsection} apply for the feedforward framework with the same size complexities and delays. We also provide translations for the synchronized feedforward framework, where the connections between layers cannot skip layers, with only an additional quadratic size increase in one direction or the other. 
We cover both cases for the sake of generality, since it is not always clear in the literature whether a neural network framework allows asynchronicity.
The characterization is given in Theorems~\ref{theorem:FFNN_to_FFBNL} and~\ref{theorem:FFBNL_to_FFNN}.

During this section, we do not consider programs or neural networks with attention predicates or attention nodes respectively, but instead use attention functions. 
The reason for this is because feedforward networks in practice only output once, and the output round is simply determined by the number of layers in the network.
For the same reason, instead of computation delay we pay special attention to the number of layers resulting from our translations.

Before we define feedforward neural networks and programs formally, we define the skeleton of a $\BNL$-program or neural network which essentially tells its topology. Let $\Lambda$ be a $\cT$-program of $\BNL$. The \textbf{skeleton} of $\Lambda$ is a graph $(\cT, E_{\Lambda})$, where $E_{\Lambda} = \{\, (X, Y) \in \cT^2 \mid \text{$X$ appears in the induction rule of $Y$} \,\}$.
Next, let $\cN = ((V, E, <^V), \fa, \fb, \fw, \pi)$ be a general neural network. The \textbf{skeleton} of $\cN$ is the graph $(V, E)$.
We say that $\Lambda$ is an \textbf{$n$-layer feedforward $\BNL$-program} (resp. $\cN$ is an \textbf{$n$-layer feedforward neural network}) if 
its skeleton 
is a directed acyclic graph of depth $n-1$. 
Moreover, the input predicates (resp. the input nodes) are the nodes in the skeleton with in-degree $0$, and the print predicates (resp. the output nodes) are the nodes in the skeleton with out-degree $0$. 
Finally, $\Lambda$ (resp., $\cN$) is always associated with an attention function that maps each input to $n-1$.
The \textbf{layer} of each predicate $X \in \cT$ (resp. the layer of each node $v \in V$) is its height in the skeleton. 
We say that $\Lambda$ (resp. $\cN$) is \textbf{synchronous} if its skeleton also satisfies the following conditions:
\begin{enumerate}
    \item the nodes of the skeleton with out-degree zero have height $n-1$ and
    \item for each node $v$ of the skeleton, the in-neighbours of $v$ have height $\mathsf{height}(v)-1$. 
\end{enumerate}

While feedforward programs (resp. feedforward neural networks) are syntactically also programs of $\BNL$ (resp. general neural networks) in the ordinary sense, we provide them with a different semantics that is similar to the previously defined semantics but with the following distinction.
Informally, the truth value of a head predicate in the program (resp. the activation value of a node in the network) stops updating after the round $h$, where $h$ is the height of the predicate (resp. node) in the skeleton. 
These semantics are analogous to the semantics of ordinary Boolean circuits.
More formally, let $X$ be a predicate in the feedforward program (resp. let $v$ be a node in the feedforward neural network) and let $(g_t)_{t \in \N}$ be the sequence of global configuration functions defined in Section \ref{subsection: notions_SC_BNL} (resp. let $(g_t)_{t \in \N}$ be the sequence of states of the network defined in Section \ref{section: general neural networks}); we modify each $g_t$ by defining that $g_{t}(X) = g_h(X)$ (resp. $g_{t}(v) = g_h(v)$) for all $t \geq h$, where $h$ is the height of $X$ (resp. $v$) in the skeleton.

Given a feedforward $\BNL$-program, the \textbf{schema depth} of a given layer $i$ is the maximum depth of an induction rule of a head predicate in that layer.
We say that a $\BNL$-schema $\theta$ is \textbf{balanced} if for each subschema of $\theta$ of the form $\varphi \land \psi$, the schemata $\varphi$ and $\psi$ have the same depth. 
Analogously, we call a feedforward $\BNL$-program \textbf{balanced} if \textbf{1)}~each body of the induction rules is balanced, and \textbf{2)}~for each layer, the induction rules of head predicates in that layer all have the same depth. Lastly, a feedforward $\BNL$-program is \textbf{pure} if the only occurrences of $\top$ are in the base rules of non-input predicates or the induction rules of input predicates.

Given the above definitions, the output of a feedforward neural network is invariant w.r.t. the function $\pi$ giving initial values to non-input nodes as well as the biases and activation functions of input nodes. Likewise, the output of a feedforward program is invariant w.r.t. the base rules of non-input predicates and the induction rules of input predicates.

It is trivial to turn a feedforward program into a pure feedforward program in linear size:
\begin{lemma}\label{lemma:FFBNL_to_pureFFBNL}
    For each feedforward $\BNL$-program $\Lambda$ of size $s$, we can construct a strongly equivalent pure feedforward $\BNL$-program $\Lambda'$ where $\Lambda'$ has size $\ordo(s)$ and the same number of layers as $\Lambda$. If $\Lambda$ is balanced (resp. synchronous), then so is $\Lambda'$.
\end{lemma}
The proof centers around replacing instances of $\top$ in the induction rules with logically equivalent schemata (e.g. $\neg (X \land \neg X)$, where $X$ is a predicate from a previous layer). Synchronicity is trivially preserved. Making the resulting schemata and program balanced involves similar techniques as the proof of the next lemma.

We show that feedforward $\BNL$-programs translate to equivalent balanced synchronous feedforward programs with only a quadratic size blow-up.
\begin{lemma}\label{unbalanced_BNL_to_balanced_BNL}
    Given a pure unbalanced feedforward $\BNL$-program of size $s$ and depth $d$, we can construct an equivalent{} pure balanced synchronous feedforward $\BNL$-program of size $\ordo(s^2)$ and depth $\ordo(d)$ with the same number of layers. 
\end{lemma}
\begin{proof}
    First, we balance the feedforward $\BNL$-program in two steps. We first replace the body of each induction rule in layer $0$ with $\top$. Let $\varphi \land \psi$ be a $\BNL$-schema in the program in the layer $\ell \geq 1$ such that the depth of $\varphi$ is greater than that of $\psi$. Let $Y$ be a head predicate in the layer $\ell-1$, and define 
    $\tau \colonequals \neg((Y \land Y) \land \neg Y)$
    (i.e. $\tau$ is a balanced tautology with depth $3$). 
    Let $d \in \N$ be the difference between the depths of $\psi$ and $\varphi$.
    Let $k$ be the depth of $\psi$; we assume that $k \geq 3$, as we can always modify $\varphi$ and $\psi$ by adding $2$ or $4$ negations in front of them.
    We define a schema $\gamma(\psi, d)$ that is logically equivalent to $\psi$ and adds $d$ to the depth of $\psi$ as follows (also, $\gamma(\psi, d)$ is balanced if $\psi$ is balanced).
    We start by defining some auxiliary notations.
    Given $i \in \N$, we let $(\neg)^i$ denote $i$ nested negations.
    Moreover, given $i \in \N$, $i \geq 3$, we let $\chi_i$ denote a balanced schema $\chi_i$ of depth $i$ that is a tautology:
    let $\chi_i$ denote $(\neg)^{i-3} \tau$ if $i$ is odd and $(\neg)^{i-4} (\tau \land \tau)$ if $i$ is even.
    Now, we can define $\gamma(\psi, d)$. If $d$ is even, then $\gamma(\psi, d) \colonequals (\neg)^d \psi$. If $d$ is odd, then $\gamma(\psi, d) \colonequals (\neg)^{d-1} ( \chi_k \land \psi)$. 
    We replace each subschema $\varphi \land \psi$ by the logically equivalent schema $\varphi \land \gamma(\psi, d)$ (this is done recursively starting with schemata of depth $0$, then $1$, etc.). 
    Next, we balance the depths of the bodies of rules in each layer by using a similar method as follows. Let $d_{\theta}$ be the depth of the body $\theta$ of a rule and let $D$ be the depth of the program. If $D-d_{\theta}$ is non-zero, then $\theta$ is replaced by $\gamma(\theta, D-d_{\theta})$.

    Lastly, we synchronize the program by adding missing head predicates to each layer. Let $X$ be a head predicate which appears in the layer $n$ and let $n' = n + m$ be the maximal layer where $m \in \Z_+$, $m \geq 2$, such that $X$ appears in an induction rule in that layer. 
    In the case where $X$ is a print predicate, we instead set that $n' = L$, where $L$ is the number of layers in the program.
    For each $i \in [m-1]$, let $d_i$ denote the depth of the layer $n+i$; we can assume that $d_i \geq 4$ as we can always modify the induction rules such that this is the case with only a linear size increase. Now, 
    we let $X_{0}$ denote $X$, and
    for each $i \in [m-1]$ we define a head predicate $X_i$ with the base rule $X_i (0) \colonminus \bot$ and the induction rule 
    $X_i \colonminus \gamma((\neg)^4 X_{i-1}, d_i-4)$ 
    (the nested negations $(\neg)^4$ are added as $\gamma(\psi, d)$ requires that the depth of $\psi$ is at least $3$). 
    Finally, for each head predicate $Z$ in a layer $n+i$ where $i \in [m]$, $i \geq 2$, we replace each instance of $X$ with $X_{i-1}$.
    If $X$ was a print predicate, then $X_{L-1-n}$ in layer $L-1$ replaces its status as print predicate.

    Clearly, the obtained program is balanced, synchronous and equivalent{} to the original program.
    We analyze the size of the obtained program. 
    In the first step of balancing, where we balance each subschema of the form $\varphi \land \psi$, there may be $\ordo(s)$ subschemata that have to be balanced. Each instance of balancing increases the size by $\ordo(s)$. Thus, after the first step the size of the program is $\ordo(s^2)$. In the second step, the size of each rule increases by at most $\ordo(s)$ which also adds $\ordo(s^2)$ to the size. 
    When we synchronize the program, there are at most $s$ variables that have to be synchronized, and each instance of synchronization increases the size by $\ordo(s)$.
    Since all the steps are independent from each other, the size of the obtained program is $\ordo(s^2)$. 
    It is easy to see that the depth of the obtained program is $\ordo(d)$.
\end{proof}

We are now ready to translate feedforward neural networks into feedforward $\BNL$-programs.
We also analyze the size differences depending on whether the programs are balanced or not, and whether the feedforward neural network is synchronous or not.
Let $\cN$ be any feedforward neural network $\cN$ for any floating-point format $\cF(p,q,\beta)$ (resp., $\cF_{\mathsf{n}}$) with $N$ nodes,
$L$ edges,
degree $\Delta$ and piece-size $P$. Let $\Omega$ be the order of $\cN$ (setting $\Omega := 2$ if the order is $1$ or $0$) and let $r = \max\{p,q\}$. Let $K_{\cN}$ and $T_{\cN}$ denote the size and delay from Theorem \ref{NN_to_BNL}, i.e. we let
$K_{\cN} \colonequals (N P \Omega + L) (r^4 + r^3\beta + r^{2}\beta^2 + r\beta^3)$ 
and $T_{\cN} \colonequals \log(\Omega)(\log(r) + \log(\beta)) + \log(\Delta)$.
Note that in the theorem below, equivalence{} is defined w.r.t. the canonical similarity given in Section \ref{subsection:equivalence_of_NNs}.
\begin{theorem}\label{theorem:FFNN_to_FFBNL}
    Let $\cN$ be a (possibly heterogeneous) $n$-layer feedforward neural network.
    We can construct the following $\BNL$-programs that are equivalent{} to $\cN$ and have $\ordo(nT_{\cN})$ layers:
    \begin{itemize}
        \item an unbalanced feedforward $\BNL$-program of size $\ordo(K_{\cN})$ that is also synchronous if $\cN$ is synchronous,
        \item a balanced synchronous feedforward $\BNL$-program of size $\ordo(K_{\cN}^2)$.
    \end{itemize}
\end{theorem}
\begin{proof}
    We alter the construction of Theorem \ref{NN_to_BNL} to make the resulting program feedforward.
    First, as the neural network is feedforward, it is enough to include a program for each addition and multiplication only once. Furthermore, subprograms that calculate additions can be made feedforward by simply constructing as many copies of each head predicate as there are iteration steps in the addition; because the number of steps is constant, this does not affect the size complexity. 
    As for multiplications, there are at most 
    $(N P \Omega + L)$ 
    multiplications with $\ordo(r^{2} \beta + r \beta^{2})$ head predicates each, which have to be slowed down by $\ordo(\log(r) + \log(\beta))$ (the computation time of floating-point multiplication); this adds up to 
    $\ordo((N P \Omega + L)(r^{2} \beta + r \beta^{2})(\log(r) + \log(\beta)))$ 
    which is dwarfed by $K_{\cN}$ and thus does not affect size.
    Now if $\cN$ was synchronous, then clearly so is $\Lambda$.
    If we want to obtain a balanced synchronous $\BNL$-program, we can simply purify it with Lemma \ref{lemma:FFBNL_to_pureFFBNL} and then apply Lemma \ref{unbalanced_BNL_to_balanced_BNL}; in that case, the size of the program is $\ordo(K_{\cN}^2)$.
    None of the described steps cause any additional increase to the computation delay $T_{\cN}$, and so the number of layers in $\Lambda$ is obtained by multiplying $T_{\cN}$ by the number of layers in $\cN$.
\end{proof}

Next, we show that each feedforward program of $\BNL$ can be turned into a linear-size fully-open one by increasing the number of layers.

\begin{lemma}\label{FFBNL_to_fully-openFFBNL}
    For each pure $n$-layer feedforward $\BNL$-program $\Lambda$ of size $s$ where the schema depth of each layer $i$ is $d_{i}$, we can construct an equivalent{} pure fully-open feedforward $\BNL$-program $\Lambda'$ with $\ordo(\sum_{i = 1}^{n-1} d_{i})$ layers and size $\ordo(s)$.
    Moreover, if $\Lambda$ is balanced and synchronous, then $\Lambda'$ is synchronous.
\end{lemma}
\begin{proof}
    Let $\Lambda$ be a pure $n$-layer feedforward $\BNL$-program. We construct a fully-open pure feedforward $\BNL$-program $\Lambda'$ as follows.
    We construct a head predicate $X_{\psi}$ for each schema $\psi$ in $\Lambda$ with the exception of subschemata that only appear in the induction rules of input predicates of~$\Lambda$. For each schema $\neg \psi$ we construct the induction rule $X_{\neg \psi} \colonminus \neg X_{\psi}$, for each schema $\psi \land \theta$ we construct the induction rule $X_{\psi \land \theta} \colonminus X_{\psi} \land X_{\theta}$, and for each head predicate $Y$ of $\Lambda$ with the induction rule $Y \colonminus \psi$ we construct the induction rule $X_{Y} \colonminus X_{\psi}$ with the exception of input predicates of $\Lambda$ which receive the rule $X_{Y} \colonminus \top$. If $Y$ is an input/print predicate of $\Lambda$, then $X_Y$ is an input/print predicate of~$\Lambda'$.
    The base rules of non-input predicates can be chosen arbitrarily.
    It is easy to see that $\Lambda'$ is equivalent{} to $\Lambda$. Now if $\Lambda$ is balanced and synchronous, then $\Lambda'$ has $d_i$ layers for each layer $i$ of $\Lambda$, each of which computes subschemata of a specific depth, and the edges do not skip layers because $\Lambda$ is balanced and synchronous.
\end{proof}

Now, we are ready to translate feedforward $\BNL$-programs into equivalent feedforward neural networks. We again analyze how the size is affected by whether the program is balanced and synchronous and whether the network is synchronous.
Let $\fF$ be an arbitrary floating-point format or normalized float format.
Analogously to general neural networks, in the below theorem, the equivalent feedforward neural network for $\fF$ that uses a regular activation function is then constructed with respect to the canonical similarity given by Axiomatization \ref{A}.

\begin{theorem}\label{theorem:FFBNL_to_FFNN}
    Let $\alpha$ be any regular activation function and
    let $\Lambda$ be an $n$-layer feedforward $\BNL$-program of size $s$, where the schema depth of each layer $i$ is $d_{i}$. 
    We can construct the following equivalent{}
    $\ordo(\sum_{i = 1}^{n-1} d_{i})$-layer feedforward neural networks that use $\alpha$ (at each node):
    \begin{itemize}
        \item a feedforward neural network with $\ordo(s)$ nodes that is also synchronous if $\Lambda$ is balanced and synchronous.
        \item a synchronous feedforward neural network with $\ordo(s^{2})$ nodes.
    \end{itemize}
\end{theorem}
\begin{proof}
    We start by purifying $\Lambda$ using Lemma \ref{lemma:FFBNL_to_pureFFBNL} which does not blow up the size.
    To obtain a neural network with $\ordo(s)$ nodes,
    we first apply Lemma \ref{FFBNL_to_fully-openFFBNL} to turn purified $\Lambda$ into a pure fully-open feedforward program whose size is linear in the size of $\Lambda$. Then by using Theorem \ref{BNL_to_NN} we turn the obtained fully-open program into an equivalent{} feedforward neural network (skipping the application of Theorem \ref{BNL_to_FBNL}) whose size is linear in the size of the fully-open program (and by extension linear in the size of $\Lambda$). Notice that if $\Lambda$ is balanced and synchronous, then the resulting fully-open program is also pure and synchronous, and in that case the resulting feedforward neural network is also clearly synchronous.

    To obtain a synchronous neural network with $\ordo(s^2)$ nodes, we simply apply Lemma \ref{unbalanced_BNL_to_balanced_BNL} to make purified $\Lambda$ balanced and synchronous, causing a quadratic size increase, and apply Lemma \ref{FFBNL_to_fully-openFFBNL} and Theorem \ref{BNL_to_NN} as before.
\end{proof}

Analogously to Theorem \ref{theorem:NN_to_BNL_fixed}, we obtain the following fixed-parameter result for feedforward neural networks and feedforward $\BNL$-programs.

\begin{theorem}
    Given a fixed floating-point format $\cF$ and fixed parameters for the activation functions, the following hold.
    \begin{itemize}
        \item An $n$-layer feedforward neural network $\cN$ with $N$ nodes, $L$ edges and degree $\Delta$ translates to the following equivalent feedforward $\BNL$-programs that have $\ordo(n\log(\Delta))$ layers:
        \begin{itemize}
            \item an unbalanced feedforward $\BNL$-program of size $\ordo(N+L)$ that is also synchronous if $\cN$ is synchronous,
            \item a balanced synchronous feedforward $\BNL$-program of size $\ordo((N+L)^2)$.
        \end{itemize}
        \item For each regular activation function $\alpha$, a feedforward $\BNL$-program of size $s$, where the schema depth of each layer $i$ is $d_{i}$, translates into the following equivalent $\ordo(\sum_{i = 1}^{n-1} d_{i})$-layer feedforward neural networks with $\alpha$ at each node:
        \begin{itemize}
            \item a feedforward neural network with $\ordo(s)$ nodes that is also synchronous if $\Lambda$ is balanced and synchronous,
            \item a synchronous feedforward neural network with $\ordo(s^{2})$ nodes.
        \end{itemize}
    \end{itemize}
\end{theorem}

\section{Further characterizations of neural networks}\label{sec: corollaries}

In this section, we present a number of important corollaries. The most interesting of these is Theorem~\ref{theorem:NN_to_NN}, which shows that any (heterogeneous) neural network can be transformed into another equivalent{} (homogeneous) neural network where each node has the same activation function with only polynomial blow-up in size and polylogarithmic delay. We further show in Corollary \ref{corollary:NN_to_simple_NN} that the activation function can be chosen to be very simple.
On the other hand, we show in Corollaries \ref{corollary:NN_to_SC} and \ref{corollary:SC_to_NN} that our characterization for neural networks with $\BNL$ can alternatively be obtained with $\SC$ 
and in Corollaries \ref{corollary:NN_to_SFC} and \ref{corollary: SFC to NN} we show the same for self-feeding circuits. 
The characterization with $\SC$ is interesting because recently its extension $\msc$ was used to characterize message passing circuits \cite{ahvonen_journal} and the graded extension $\GMSC$ of $\msc$ was used to characterize recurrent graph neural networks~\cite{gnn_neurips}. 
We also show translations between $\SC$ and self-feeding circuits in Corollary \ref{corollary:SC_SF_circuit}.

In the corollaries of this section, equivalence is always obtained with respect to a canonical similarity described in Section \ref{subsection:equivalence_of_NNs}. When translating neural networks to programs or self-feeding circuits (Corollaries \ref{corollary:NN_to_SC} and \ref{corollary:NN_to_SFC}), we use a canonical similarity from floating-point strings to bit strings. Conversely, when translating programs or self-feeding circuits to neural networks (Corollaries \ref{corollary:SC_to_NN} and \ref{corollary: SFC to NN}), we use a canonical similarity from bit strings to floating-point strings that aligns with the chosen activation function as specified in Section \ref{BNL to NN}. Finally, when translating neural networks into neural networks (Theorem \ref{theorem:NN_to_NN} and Corollaries \ref{corollary:NN_to_NN_fixed} and \ref{corollary:NN_to_simple_NN}), we use the composition of two canonical similarities of the above form, one from floating-point strings to bit strings and the other from bit strings to floating-point strings.

First, we recall some auxiliary notations from Section \ref{sec: fnn}.
Let $\cN$ be any general neural network for any floating-point format $\cF(p,q,\beta)$ (resp., $\cF_{\mathsf{n}}$) with $N$ nodes, 
$L$ edges,
degree $\Delta$, piece-size $P$.
Let $\Omega$ be the order of $\cN$ (setting $\Omega := 2$ if the order is $1$ or $0$) and let $r = \max\{p,q\}$. Let $K_{\cN}$ and $T_{\cN}$ denote the size and delay from Theorem \ref{NN_to_BNL}, i.e. we define 
$K_{\cN} \colonequals (N P \Omega + L) (r^4 + r^3\beta + r^{2}\beta^2 + r\beta^3)$ 
and $T_{\cN} \colonequals \log(\Omega)(\log(r) + \log(\beta)) + \log(\Delta)$.

Our first corollary states that $\SC$-programs and self-feeding circuits translate to each other. We can translate an $\SC$-program into a self-feeding circuit by first applying Theorem \ref{SC_BNL} and then Theorem~\ref{thrm:BNL_to_SF_circuit}. Note that Theorem \ref{SC_BNL} increases the depth of the program, but only by a constant.
We can also translate a self-feeding circuit into an $\SC$-program by applying Theorem \ref{thrm:SF_circuit_to_BNL} and then Theorem \ref{SC_BNL}.
\begin{corollary}\label{corollary:SC_SF_circuit}
    Given an $\SC$-program of size $s$ and depth $\ell$, we can construct an equivalent{} self-feeding circuit with bounded fan-in and $1$ precomputation round either of size $\ordo(s)$ and depth $\ell + \ordo(1)$ or of size $\ordo(s^2)$ and depth $\ordo(\log s)$.

    Respectively, given a self-feeding circuit $C$ with size $n$, fan-in $k$, depth $d$ and $m$ edges, we can construct an equivalent{} $\SC$-program of size $\ordo(n+m)$ and depth $\ordo(k)$ where the computation delay of the program is $\ordo(d)$. Moreover, if $C$ has bounded fan-in, then the size of the program is $\ordo(n)$ and the depth of the program is $\ordo(1)$.
\end{corollary}

The first corollary on neural networks, where we translate a neural network into an $\SC$-program, is obtained by first applying Theorem \ref{NN_to_BNL} and then Theorem \ref{SC_BNL}. 
\begin{corollary}\label{corollary:NN_to_SC}
    For any general neural network $\cN$, we can construct an equivalent{} $\SC$-program $\Lambda$ of size $\ordo(K_{\cN})$ where the computation delay of $\Lambda$ is $\ordo(T_{\cN})$.
\end{corollary}

The second corollary on neural networks, where we translate an $\SC$-program into a general neural network, is obtained by first applying Theorem \ref{SC_BNL} and then Theorem \ref{BNL_to_NN}. Note that the single precomputation round introduced by Theorem \ref{SC_BNL} becomes $\ordo(d)$ precomputation rounds, since the neural network requires $\ordo(d)$ rounds to simulate one round of the program. 
Recall below that a ``regular activation function'' refers to activation functions $\alpha \colon \fF \to \fF$ that satisfy Axiomatization~\ref{A}, and the equivalent neural network for $\fF$ is then constructed with respect to the corresponding canonical similarity.
\begin{corollary}\label{corollary:SC_to_NN}
    Let $\alpha$ be any regular activation function and
    let $\Lambda$ be an $\SC$-program of size $s$ and depth $d < s$. We can construct an equivalent{} general neural network $\cN_\Lambda$ that uses $\alpha$ (at each node), where the number of nodes in $\cN_{\Lambda}$ is $\ordo(s)$ and the computation delay of $\cN_{\Lambda}$ is $\ordo(d)$ with $\ordo(d)$ precomputation rounds.
\end{corollary}

The first corollary between self-feeding circuits and neural networks, where we translate neural networks into self-feeding circuits, is obtained by applying Theorems \ref{NN_to_BNL} and \ref{thrm:BNL_to_SF_circuit}.
\begin{corollary}\label{corollary:NN_to_SFC}
    For any general neural network $\cN$, we can construct an equivalent{} self-feeding circuit~$C_{\cN}$ with bounded fan-in and computation delay $\ordo(T_{\cN})$ either of size $\ordo(K_{\cN})$ or of size $\ordo(K_{\cN}^2)$. In the latter case, the depth of $C_{\cN}$ is $\ordo(\log (K_{\cN}))$.
\end{corollary}

The second corollary between self-feeding circuits and neural networks, where we translate self-feeding circuits into neural networks, is obtained by applying Theorems \ref{thrm:SF_circuit_to_BNL} and \ref{BNL_to_NN}. 
\begin{corollary}\label{corollary: SFC to NN}
    Let $\alpha$ be any regular activation function and 
    let $C$ be a self-feeding circuit with size $n$, depth $d$, fan-in $k$ and $m$ edges. We can construct an equivalent{} general neural network $\cN_C$ that uses $\alpha$ (at each node), where $\cN_C$ has $\ordo(n+m)$ nodes and computation delay $\ordo(dk)$. Moreover, if $C$ has bounded fan-in then $\cN_C$ has $\ordo(n)$ nodes and computation delay $\ordo(d)$.
\end{corollary}

It is easy to obtain similar corollaries that would extend the results on feedforward neural networks from Section \ref{sec: fnn} to $\SC$-programs and self-feeding circuits. In the case of $\SC$, this would mean defining feedforward $\SC$-programs, but a feedforward self-feeding circuit is simply an ordinary Boolean circuit.

As an interesting result we obtain the following theorem, which states that we can always translate a (possibly heterogeneous) neural network into an equivalent (homogeneous) neural network that uses any regular activation function $\alpha$ (at each node) with polynomial size blow-up.
In the case of feedforward neural networks, we can translate any feedforward neural network into an equivalent (synchronous) feedforward neural network.\footnote{Note that we have not defined the concept of computation delay in terms of feedforward neural networks. 
Naturally, if a given feedforward neural network has $n$ layers, then an equivalent feedforward neural network with computation delay $T$ has exactly $Tn$ layers.}
Again note that the equivalent neural networks are constructed w.r.t. any corresponding canonical similarity. 
Below, the notion of a parameter being \textbf{polynomial} (or resp. \textbf{polylogarithmic}) \textbf{in the size parameters of a neural network $\cN$} means that it is polynomial (resp. polylogarithmic) 
in the following parameters: the number of nodes, degree, number of pieces, order, fraction precision, exponent precision and base. 
\begin{theorem}\label{theorem:NN_to_NN}
    Let $\cN$ be a general neural network. For any regular activation function $\alpha$, we can translate $\cN$ to an equivalent general neural network $\cN'$ which uses $\alpha$ (at each node) and:
    \begin{itemize}
        \item the number of nodes in $\cN'$ is polynomial in the size parameters of $\cN$,
        \item the computation delay of $\cN'$ is polylogarithmic in the size parameters of $\cN$,
        \item if $\cN$ is feedforward, then $\cN'$ is feedforward (and even synchronous).
    \end{itemize}
\end{theorem}
\begin{proof}
    Given a general neural network for a floating-point format or normalized format $\fF$,
    we first translate $\cN$ into an equivalent{} $\BNL$-program $\Lambda$ (w.r.t. a canonical similarity $\sigma_1$ from $\fF$ to $\{0,1\}$),
    where the size of $\Lambda$ is polynomial in the size parameters of $\cN$ and the computation delay of $\Lambda$ is polylogarithmic in the size parameters of $\cN$ via Theorem \ref{NN_to_BNL}. Then, we apply Lemma \ref{lem:BL_to_circuit} (Theorem~4 from \cite{size-depth-tradeoff}) to each rule of $\Lambda$ to obtain a strongly equivalent $\BNL$-program $\Lambda'$, where for each rule of size $n$ in $\Lambda$, the new rule in $\Lambda'$ is of size quadratic in $n$ and depth $\ordo(\log(n))$; the size of the resulting program is thus quadratic in the size of $\Lambda$ (still polynomial in the size parameters of $\cN$) where the depth of each rule is logarithmic in the size of $\Lambda$ (thus polylogarithmic in the size parameters of $\cN$, because of standard laws of logarithms). Then we use Theorem \ref{BNL_to_NN} to obtain a neural network $\cN'$ for a floating-point format or normalized format $\fF'$ that is equivalent{} 
    to $\Lambda'$ (w.r.t. a canonical similarity~$\sigma_2$ from $\{0,1\}$ to $\fF'$), where the number of nodes in $\cN'$ is linear in the size of $\Lambda'$ (polynomial in the size parameters of $\cN$) and the computation delay of $\cN'$ is linear in the depth of $\Lambda'$ (polylogarithmic in the size parameters of $\cN$).
    Thus, $\cN'$ is also equivalent{} to $\cN$ (w.r.t. the canonical similarity $\sigma_2 \circ \sigma_1$ from~$\fF$ to $\fF'$), 
    where the computation delay of $\cN'$ is polylogarithmic in the size parameters of $\cN$.
    The case is similar for feedforward neural networks by applying Theorems \ref{theorem:FFNN_to_FFBNL} and \ref{theorem:FFBNL_to_FFNN}, and Lemma~\ref{lem:BL_to_circuit}.
\end{proof}
In particular, Theorem \ref{theorem:NN_to_NN} applies to linear activation functions of the form $y = kx$ by Axiomatization~\ref{A}. 
A folklore result states that recurrent neural networks with linear activation functions and reals instead of floats are essentially linear dynamical systems. The relationship between recurrent neural networks and linear dynamical systems is studied, for example, in \cite{valente2022probing}.
Furthermore, with respect to feedforward networks, a well-known folklore result \cite{Sharma2020ACTIVATIONFI} states that feedforward neural networks with linear activation functions that carry out computations with real numbers correspond to a linear regression model. 
Theorem \ref{theorem:NN_to_NN} demonstrates, in stark contrast to these folklore results, that at least with floating-point numbers any recurrent (resp. feedforward) neural network with non-linear activation functions can be translated into an equivalent recurrent (resp. feedforward) neural network with linear activation functions. 

Moreover, we obtain the following corollary in the scenario where the parameters of the floating-point format and activation functions are fixed.
\begin{corollary}\label{corollary:NN_to_NN_fixed}
    Let $\cN$ be a general neural network with $N$ nodes,
    $L$ edges
    and degree $\Delta$. Assuming that the parameters of floating-point formats and activation functions are fixed, for any regular activation function $\alpha$, we can translate $\cN$ to a general neural network $\cN'$ which uses $\alpha$ (at each node) and:
    \begin{itemize}
        \item $\cN'$ has 
        $\ordo((N + L)^2)$ 
        nodes (or 
        $\ordo((N + L)^4)$ 
        if $\cN$ is feedforward and we want $\cN'$ to be synchronous),
        \item if $\cN$ is not feedforward, then the computation delay of $\cN'$ is 
        $\ordo(\log(N + L)\log(\Delta))$.
        \item if $\cN$ is feedforward with $n$ layers, then $\cN'$ is feedforward with $\ordo(n \log(N + L)\log(\Delta))$ layers.
    \end{itemize}
\end{corollary}

As noted, restrictions of $\mathrm{ReLU}$ to \emph{any} floating-point format or normalized float format satisfy Case~1 of Axiomatization \ref{A}. 
Similarly, we can note that restrictions of the identity function to \emph{any} float format or normalized float format satisfy Case 2 of Axiomatization \ref{A}.
Thus, we obtain the following corollary.
Note that below when we refer to $\mathrm{ReLU}$ or the identity function, we refer to one which is restricted to the float format (or normalized float format) that is used by the constructed neural network.
\begin{corollary}\label{corollary:NN_to_simple_NN}
    We can translate a general neural network $\cN$ to a general neural network $\cN'$ which either uses $\mathrm{ReLU}$ (at each node) or the identity function (at each node), and:
    \begin{itemize}
        \item the number of nodes in $\cN'$ is polynomial in the size parameters of $\cN$,
        \item the computation delay of $\cN'$ is polylogarithmic in the size parameters of $\cN$,
        \item if $\cN$ is feedforward, then $\cN'$ is feedforward (and even synchronous).
    \end{itemize}
\end{corollary}

\section{Conclusion}

We have shown a strong equivalence between a general class of neural networks and Boolean network logic in terms of discrete time series. The translations are simple in both directions, with reasonable time and size blow-ups. 
Also, similar translations are obtained for feedforward neural networks via a feedforward-style fragment of Boolean network logic.
We receive similar results for the logic $\SC$ and self-feeding circuits. 
The link to self-feeding circuits is novel, since it allows us to apply circuit-based methods to reason about neural networks in the recurrent setting.
Our translations from Boolean network logic to neural networks is given w.r.t. an axiomatization that covers piecewise polynomial floating-point approximations of many generally used activation functions, including linear functions and even the identity function. The result for linear functions is in contrast to the folklore result that this necessarily makes the neural network a linear dynamical system, or in the case of feedforward neural networks, a linear regression model.
Interesting future directions involve investigating extensions with randomization, as well as studying the effects of using alternatives to floating-point numbers, such as, for example, fixed-point arithmetic.

\subsection*{Acknowledgments}
The list of authors on the first page is given based on the alphabetical order. 
Antti Kuusisto was supported by the Academy of Finland project Theory of computational logics, grant numbers 352419, 352420, 353027, 324435, 328987. Damian Heiman was supported by the same project, grant number 353027. Antti Kuusisto was also supported by the Academy of Finland consortium project Explaining AI via Logic (XAILOG), grant number 345612. Damian Heiman was also supported by the Magnus Ehrnrooth Foundation. Veeti Ahvonen was supported by the Vilho, Yrjö and Kalle Väisälä Foundation of the Finnish Academy of Science and Letters.

\bibliography{references}

\end{document}